\documentclass{article}
\addtolength{\textwidth}{4cm}
\addtolength{\hoffset}{-2cm}
\usepackage{amsfonts,color,xcolor,graphicx,framed,amsthm,amsmath,hyperref,mathtools,wrapfig}
\usepackage{amssymb,latexsym,amsmath,amsfonts,graphicx, ebezier,authblk,float}
\numberwithin{equation}{section}

\numberwithin{figure}{section}
\setlength{\intextsep}{0pt}

\def\iint{\int\!\!\!\!\int}
\def\d{{\rm d}}
\def\C{\ensuremath{\mathbb C}}

\newcommand{\be}{\begin{equation}}
\newcommand{\ee}{\end{equation}}
\newcommand{\bea}{\begin{eqnarray}}
\newcommand{\eea}{\end{eqnarray}}

\newcommand{\supp}{\operatorname{supp}}

\newtheorem{theorem}{Theorem}[section]
\newtheorem{definition}[theorem]{Definition}
\newtheorem{proposition}[theorem]{Proposition}
\newtheorem{lemma}[theorem]{Lemma}

\newtheorem{remark}[theorem]{Remark}
\renewenvironment{proof}[1][Proof]{\noindent {\bf #1.}\ }{\hfill {\bf Q.E.D.}}

\newcommand{\order}[1]{\ensuremath{{\mathcal O}\left(#1\right)}}
\renewcommand{\c}[1]{\ensuremath{\overline{#1}}}
\renewcommand{\Re}{\operatorname{Re}}
\renewcommand{\Im}{\operatorname{Im}}
\newcommand{\U}{{\mathcal U}}
\newcommand{\inte}{\operatornamewithlimits{Int}}
\newcommand{\exte}{\operatornamewithlimits{Ext}}

\newcommand{\poly}{\pi}
\newcommand{\lb}{\lambda}

\def\XXint#1#2#3{{\setbox0=\hbox{$#1{#2#3}{\int}$}
     \vcenter{\hbox{$#2#3$}}\kern-.5\wd0}}

\def\Zc{{\mathcal Z}}
\def\Nc{{\mathcal N}}
\def\Tr{\operatorname{Tr}}
\definecolor{shadecolor}{rgb}{0.95, 0.95, 0.86}
\newcommand{\semmi}[1]{}

\setcounter{figure}{0}
\newcommand{\tfigure}[4]{
\begin{figure}[H]
\centering
\includegraphics[#4]{#1}
\caption{#2}
\label{#3}
\end{figure}}

\mathtoolsset{showonlyrefs=true}

\title{Orthogonal polynomials for a class of measures with discrete rotational symmetries in the complex plane}
\author[1,2]{F. Balogh}
\author[1,3]{T. Grava}
\author[1]{D. Merzi}
\affil[1]{Scuola Internazionale Superiore di Studi Avanzati, Trieste}
\affil[2]{Concordia University, Montr\'eal}
\affil[3]{School of Mathematics, University of Bristol}
\date{ Correspondence to be sent to grava@sissa.it}
\begin{document}
\maketitle

\begin{abstract} We obtain  the strong asymptotics of polynomials $p_n(\lb)$, $\lb\in\mathbb{C}$, orthogonal 
  with respect to measures in the complex plane of the form
$$
e^{-N(|\lb|^{2s}-t\lb^s-\c{t}\c{\lb}^s)}dA(\lb),
$$
 where $s$ is a positive integer, $t$ is a complex parameter and $dA$ stands for the area measure in the plane.
Such problem has its origin from normal matrix models.  We  study the asymptotic  behaviour of $p_n(\lb)$ in the limit  $n,N\to\infty$ in such a way that $n/N\to T$ constant.
Such asymptotic behaviour has two distinguished regimes according to the  topology of the limiting support of the  eigenvalue distribution of the normal matrix model. If  $0<|t|^2<T/s$,  the eigenvalue distribution
 support is a simply connected compact set of the complex plane, while  for $|t|^2>T/s$ the eigenvalue distribution support consists of $s$ connected components. Correspondingly the support of the limiting zero distribution of the orthogonal polynomials consists of a closed contour contained in each connected component.
 
Our asymptotic analysis is obtained by reducing the planar orthogonality conditions  of the polynomials to  an equivalent system of contour integral orthogonality conditions. The  strong asymptotics for the orthogonal polynomials is obtained from the corresponding Riemann--Hilbert problem by  the Deift--Zhou nonlinear steepest descent method.

\end{abstract}
 \tableofcontents
 
 \section{Introduction}
 We study the asymptotics of orthogonal polynomials with respect to a family of measures
 supported on the whole complex plane. To set up the notation for the general case, let $p_n(\lb)$ denote the monic  orthogonal polynomials of degree $n$  such that \\
\be
 \label{eq:op_original}
 \int_{\C}p_{n}(\lb)\overline{p_m(\lb)} e^{-NW(\lb)}dA(\lb) = h_{n,N}\delta_{nm},\quad n,m=0,1,2,\dots
 \ee
 where $W:\mathbb{C}\to \mathbb{R}$ is called the external potential, $N$ is a positive parameter\footnote{For simplicity we use the simpler notation $p_n(\lb)$ instead of $p_{n,N}(\lb)$, even though the orthogonal polynomials depend on the value of the scaling parameter $N$.}, $dA(\lb)$ is the area measure in the complex plane and $h_{n,N}$ is the norming constant.
 The external potential is assumed to have sufficient growth at infinity so that the integrals in \eqref{eq:op_original} are bounded.

 Planar orthogonal polynomials satisfying \eqref{eq:op_original} appear naturally in the context of normal matrix models \cite{ChauZaboronsky} where one studies  probability distributions of the form
 \be
 \label{random_matrix_density}
 M \mapsto \frac{1}{\Zc_{n,N}}e^{-N\Tr(W(M))}dM, \qquad \Zc_{n,N}=\int_{\Nc_{n}}e^{-N\Tr(W(M))}dM,
 \ee
 where $\Nc_N$ is the algebraic variety of $n\times n$ normal matrices
 \be
 \Nc_n=\left\{M\ \colon\ [M, M^{\star}] =0 \right\}\subset \operatorname{Mat}_{n\times n}(\C),
 \ee
 and $dM$ is the volume form induced on ${\mathcal N}_n$ which is invariant under conjugation by unitary matrices. Since normal matrices are diagonalizable by unitary transformations,
 the probability density \eqref{random_matrix_density} can be reduced to the form \cite{Mehta}
 \[
 \dfrac{1}{Z_{n,N}} \prod_{i<j}|\lambda_i-\lambda_j|^2 e^{-N\sum_{j=1}^{n}\Tr{W(\lambda_i)}}dA(\lambda_1)\cdots dA(\lambda_n), 
\]
where $\lb_j$ are the complex eigenvalues of the normal matrix $M$ 
 and the normalizing factor $Z_{n,N}$, called partition function, is given by 
 \[
 Z_{n,N}=\int_{\mathbb{C}^n}\prod_{i<j}|\lambda_i-\lambda_j|^2 e^{-N\sum_{j=1}^{n}W(\lambda_i)}dA(\lambda_1)\cdots dA(\lambda_n). 
 \]
 The statistical quantities related to eigenvalues can be expressed in terms of  the orthogonal polynomials $p_n(\lb)$ defined in \eqref{eq:op_original}. In particular, the average density of eigenvalues is 
 \be
 \rho_{n,N}(\lb) = \frac{1}{n}e^{-NW(\lb)}\sum_{j=0}^{n-1}\frac{1}{h_{j}}\left|p_j(\lb)\right|^2.
 \ee
 The partition function $Z_{n,N}$ can be written as a product  of the normalizing constants
 \be
 Z_{n,N}= \prod_{j=0}^{n}h_{j,N}.
 \ee
While the asymptotic density of eigenvalues can be studied using an approach from potential theory \cite{SaffTotik}, the zero distribution of orthogonal polynomials remains an open issue for general potential weights despite  general results in \cite{StahlTotik}.
 The density of eigenvalues $\rho_{n,N}(\lambda)$ converges (in the sense of measures) in the limit
  \be
 \label{limit}
 n \to \infty\ ,\quad N\to \infty\ ,\quad \frac{N}{n} \to \frac{1}{T}\ ,
 \ee 
to the unique probability measure $\mu^*$ in the plane which minimizes the functional \cite{ElbauFelder, HedMak}
 \begin{equation}
 \label{def_I}
  I(\mu)=\int\!\!\!\!\int\log|\lambda-\eta|^{-1}d\mu(\lb)d\mu(\eta)+\frac{1}{T}\int W(\lb)d\mu(\lb).
 \end{equation}
 The functional $I(\mu)$ in \eqref{def_I} is the Coulomb energy functional in two dimensions and  the existence of a unique minimizer is a well-established fact under mild assumptions on the potential $W(\lb)$ \cite{SaffTotik}. If $W$ is twice continuously differentiable and its Laplacian $\Delta W$ is non-negative, the equilibrium measure is given by 
  \[
 d\mu^*(\lambda)=\Delta W(\lambda)\chi_{D}(\lambda)dA(\lambda),
 \]
 where $\chi_{D}$ is the characteristic function of the compact support set $D = \supp(\mu^{*})$. 
   Sub-leading order corrections to the behaviour  of the eigenvalues distribution  $ \rho_{n,N}(\lb)$ as $n,N\to\infty$  and fluctuations  in the bulk and at the boundary of the support $D$ have been considered in \cite{Berman, AmHeMa,LeeRiser,Mak1,Mak2}.

%
 The measure $\mu^*$ can be also uniquely characterized by the Euler--Lagrange conditions
 \be
 \label{E-L}
\dfrac{W(\lb)}{T}-2\int\log|\lambda-\eta|d\mu^*(\eta)+\ell_{D}\geq 0
\ee
for all values $\lb\in\mathbb{C}$ with equality in \eqref{E-L} on the support\footnote{To be precise, the equality on the support is valid only up to a set of capacity zero \cite{SaffTotik}.} of $\mu^*$.
The Lagrange multiplier $\ell_{D}$ is called the \emph{(generalized) Robin constant}.
 It is a  non-trivial problem to  determine the  shape of the support set $D$. In some cases this problem  is called {\it Laplacian growth}. When the potential $W(\lb)$ is real analytic the boundary $\partial D$ is a finite union of analytic arcs with at most a finite number of singularities \cite{Sakai}, see also \cite{HedMak}. 

 There is only a handful of potentials $W(\lb)$ for which the polynomials $p_n(\lb)$ can be explicitly computed.
 The simplest example is $W(\lb)=|\lb|^2$ for which the orthogonal polynomials $p_n(\lb)$ are monomials of degree $n$, the constants $h_{n,N}$ and the average density of eigenvalues 
 $\rho_{n,N}$ can be computed explicitly in terms of the Gamma function. The matrix model associated to this potential is known as the \emph{Ginibre ensemble} \cite{Ginibre,Girko} and the density of eigenvalues 
 $\rho_{n,N}$ converges 
  to the normalized area measure on the disk of radius $\sqrt{T}$ centered at the origin.
  In general, for radially symmetric potentials $W=W(|\lb|)$, the orthogonal polynomials are always monomials and in the limit \eqref{limit}, the eigenvalue distribution is supported  either on a disk or an annulus by  the \emph{Single-ring Theorem} of Feinberg and Zee \cite{FeinbergZee}, whose rigorous proof can be found in \cite{Guionnet}. The  correlation functions in this case have been studied in \cite{ChauZaboronsky}.  
  
The harmonic deformation of the Gaussian case
$W(\lb)=|\lb|^2+(\mbox{harmonic})$ has been intensively studied. In particular the  potential  $W(\lb)=  |\lb|^2- t(\lb^2+\bar{\lb^2})$ is associated to the Hermite polynomials for $|t| < 1/2$. In the limit \eqref{limit} the distribution of eigenvalues is the normalized area measure on an ellipse, while the distribution of the zeros of the orthogonal polynomials is given by the (rescaled) Wigner semicircle law with support between the two foci of the ellipse \cite{Difrancesco}.

The normal matrix model  with a general deformation $W(\lb)=|\lb|^2+\Re(P(\lb))$ where $P(\lb)$ is a polynomial of a fixed degree has first been considered in the seminal papers 
  \cite{MWZ} and \cite{WiegZab1} (see  also the review article \cite{Zab}), where the connection with the Hele--Shaw problem and integrable structure in conformal dynamic  has been pointed out. More general potentials  have been considered later in \cite{Zab13}.
   For such potentials, however, the matrix integrals have convergence issues in the complex plane and therefore a natural cut-off has been introduced in the work of Elbau and Felder \cite{ElbauFelder}. In \cite{Elbau} the polynomials associated to such deformed Gaussian potentials have been studied and it was argued that the Cauchy transform of the limiting zero distribution  of the orthogonal polynomials coincides with the  Cauchy transform of the   limiting eigenvalue distribution of the matrix model outside the support of the eigenvalues.
  Moreover, it was also conjectured in \cite{Elbau} that the zero distribution of the polynomials $p_n(\lb)$  is supported on tree-like segments (the mother body, see definition below) inside the compact set (the droplet) that attracts the eigenvalues of the normal matrix model.
  
  For the  external potential $W(\lb)=|\lb|^2+\Re(t\lb^3)$,  Bleher and Kuijlaars \cite{BleherKuij} defined polynomials orthogonal with respect to  a system of unbounded contours on the complex plane, without any cut-off and  which satisfy the same recurrence relation that is asymptotically valid for the orthogonal polynomials of Elbau and Felder. They then study the asymptotic distribution of the zeros of such polynomials confirming the predictions of \cite{Elbau}. Similar results were obtained for the more general external potential \cite{Kuij2} $ W(\lb)=|\lb|^2-\Re(t|\lb|^k)$, $k\geq 2$ and $|t|$ sufficiently small so that  the eigenvalue distribution of the matrix model has  an analytic simply connected support.  Cases in which the eigenvalue support has singularities were analyzed in \cite{KuiTov} and \cite{BBLM}. In particular in the work \cite{BBLM}
  the external potential $W(\lb)=|\lb|^2-2c\log|\lb-a|$ with $c$ and $a$ positive constants, has been studied and the strong asymptotics of the corresponding orthogonal polynomials has been derived both in the case in which the support of the eigenvalues distribution is simply connected (pre-critical case) or multiply connected (post-critical case) and critical transition was observed (see also \cite{Teodorescu, BH}). We remark that in the work \cite{BBLM},  differently form the previous works, the zeros of the orthogonal polynomials do not  always accumulate on a curve that corresponds to the mother-body  of the  domain where the eigenvalues of the normal matrix models are distributed. 
    
 In this work we study the strong asymptotic of the polynomials $p_n(\lb)$ orthogonal with respect to a density $e^{-NW(\lb)}$ where the external potential is of the form
 $
W(\lb)=|\lambda|^{2s}-t\lb^s-\bar t\bar{\lb}^{s} $ with $ \lb \in \C,$ $s$ a positive integer and $ t\in\mathbb{C}\setminus\{0\}.$
By a simple rotation of the variable $\lb$ the analysis can be reduced to the case of real and positive $t$.
Therefore, without loss of generality, we may and do assume that $t \in \mathbb{R}_{+}$, that is,
 \be
 \label{def_W}
W(\lb)=|\lambda|^{2s}-t(\lb^s+\bar{\lb}^{s}) \qquad \lb \in \C,\;s\in \mathbb{N},\quad t>0,
 \ee
 and the associated orthogonality measure is
 \be
 \label{eq:ortho_measure}
e^{-N\left(|\lambda|^{2s}-t(\lb^s+\bar{\lb}^{s})\right)}dA(\lambda) \qquad \lb \in \C,\;s\in \mathbb{N},\quad t>0.
 \ee
Note that the potential $W(\lb)$ has a discrete rotational $\mathbb{Z}_s$-symmetry. It was observed in  \cite{BM} (see also \cite{Etingof_Ma}) that  if a potential $W(\lambda)$ can be written in the form 
\[
W(\lambda)=\dfrac{1}{s}Q(\lambda^s),
\]
the equilibrium measure for $W$ can be obtained from the equilibrium measure of $Q$ by an unfolding procedure. In our particular case 
\[
Q(u)=s|u|^2-st (u+\bar{u})=s|u-t|^2-st^2
\]
corresponds to the Ginibre ensemble 
so that the equilibrium measure for the potential $Q$ is the normalized area measure
of  the disk
\[
|u-t|=t_c,\quad t_c=\sqrt{\dfrac{T}{s}},
\]
where $T$ has been defined in \eqref{limit}.
The equilibrium measure for $W$ turns out the be equal to 
 \begin{equation}
 \label{dmustar}
 d\mu^*(\lb)=\dfrac{s}{\pi t_c^2}|\lambda|^{2(s-1)}\chi_{D}(\lb)dA(\lb),
 \end{equation}
where $\chi_D$ is the characteristic function of the support set
 \begin{equation}
 \label{domainmu}
D:=\{\lb\in\mathbb{C},\,\,|\lambda^s-t|\leq  t_c\}.
\end{equation}

We observe that for $t<t_c$ the equation \eqref{domainmu} describes a 
simply connected domain in the complex plane with uniformizing  map from the exterior of the unit disk in the $\xi$-plane to the exterior of $D$  given by
\[
f(\xi)=t_c^{\frac{1}{s}}\xi\left(1+\frac{t}{t_{c}}\frac{1}{\xi^{s}}\right)^{\frac{1}{s}}
\]
with inverse
\begin{equation}
\label{def_F}
F(\lambda)=f^{-1}(\lb)= t_c^{-\frac{1}{s}}\lambda \left(1-\dfrac{t }{\lambda^s}\right)^{\frac{1}{s}}.
\end{equation}
For $t>t_c$  the domain defined by the equation \eqref{domainmu} consists of  $s$ connected components which have a discrete rotational symmetry. For $s=2$ the domain $D$ is called Cassini oval.
The boundary of $D$, namely $\partial D$ can be identified with the real ovals of the Riemann surface  ${\cal S}$  defined by ${\cal S}:=\{(\lb,\eta)\in\mathbb{C}^2,\;\; (\lambda^s-t)(\eta^s-t)=t_c^2\}$ which has genus
 $(s-1)^2$. Such Riemann surface   does not coincide with the Schottky double of $D$ for $s>2$.

The boundary of the domain $D$ can be also expressed by the equation
\begin{equation}
\label{Schwartz} 
\bar{\lb}=S(\lb),\quad S(\lb)=\left(t+\dfrac{t_c^2}{\lb^s-t}\right)^{\frac{1}{s}}.
\end{equation}
The function   $S(\lb)$  is analytic in a neighbourhood of $\partial D$ and it is called the \emph{Schwarz function} associated to $\partial D$ (see e.g. \cite{Davis}).

\begin{remark}
The domain $D$ is a quadrature domain \cite{Gustafsson} with respect to the measure $\d\mu^*$. Indeed for any function $h(\lambda)$ analytic in a neighbourhood of $D$ one has, by applying Stokes' theorem and the Residue Theorem,
\[
\int_{D}h(\lb)d\mu^*(\lb)=\dfrac{1}{2\pi i t_{c}^{2}}\int_{\partial D}h(\lb)\lb^{s-1}(S(\lb))^sd\lb=\sum_{k=0}^{s-1} c_k h(\lb_k)
\]
where $S(\lb)$ is the Schwarz function \eqref{Schwartz}, $c_k=\dfrac{1}{s}$  and $\lb_k=t^{\frac{1}{s}}e^{\frac{2 \pi i k}{s}}$.
\end{remark}

\subsection{Statement  of  results}
The goal of this manuscript is to determine pointwise asymptotics
of  the polynomials $p_n(\lambda)$ defined in \eqref{eq:op_original} orthogonal with respect to the weight \eqref{eq:ortho_measure}  in the two cases
\begin{itemize}
\item pre-critical: $t<t_c$;
\item post-critical: $t>t_c$.
\end{itemize}
The $\mathbb{Z}_s$-symmetry of the orthogonality measure \eqref{eq:ortho_measure} is inherited by the corresponding orthogonal polynomials.
 Indeed the non-trivial orthogonality relations are 
  \[
  \int_{\mathbb{C}}p_n(\lb)\bar{\lb}^{js+l}e^{-N W(\lb)}dA(\lambda)=0,\quad j=0,\dots,k-1,
  \]
  where  $k$ and $l$ are such that
   \be
 \label{eq_nd}
  n= ks+l, \qquad 0\leq l \leq s-1,
  \ee
  i.e., the $n$-th monic orthogonal polynomial satisfies the relation
 \be
 p_n(e^{\frac{2\pi i}{s}}\lb)=e^{\frac{2\pi i n}{s}}p_n(\lb).
 \ee
It follows that there exists a monic polynomial $q^{(l)}_k$ of degree $k$ such that
 \begin{equation}
 \label{qk}
 p_n(\lb) = \lambda^lq^{(l)}_k(\lb^s).\quad
 \end{equation}

 Therefore the sequence of orthogonal polynomials $\{p_{n}(\lb)\}_{n=0}^{\infty}$ can be split into $s$ subsequences labelled by the remainder $l \equiv n\mod s$, and the asymptotics along the different subsequences can be studied via the sequences of reduced polynomials
 \be
 \nonumber
\left\{q^{(l)}_k(u)\right\}_{k=0}^{\infty}\ ,\qquad l=0,1,\dots, s-1.
 \ee
 By a simple change of coordinates it is easy to see that the monic polynomials in the sequence $\{q^{(l)}_k\}_{k=0}^{\infty}$  are orthogonal with respect to the measure
 \be
 \label{eq_reduced}
 |u|^{-2\gamma} e^{-N(|u|^2 -tu-t\bar{u})}dA(u),\quad \gamma := \frac{s-l-1}{s} \in [0,1),
 \ee
 namely they satisfy the orthogonality  relations
 \be
 \label{eq:ort_reduced}
 \int_{\C}q_k^{(l)}(u)\bar{u}^{j}|u|^{-2\gamma}e^{-N\left(|u|^2-tu-t\bar{u}\right)}dA(u)=0, \qquad j=0,\dots, k-1.
 \ee
 As a result of this symmetry reduction, starting from the class of measures \eqref{eq:ortho_measure}, it is sufficient to consider the orthogonal polynomials with respect to the family of measures
 \eqref{eq_reduced}. It is clear from the above relation that for $l=s-1$ one has $\gamma=0$ and  the polynomials $q_k^{(s-1)}(u)$ are monomials in the variable $(u-t)$, that is,
 \[
 q_k^{(s-1)}(u)=(u-t)^k.
 \]
 It follows that the monic  polynomials $p_{ks+s-1}(\lb)$
have the form
\[
p_{ks+s-1}(\lb)=\lb^{s-1}(\lb^s-t)^k.
\] 
\begin{remark}
Observe that the weight in the  orthogonality relation \eqref{eq:ort_reduced} can be written in the form
\[
 |u|^{-2\gamma} e^{-N(|u|^2 -tu-t\bar{u})}=e^{-N(|u-t|^2-t^2+2\frac{\gamma}{N}\log|u|)},
\]
and it is  similar to  the weight $e^{-NW(u)}$ with $W(u)=|u|^2-c\log|u-a|$ with $c>0$  studied in \cite{BBLM}. However  in our case it turns out that  $c=-2\gamma/N<0$, so the point interaction near   $a=0$ is repulsive and the asymptotic distribution of the zeros of the  polynomials  \eqref{eq:ort_reduced} turns out to be  substantially different  from the one in \cite{BBLM}.\end{remark}

Define
\be
z_0=\dfrac{t^2_c}{t^2},
\ee
and for $r>0$ the function
\begin{equation}
\label{hat_phi_r}
\hat{\phi}_r(\lb)=\log(t-\lb^s)+\dfrac{\lb^s}{tz_0}-\log rt+\dfrac{r-1}{z_0}.
\end{equation}
Let us consider the level curve $\hat{\mathcal{C}}_r$
\begin{equation}
\label{Gamma_r}
\hat{\mathcal{C}}_{r}:=\left\{\lb\in\mathbb{C},\ \Re\,\hat{\phi}_r(\lb)=0,\;\;|\lb^s-t|\leq rt\right\}.
\end{equation}
The level curves $\hat{\mathcal{C}}_r$ consist of  $s$ closed contours contained in the set $ D$, where $D$ has been defined in \eqref{domainmu}. For these curves we consider the usual counter-clockwise orientation. Define the measure $\hat{\nu}$ associated with this family of curves given by
\begin{equation}
\label{nur}
d\hat{\nu}=\dfrac{1}{2\pi i s}d \hat{\phi}_r(\lambda),
\end{equation}
and supported on $\hat{\mathcal{C}}_r$.
\begin{lemma}
\label{lemma1}
The a-priori complex measure $d\hat{\nu}$ in \eqref{nur}  is a probability measure on the contour $\hat{\mathcal{C}}_r$ defined in \eqref{Gamma_r} for $0<r\leq \frac{t}{t_c}$.
\end{lemma}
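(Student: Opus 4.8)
\emph{Proof plan.} The plan is to use the $\mathbb{Z}_s$-symmetry to move the question to a single curve in the variable $u=\lambda^s$. Since $\hat{\phi}_r(\lambda)=\phi_r(\lambda^s)$ with
\[
\phi_r(u)=\log(t-u)+\frac{u}{t z_0}-\log(rt)+\frac{r-1}{z_0},
\]
the set $\hat{\mathcal{C}}_r$ is the preimage under $\lambda\mapsto\lambda^s$ of
$\mathcal{C}_r:=\{u:\ \Re\,\phi_r(u)=0,\ |u-t|\le rt\}$, and (as already noted in the text) it breaks into $s$ contours, each mapped bijectively and orientation-preservingly onto $\mathcal{C}_r$ since $\lambda\mapsto\lambda^s$ is conformal off the origin. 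Using $d\hat{\phi}_r(\lambda)=d\bigl(\phi_r(\lambda^s)\bigr)$ and the change of variable $u=\lambda^s$ on each piece, I would reduce the total mass to
\[
\int_{\hat{\mathcal{C}}_r}d\hat{\nu}
=\frac{1}{2\pi i}\oint_{\mathcal{C}_r}d\phi_r
=\frac{1}{2\pi i}\oint_{\mathcal{C}_r}\Bigl(\frac{1}{u-t}+\frac{1}{t z_0}\Bigr)du ,
\]
and the positivity of $d\hat{\nu}$ to the positivity, on $\mathcal{C}_r$, of $d\nu:=\frac{1}{2\pi i}\,d\phi_r$. So the lemma reduces to showing that $\mathcal{C}_r$ is a positively oriented Jordan curve encircling $u=t$ once and that $d\nu\ge0$ on it.

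The core step is to describe $\mathcal{C}_r$ as the zero level set of the harmonic function
$h(u):=\Re\,\phi_r(u)=\log|t-u|+\frac{\Re u}{t z_0}-\log(rt)+\frac{r-1}{z_0}$ on $\mathbb{C}\setminus\{t\}$. The relevant elementary facts are: $h(u)\to-\infty$ as $u\to t$; the only critical point of $\phi_r$ is $u_*=t(1-z_0)$, with $h(u_*)=\frac{r}{z_0}-1-\log\frac{r}{z_0}\ge0$; and on the circle $u=t+rt\,e^{i\theta}$ one has $h(u)=\frac{r}{z_0}\,(1+\cos\theta)\ge0$, vanishing only at $\theta=\pi$. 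Feeding these into the minimum principle for harmonic functions, I would conclude that $\{h<0\}$ meets the disk $\{|u-t|<rt\}$ in a single simply connected domain $U$ containing $t$ (relatively compact in the disk apart from at most the touching point $t(1-r)$), that $\partial U=\mathcal{C}_r$ carries no critical point of $\phi_r$, and that $\mathcal{C}_r$ is a Jordan curve winding once counter-clockwise around $u=t$. This is precisely where the hypothesis $0<r\le t/t_c$ is used: it controls the position of $u_*$ with respect to the disk $\{|u-t|\le rt\}$ --- outside it when $z_0>1$ (pre-critical) and inside it but at a strictly positive level of $h$ when $z_0<1$ with $r\ne z_0$ (post-critical) --- and keeps $\mathcal{C}_r$ inside the droplet $D$; the borderline value $r=z_0$, when it lies in range, is absorbed by continuity in $r$.

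With this picture in hand, the rest is formal. The total mass equals the contour integral above of the $1$-form $\bigl(\frac{1}{u-t}+\frac{1}{t z_0}\bigr)du$, which has a single simple pole at $u=t$ with residue $1$, so by the residue theorem it equals $1$. For positivity, note that $h\equiv0$ on $\mathcal{C}_r$, hence $\phi_r$ is purely imaginary there and $d\nu=\frac{1}{2\pi i}\,d\phi_r=\frac{1}{2\pi}\,d(\Im\phi_r)$; since $h<0$ inside $U$ and $h>0$ just outside, $\nabla h$ is the outward normal along $\mathcal{C}_r$, and the Cauchy--Riemann equations identify $\nabla(\Im\phi_r)$ with $\nabla h$ rotated by $+\pi/2$, i.e.\ with the counter-clockwise tangent. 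Thus $\Im\phi_r$ increases along $\mathcal{C}_r$ in the counter-clockwise sense, $d\nu\ge0$, and $d\hat{\nu}$ is a nonnegative measure of total mass $1$ --- a probability measure on $\hat{\mathcal{C}}_r$. The only genuine difficulty I foresee is the middle step: promoting the three sign computations for $h$ to the clean topological statement about $\mathcal{C}_r$ and checking it holds throughout $0<r\le t/t_c$ in both regimes.
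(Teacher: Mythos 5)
Your proof is correct but takes a genuinely different route from the paper for the positivity step. Both reduce to $s$ copies of a single contour $\mathcal{C}_r$ via the $s$-th root map (you in $u=\lambda^s$, the paper in $z=1-\lambda^s/t$; these differ by an affine change) and compute the total mass $1$ by the Residue Theorem. For positivity of $d\nu=\frac{1}{2\pi i}d\phi_r$, the paper's Lemma~\ref{lemma1b} introduces $\psi_r=e^{\phi_r}$, observes that it sends $\mathcal{C}_r$ to the unit circle, cites Szeg\H{o} for univalence of $\psi_r$ from the interior of $\mathcal{C}_r$ onto the unit disk, and reads off that $d\nu=\frac{1}{2\pi i}\frac{d\psi_r}{\psi_r}$ is the pullback of uniform measure on the circle. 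You instead study the harmonic function $h=\Re\phi_r$ directly: from $h\to-\infty$ at $u=t$, $h\ge0$ on $|u-t|=rt$, and $h(u_*)\ge0$ at the unique critical point $u_*=t(1-z_0)$, you deduce that $\mathcal{C}_r=\partial\{h<0\}$ is a Jordan curve around $t$; then the Cauchy--Riemann identity, $\nabla(\Im\phi_r)$ equal to the $+\pi/2$-rotation of $\nabla h$, gives that $\Im\phi_r$ is strictly increasing along $\mathcal{C}_r$ traversed counter-clockwise, so $d\nu=\frac{1}{2\pi}d(\Im\phi_r)\ge0$. The two arguments are equivalent in content --- univalence of $e^{\phi_r}$ on the interior is exactly strict monotonicity of $\Im\phi_r$ on the bounding level curve --- but yours is more self-contained and also makes explicit the winding-number fact ($\mathcal{C}_r$ encircles $u=t$ once) on which the paper's residue computation implicitly relies. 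The cost is the topological bookkeeping you already flag, which does go through: for $r\ne z_0$ the critical point $u_*$ is off $\mathcal{C}_r$, so the zero level set of $h$ is smooth and separates a neighborhood of $t$ from the bounding circle; the degenerate $r=z_0$, where $u_*$ lands on $\mathcal{C}_r$ and the curve acquires a corner, is handled by your continuity remark. So the proposal is sound; the acknowledged gap is a matter of careful writing, not a missing idea.
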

Let us denote by $\nu(p_n)$ the zero counting measure associated with the polynomials $p_n$, namely
\[
\nu(p_{n})=\dfrac{1}{n}\sum_{p_{n}(\lb)=0}\delta_{\lb}, 
\]
where $\delta_{\lb} $ is point distribution with total mass one at the point $\lambda$.
\begin{theorem}
\label{theorem1}
The zeros of the polynomials $p_n(\lb)$ defined in \eqref{eq:op_original}  behave as follows
\begin{itemize}
\item for $n=s k+s-1$,  let $\omega=e^{\frac{2\pi i}{s}}$. 
Then $t^{\frac{1}{s}},\omega t^{\frac{1}{s}},\dots,\omega^{k-1} t^{\frac{1}{s}}$ are zeros of the polynomials $p_{ks+s-1}$ with multipicity $k$ and $\lb=0$ is a zero with multiplicity $s-1$.
\item for $n=ks+l$, $l=0,\dots,s-2$ the polynomial $p_n(\lb)$ has a zero in $\lb=0$ with multiplicity $l$ and the remaining zeros in the limit $n,N\to\infty$ such that
\begin{equation}
\label{def_N}
N=\dfrac{n-l}{T}
\end{equation}    accumulates on the level  curves  $\hat{\mathcal{C}}_r$ as in \eqref{Gamma_r} with $r=1$ for $t<t_c$ and $r=\frac{t_c^2}{t^2}$ for $t>t_c$. Namely  the curve $\hat{\mathcal{C}}$ on which the zeros accumulate is given by 
\begin{equation}
\label{Gamma}
\hat{\mathcal{C}}:\quad \left| (t-\lb^s)\exp\left(\frac{\lb^st}{t_{c}^2}\right)\right|=
\left\{\begin{array}{ll}
t&\quad\mbox{pre-critical case $0<t<t_{c}$},\\&\\
\dfrac{t_{c}^2}{t}e^{\frac{t^2}{t_c^2}-1}&\quad  \mbox{post-critical case $t>t_{c}$},
\end{array}\right.
\end{equation}
with $|\lb^s-t|\leq z_0t$.
The measure $\hat{\nu}$ in \eqref{nur} is the weak-star limit of the normalized zero counting measure  $\nu_n$ of the polynomials $p_n$  for $n=sk+l$, $l=0,\dots, s-2$.
\end{itemize}
\end{theorem}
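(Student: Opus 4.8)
The plan is to deduce Theorem~\ref{theorem1} from the strong pointwise asymptotics of the reduced polynomials $q_k^{(l)}$, which one produces by converting the planar orthogonality \eqref{eq:ort_reduced} into an equivalent finite system of contour-integral orthogonality conditions, encoding these in a $2\times2$ Riemann--Hilbert problem, and carrying out the Deift--Zhou nonlinear steepest descent. First I would record the algebraic reductions already prepared above: by the $\mathbb{Z}_s$-symmetry $p_n(\lambda)=\lambda^l q_k^{(l)}(\lambda^s)$ with $n=ks+l$ (see \eqref{qk}). When $l=s-1$ the index is $\gamma=0$ and \eqref{eq_reduced} is, up to a positive constant, the Gaussian weight $e^{-N|u-t|^2}$ centred at $t$, so $q_k^{(s-1)}(u)=(u-t)^k$ by translation invariance; hence $p_{ks+s-1}(\lambda)=\lambda^{s-1}(\lambda^s-t)^k$, whose zero set is $\lambda=0$ with multiplicity $s-1$ together with the $s$-th roots of $t$ each with multiplicity $k$, which is the first bullet. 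For $l=0,\dots,s-2$ the prefactor $\lambda^l$ already gives a zero at the origin of multiplicity at least $l$, and both the exact multiplicity $l$ and the location of the remaining zeros reduce to analysing the zeros of $q_k^{(l)}$, with $k=(n-l)/s=Nt_c^2\to\infty$ in view of \eqref{def_N}.

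For the asymptotics of $q_k^{(l)}$ I would organise the steepest descent around the candidate equilibrium measure $\tfrac1{2\pi i}\,d\psi_r$, where $\psi_r(u):=\hat\phi_r(\lambda)\big|_{u=\lambda^s}=\log(t-u)+\tfrac{u}{tz_0}-\log rt+\tfrac{r-1}{z_0}$ (cf.\ \eqref{hat_phi_r}), supported on the single Jordan loop $\hat{\mathcal{C}}^{(u)}_r:=\{u:\ \Re\psi_r(u)=0,\ |u-t|\le rt\}$ encircling $u=t$, which is the image of $\hat{\mathcal{C}}_r$ in \eqref{Gamma_r} under $\lambda\mapsto\lambda^s$. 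The parameter $r$ is not free: it is pinned by the requirement that $\hat{\mathcal{C}}^{(u)}_r$ be an admissible $S$-curve carrying this measure --- the critical trajectory of the associated quadratic differential, with the correct one-sided sign of $\Re\psi_r$ and with density a genuine positive measure --- which forces $r=1$ when $t<t_c$ (the regime in which the loop reaches the algebraic singularity $u=0$ of the weight) and $r=t_c^2/t^2=z_0$ when $t>t_c$ (the regime in which the loop lies strictly inside $|u-t|\le t_c$ and touches the boundary of the disk $|u-t|\le z_0 t$ at the zero $u_*=t(1-z_0)$ of $\psi_r'$). Opening lenses along $\hat{\mathcal{C}}^{(u)}_r$, installing a global parametrix solving the model problem on the loop together with the one local parametrix needed at the distinguished point of $\hat{\mathcal{C}}^{(u)}_r$ --- an Airy parametrix at $u_*$ in the post-critical case, a Bessel/Fisher--Hartwig-type parametrix at $u=0$ in the pre-critical case --- and undoing the transformations, one arrives at a uniform two-scale description: off $\hat{\mathcal{C}}^{(u)}_r$ a single exponential dominates and $q_k^{(l)}(u)=e^{kG(u)}(M(u)+o(1))$ with $M$ analytic and non-vanishing, while in a fixed neighbourhood of $\hat{\mathcal{C}}^{(u)}_r$ one has $q_k^{(l)}(u)=e^{k\Psi_1(u)}(A_1(u)+o(1))+e^{k\Psi_2(u)}(A_2(u)+o(1))$ with $A_1,A_2$ non-vanishing and $\Re(\Psi_1-\Psi_2)$ a positive constant multiple of $\Re\psi_r$, so that the two terms have equal modulus precisely on $\hat{\mathcal{C}}^{(u)}_r$ and oscillate along it.

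From this a Hurwitz/argument-principle count gives the zeros of $q_k^{(l)}$: for $k$ large there are none outside a shrinking neighbourhood of $\hat{\mathcal{C}}^{(u)}_r$, while in that neighbourhood the zeros are the approximate roots of $e^{k(\Psi_1-\Psi_2)(u)}A_1(u)=-A_2(u)$, hence $k+O(1)$ of them strung along the loop with the phase $\tfrac{k}{2\pi i}\psi_r$ advancing by one unit between consecutive zeros, so that $\tfrac1k\sum_{q_k^{(l)}(u_0)=0}\delta_{u_0}\rightharpoonup\tfrac1{2\pi i}\,d\psi_r$; in particular $q_k^{(l)}(0)\neq0$ (for $t>t_c$ because $0\notin\hat{\mathcal{C}}^{(u)}_r$, for $t<t_c$ from the local parametrix at the origin, with the finitely many small $k$ checked directly), so the origin is a zero of $p_n$ of multiplicity exactly $l$. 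Transferring through $u=\lambda^s$: the non-origin zeros of $p_n$ are the $sk$ $s$-th roots of the zeros of $q_k^{(l)}$, they accumulate on the $s$-fold lift $\hat{\mathcal{C}}_r$ of $\hat{\mathcal{C}}^{(u)}_r$, and the pushforward of $\tfrac1{2\pi i}\,d\psi_r$ along the $s$-to-$1$ covering is exactly $\hat\nu=\tfrac1{2\pi i s}\,d\hat\phi_r$ of \eqref{nur}, a probability measure by Lemma~\ref{lemma1}; together with $\tfrac{sk}{n}=\tfrac{sk}{sk+l}\to1$ this yields $\nu_n\rightharpoonup\hat\nu$. Writing out $\Re\psi_r\equiv0$ with the selected $r$ then turns $\hat{\mathcal{C}}_r$ into the explicit curve \eqref{Gamma} with the constants $t$ and $\tfrac{t_c^2}{t}e^{t^2/t_c^2-1}$ in the two regimes.

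The step I expect to be the main obstacle is the $g$-function/$S$-curve analysis: proving that $\{\Re\hat\phi_r=0\}\cap\{|\lambda^s-t|\le rt\}$, with the $r$ selected above, really is a disjoint union of $s$ analytic Jordan curves inside $D$ across which $\Re\hat\phi_r$ has the sign pattern the lens opening demands and on which $\hat\nu$ is a positive measure (the content of Lemma~\ref{lemma1}), since this is precisely where the pre-critical/post-critical dichotomy is forced. The genuinely delicate sub-case is the pre-critical one, where the accumulation curve is pinned at the weight singularity $u=0$ --- so that in the $\lambda$-plane all $s$ components of the zero curve pass through the origin --- and one must build and match a non-standard local parametrix there while keeping control of the $\gamma$-dependent corrections and still showing $q_k^{(l)}(0)\neq0$; the post-critical case, by contrast, needs only a standard Airy parametrix at the turning point $u_*=t(1-z_0)$ of $\psi_r'$, which merges with $u=0$ precisely at the critical value $t=t_c$.
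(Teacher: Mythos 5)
Your overall skeleton matches the paper's: fold out the $\mathbb{Z}_s$-symmetry to reduce to the $q_k^{(l)}$, replace the planar orthogonality by a contour condition, run Deift--Zhou on the resulting Fokas--Its--Kitaev problem, and then read off the zero distribution from the strong asymptotics (the paper's Theorems~\ref{theorem2}--\ref{theorem3} and Propositions~\ref{propo_zeros_pre}, \ref{propo_zeros_post}). The first bullet of the theorem is handled exactly as you describe, and the selection of $r=1$ (pre-critical) versus $r=z_0$ (post-critical) is also correct in substance. Your argument for the weak-star limit is different from the paper's --- you propose a direct phase/Hurwitz count of zeros along the loop, whereas the paper shows $\tfrac1k\log\pi_k(z)\to\int\log(z-\xi)\,d\nu(\xi)$ uniformly on compacts exterior to $\mathcal{C}$ and then invokes the Mhaskar--Saff criterion (the curve is the boundary of its polynomial convex hull) --- but both routes are legitimate.

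The genuine error is in the identification of the local parametrices, and it is the post-critical case where it matters. You propose an Airy parametrix at the turning point $u_*=t(1-z_0)$ (equivalently $z=z_0$). But $\phi_{z_0}(z)=\log(z/z_0)-z/z_0+1$ satisfies \emph{both} $\phi_{z_0}(z_0)=0$ and $\phi_{z_0}'(z_0)=0$: the phase has a \emph{double} zero at $z_0$, so the local conformal coordinate is $w$ with $w^2=-\phi$ and the relevant exponential is $e^{-\xi^2/2}$, a quadratic phase --- not the $\tfrac23\xi^{3/2}$ that would call for Airy. Moreover the arc $\mathcal{C}$, the auxiliary arc $\mathcal{C}_o$, the two lens boundaries, and the Fisher--Hartwig cut $(0,1)$ all meet at $z_0$, and the resulting four-ray model problem with jumps carrying the phase $e^{\pm 2\pi i\gamma}$ is precisely the parabolic cylinder model, as the paper constructs explicitly in terms of $\mathcal{U}(a,\xi)$. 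An Airy parametrix would fail to match both the jump structure and the decay of the outer parametrix on $\partial\mathcal{U}_{z_0}$. You also have the relative difficulty backwards: the pre-critical local problem at $z=1$ (i.e.\ $u=0$, $\lambda=0$) is \emph{not} a Bessel problem but an abelian, strictly upper-triangular Riemann--Hilbert problem solved in closed form by a single Sokhotski--Plemelj/Cauchy transform of $\zeta^\gamma e^\zeta$ on $\mathbb{R}^-$ (an incomplete-Gamma type object), and is elementary; the post-critical parabolic cylinder problem is the harder one and even requires an ``improvement of the parametrix'' step because the leading matching term is not uniformly small in $k$ for all $\gamma\in(0,1)$.

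Finally, a small point you should tighten in the pre-critical case: you assert $q_k^{(l)}(0)\neq0$ ``from the local parametrix at the origin'', but since in that regime the accumulation curve $\hat{\mathcal{C}}$ actually passes through $\lambda=0$ (all $s$ lobes are pinned there), this is exactly the place where zeros could a priori collide with the origin; it is the explicit form of $\tilde\Psi_{12}$ and the nonvanishing of the leading coefficient (a value of $1/\Gamma(-\gamma)$) that rules this out, not a generic non-accumulation statement. Once the local parametrices are corrected, your plan is essentially the paper's.
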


\begin{figure}[H]
\centering
\includegraphics[scale=0.4]{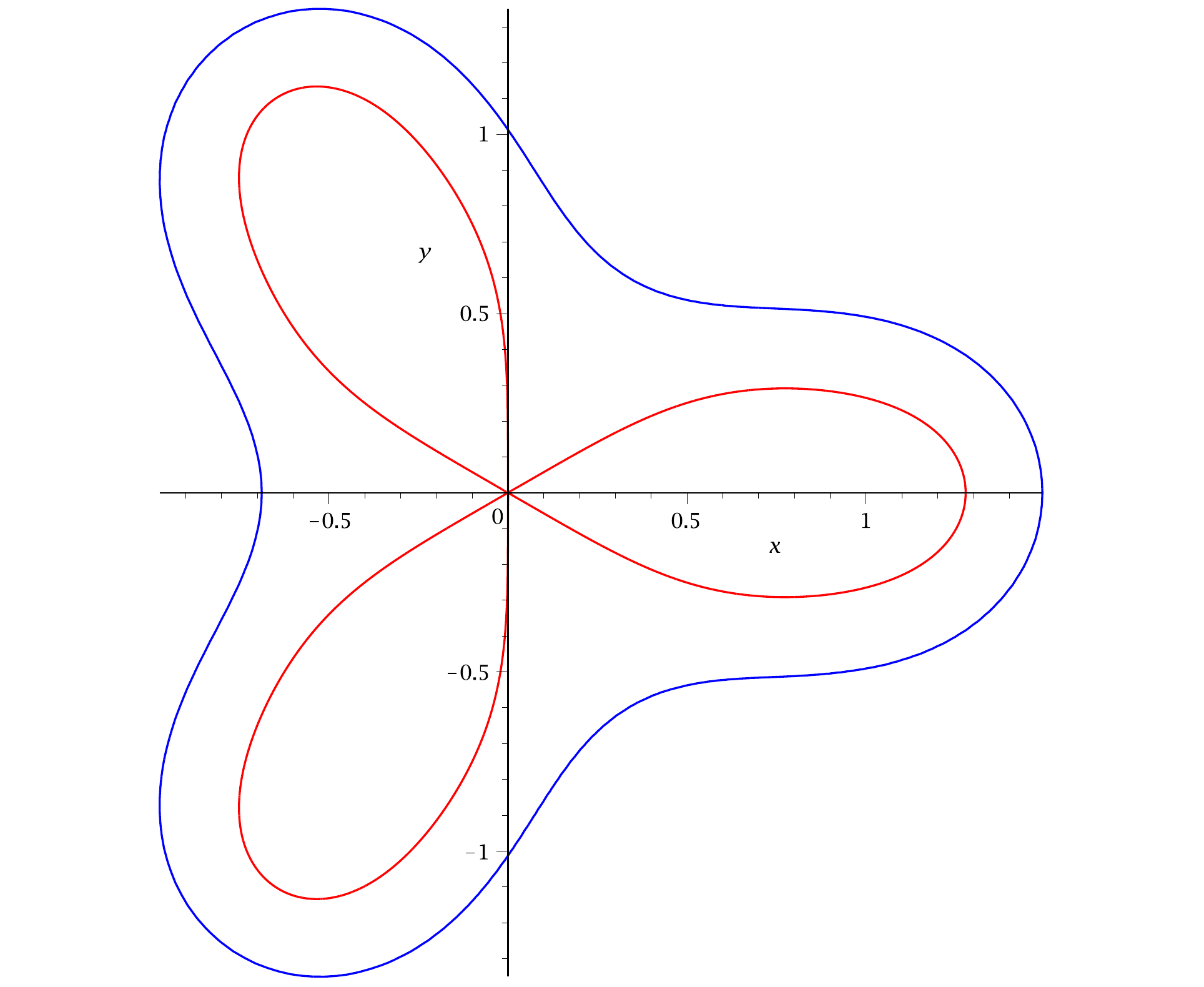}\includegraphics[scale=0.4]{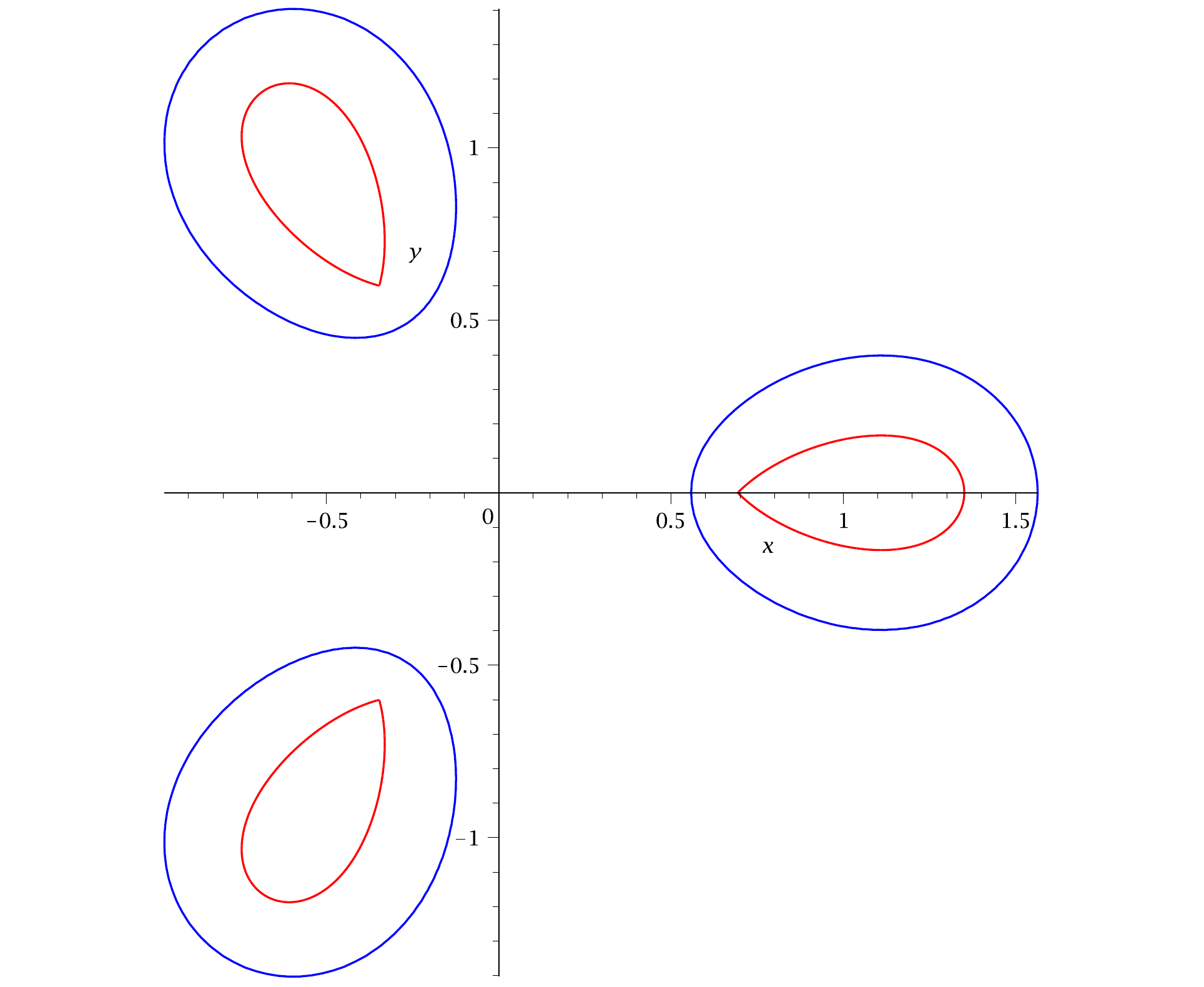}\\
\caption{The blue contour is the boundary of the eigenvalue distribution support $D$ defined in \protect\eqref{domainmu} relative to the normal matrix model and  the red  contour is $\hat{\mathcal{C}}$ defined in \protect\eqref{Gamma} which  describes the support of the limiting zero distribution of the polynomials $p_n(\lb)$. Here $s=3$ and $t<t_c$ (left figure) and $t>t_c$ (right figure).
}
\label{domain_D}
\end{figure}
\begin{remark}
We observe that  the curve \eqref{Gamma} in the rescaled variable $z=1-\lambda^s/t$  takes the form
\begin{equation}
\label{Gamma0}
\mathcal{C}:\quad \left| z e^{\frac{1-z}{z_0}}\right|=
\left\{\begin{array}{ll}
1&\quad\mbox{pre-critical case $z_0>1$},\\&\\
z_0e^{\frac{1}{z_0}-1}&\quad  \mbox{post-critical case $0<z_0<1$}.
\end{array}\right.
\end{equation}
with $z_0=\dfrac{t_c^2}{t^2}$ and $|z|\leq z_0$. The curve $\mathcal{C}$ 
 is similar to the  Szeg\H o curve  $\{z\in\mathbb{C}\ \colon\ |ze^{1-z}|=1,\;\;|z|\leq 1\}$ that was first observed in relation to the zeros of the Taylor polynomials of the exponential function \cite{Szego}, 
 and  coincides exactly with such  curve in the critical case $z_0=1$. The  Szeg\"o curve 
  also appeared in the asymptotic analysis of the generalized Laguerre polynomials, see e.g.~\cite{MMO},\cite{Chen},\cite{KuijMc}. 
  The curve  \eqref{Gamma0} is the limiting curve for the  zeros of the polynomials
\be
\label{def:pi_k0}
\poly_k(z) := \frac{(-1)^k}{t^k}q^{(l)}_k(-t(z-1))
\ee
where $q^{(l)}_k(u)$ have been defined in \eqref{qk}.
\end{remark}
From   Theorem~\ref{theorem1}  and \eqref{E-L}, the following identity follows immediately.
\begin{lemma}
\label{lemma_balayage}
Given the potential $W(\lb)$  in \eqref{def_W},  the measure $ \hat{\nu}$ in \eqref{nur} with $r=1$ for $t<t_c$ and $r=\frac{t_c^2}{t^2}$ for $t>t_c$, and the contour $\hat{\mathcal{C}}$ defined in \eqref{Gamma},  the equation
\begin{equation}
\label{balayage00}
\dfrac{1}{T}\dfrac{\partial}{\partial \lb}W(\lb)=\int_{\hat{\mathcal{C}}}\dfrac{d\hat{\nu}(\eta)}{\lambda-\eta},\quad \lb\notin\hat{\mathcal{C}},
\end{equation}
defines the boundary of a domain which coincides with $D$ defined in \eqref{domainmu}.
The measure $\mu^*$ in \eqref{dmustar} of the eigenvalue distribution of the normal matrix model and the measure $\hat{\nu}$  of  the zero distribution of the orthogonal polynomials are related by
\begin{equation}
\label{balayage1}
\int_{D}\dfrac{d\mu^*(\eta)}{\lambda-\eta}=\int_{\hat{\mathcal{C}}}\dfrac{d\hat{\nu}(\eta)}{\lambda-\eta},\quad \lb\in\mathbb{C}\backslash D.
\end{equation}

\end{lemma}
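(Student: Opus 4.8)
The plan is to evaluate both sides of \eqref{balayage1} in closed form as rational functions and then read off \eqref{balayage00} as a by-product; the whole argument reduces to two residue computations, once the shape of $\hat{\mathcal{C}}$ is borrowed from the analysis behind Lemma~\ref{lemma1} and Theorem~\ref{theorem1}.

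First I would compute the Cauchy transform of $\mu^*$ outside $D$. For fixed $\lambda\in\mathbb{C}\setminus D$ the map $\eta\mapsto(\lambda-\eta)^{-1}$ is holomorphic in a neighbourhood of $D$, so the quadrature identity in the Remark following \eqref{Schwartz} applies with $h(\eta)=(\lambda-\eta)^{-1}$ and gives $\int_{D}\frac{d\mu^*(\eta)}{\lambda-\eta}=\sum_{k=0}^{s-1}\frac{1/s}{\lambda-t^{1/s}\omega^{k}}=\frac{\lambda^{s-1}}{\lambda^{s}-t}$, using the elementary identity $\sum_{k=0}^{s-1}(\lambda-a\omega^{k})^{-1}=s\lambda^{s-1}/(\lambda^{s}-a^{s})$ with $a=t^{1/s}$. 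In particular this Cauchy transform extends to a rational function with simple poles exactly at the quadrature nodes $t^{1/s}\omega^{k}$.

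Next I would compute $\int_{\hat{\mathcal{C}}}\frac{d\hat\nu(\eta)}{\lambda-\eta}$. Differentiating \eqref{hat_phi_r} and using $tz_{0}=t_{c}^{2}/t$, the density in \eqref{nur} is the rational differential $d\hat\nu=\frac{1}{2\pi i}\eta^{s-1}\bigl(\frac{1}{\eta^{s}-t}+\frac{t}{t_{c}^{2}}\bigr)d\eta$, whose only finite poles are again the points $t^{1/s}\omega^{k}$. By the description of $\hat{\mathcal{C}}_{r}$ underlying Lemma~\ref{lemma1} and Theorem~\ref{theorem1} (with $r=1$ in the pre-critical case and $r=z_{0}$ in the post-critical case), $\hat{\mathcal{C}}$ is a union of $s$ disjoint positively oriented simple loops, the $k$-th lying in $\operatorname{int}D$ and encircling precisely the node $t^{1/s}\omega^{k}$; in particular $\hat{\mathcal{C}}\subset\operatorname{int}D$, so $\mathbb{C}\setminus D$ is contained in the unbounded component of $\mathbb{C}\setminus\hat{\mathcal{C}}$. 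For $\lambda$ in that component the residue theorem (the $t/t_{c}^{2}$ term and the pole at $\eta=\lambda$ contribute nothing) gives $\int_{\hat{\mathcal{C}}}\frac{d\hat\nu(\eta)}{\lambda-\eta}=\sum_{k=0}^{s-1}\operatorname{Res}_{\eta=t^{1/s}\omega^{k}}\frac{\eta^{s-1}}{(\lambda-\eta)(\eta^{s}-t)}=\frac{\lambda^{s-1}}{\lambda^{s}-t}$. Comparing with the previous step yields $\int_{D}\frac{d\mu^*(\eta)}{\lambda-\eta}=\int_{\hat{\mathcal{C}}}\frac{d\hat\nu(\eta)}{\lambda-\eta}$ on $\mathbb{C}\setminus D$, i.e.\ \eqref{balayage1}.

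For \eqref{balayage00} I would substitute $\partial_{\lambda}W(\lambda)=s\lambda^{s-1}(\bar\lambda^{s}-t)$ and the value $\int_{\hat{\mathcal{C}}}\frac{d\hat\nu(\eta)}{\lambda-\eta}=\frac{\lambda^{s-1}}{\lambda^{s}-t}$ just obtained, which is valid on the whole unbounded component of $\mathbb{C}\setminus\hat{\mathcal{C}}$ and hence in a neighbourhood of $\partial D$. Using $T=st_{c}^{2}$, the equation $\frac{1}{T}\partial_{\lambda}W(\lambda)=\int_{\hat{\mathcal{C}}}\frac{d\hat\nu(\eta)}{\lambda-\eta}$ becomes $\lambda^{s-1}\bigl[(\bar\lambda^{s}-t)(\lambda^{s}-t)-t_{c}^{2}\bigr]=0$; the factor $\lambda^{s-1}$ contributes only the isolated point $\lambda=0$, which bounds no domain, while the bracket vanishes exactly on $\{\lb:\,|\lambda^{s}-t|=t_{c}\}=\partial D$ by \eqref{domainmu}. (One checks in passing that on each bounded component of $\mathbb{C}\setminus\hat{\mathcal{C}}$ the Cauchy transform equals instead $-t\lambda^{s-1}/t_{c}^{2}$, so the equation there reduces to $\lambda^{s-1}\bar\lambda^{s}=0$, with no solution in those components.) Hence \eqref{balayage00} cuts out $\partial D$ and the domain it bounds is $D$. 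I expect the only point that is not pure computation to be the geometric input used in the third step — that $\hat{\mathcal{C}}$ consists of $s$ disjoint simple loops, each enclosing exactly one quadrature node and lying in $\operatorname{int}D$ — which is precisely the structural description of the level set $\hat{\mathcal{C}}_{r}$ (for the two relevant values of $r$) established in the proof of Lemma~\ref{lemma1} and in the steepest-descent analysis behind Theorem~\ref{theorem1}; granting this, the remainder is the residue bookkeeping above together with the identity $\sum_{k}(\lambda-a\omega^{k})^{-1}=s\lambda^{s-1}/(\lambda^{s}-a^{s})$.
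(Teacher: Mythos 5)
Your proof is correct and takes essentially the same approach as the paper: both arguments evaluate each Cauchy transform by residues and observe that both sides equal $\lambda^{s-1}/(\lambda^{s}-t)$ for $\lambda$ in the unbounded component, from which \eqref{balayage00} and \eqref{balayage1} follow at once. The only cosmetic difference is that you invoke the quadrature identity from the Remark following \eqref{Schwartz} to handle $\int_{D}\frac{d\mu^{*}(\eta)}{\lambda-\eta}$, whereas the paper re-derives the same value directly via Stokes' theorem and the Schwarz function, and you add a (harmless, slightly more careful) check on the bounded components of $\mathbb{C}\setminus\hat{\mathcal{C}}$.
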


\begin{remark}
The identities  \eqref{balayage00}  and \eqref{balayage1} in Lemma~\ref{lemma_balayage} are expected to hold in general for a large class of  normal matrix models. It  has been verified for several other potentials (see, for example \cite{Akemann,Elbau,BBLM,BleherKuij,Kuij2}).

  
We also observe that for the orthogonal polynomials appearing in random matrices, in some cases, the asymptotic distribution of the zeros is supported on the so-called  mother body or potential theoretic skeleton  of the support $D$ of the eigenvalue distribution.
We recall that a measure $\nu$ is a strong mother body for  a domain $D$ with respect to a measure $\mu$ if \cite{GustafssonVasiliev}
\begin{itemize}
\item[1)]  $\int\log|\lb-\eta|d\mu(\eta)\leq \int\log|\lb-\eta|d\nu(\eta)$, for $\lb \in\mathbb{C}$ with equality for $\lb$ outside $D$;
\item[2)]  $\nu\geq 0$ and $\mbox{supp }\nu\subset \mbox{supp }\mu$;
\item[3)] the support of $\nu$ has zero area measure;
\item[4)] the support of $\nu$ does not disconnect any part of $D$ from $\bar{D}^c$.
\end{itemize}
If the measure $\nu$ has only the property 1), 2) and 3)  it is called weak mother body.
The problem of constructing mother bodies is not always solvable and the solution is not always unique \cite{SSS}.

Concerning the explicit examples appearing in the random matrix literature, for the exponential weight $W(\lb)=|\lb|^2+\Re (P(\lb))$ where $P(\lb)$ is polynomial, the support of the zero  distribution of the orthogonal polynomials  is  a strong mother body of the domain that corresponds to the eigenvalue distribution of the matrix 
model (see, e.g., \cite{BleherKuij, Kuij2, Ginibre, ElbauFelder, WiegZab1}).
In contrast, in the model studied in \cite{BBLM} and also in the present case, the support of the zero distribution of the orthogonal polynomials does not have property 4) and therefore it is a weak mother body of the 
 set $D$.
 \end{remark}
  The proof of Theorem~\ref{theorem1} is obtained from  the strong and uniform asymptotics of the polynomials $p_n(\lb)$ in the whole complex plane which is obtained  by characterising the orthogonal polynomials $p_n(\lb)$ via a Riemann-Hilbert method.\\
{\bf Uniform Asymptotics.}
In the next theorem we describe the strong and  uniform  asymptotic of the polynomials $p_n(\lb)$ in the complex plane. We distinguish the pre-critical case $t<t_c $ and post-critical case $t>t_c$.  We first  define the function
\begin{equation}
\label{hat_phi}
\hat{\phi}(\lb)=
\left\{
\begin{array}{lll}
\hat{\phi}_{r=z_0}(\lb)&0<t<t_c & \text{(pre-critical)}\\
&&\\
\hat{\phi}_{r=1}(\lb)&t_c < t & \text{(post-critical)},
\end{array}
\right.
\end{equation}
where $\hat{\phi}_r(\lb)$ is given by \eqref{hat_phi_r}.
\begin{theorem}[Pre-critical case]
\label{theorem2}
For $0<t<t_c $  the  polynomial $p_n(\lb)$ with $n=ks+l$, $l=0,\dots, s-2$, $\gamma=\frac{s-l-1}{s}\in(0,1)$,   have the following asymptotic behaviour for when $n,N\to\infty$ in such a way that $NT=n-l$:
\begin{itemize}
\item[(1)] for $\lambda$ in  compact subsets of the exterior of $\hat{\mathcal{C}}$ one has for any integer $M\geq 2$
\begin{equation}
\label{pn1_theo2}
p_n(\lb)=\lb^{s-1}(\lb^s-t)^{k-\gamma}
\left(1+\order{\frac{1}{k^{M+\gamma}}}\right);
\end{equation}
\item[(2)]  for $\lambda$ near $\hat{\mathcal{C}}$ and away from $\lb=0$,
\begin{equation}
\label{pnzero}
p_n(\lb)=\lb^{s-1}(\lb^s-t)^{k-\gamma}\left[1+\dfrac{e^{-k\hat{\phi}(\lb)}}{k^{1+\gamma}}\left(\dfrac{1}{\Gamma(-\gamma)}\left(1-\frac{1}{z_0}\right)^{-1-\gamma} \dfrac{t}{\lb^{s}}\left(1-\frac{t}{\lb^s}\right)^{\gamma}+\order{\frac{1}{k}}\right)\right]\,,
\end{equation}
where $\hat{\phi}(\lb)$ has been defined in \eqref{hat_phi} and $\Gamma(z)=(z-1)!$;
\item[(3)] for $\lb$ in compact subsets of the  interior of $\hat{\mathcal{C}}$ and away from $\lb=0$, 
\begin{equation}
p_{n}(\lb)=\lb^l\dfrac{e^{-\frac{k\lb^s}{tz_0} }}{k^{1+\gamma}}\left(  \dfrac{(-t)^{k+1}}{ \Gamma(-\gamma)}\dfrac{1}{\lb^s}\left(1-\frac{1}{z_0}\right)^{-1-\gamma}+\order{\frac{1}{k}}\right);
\end{equation}
\item[(4)] for $\lambda$ in a neighbourhood of $\lambda=0$,  we introduce the function $\hat{w}(\lb)=\hat{\phi}(\lb)+2\pi i$ if $\lb^s\in\mathbb{C}_-$ and  $\hat{w}(\lb)=\hat{\phi}(\lb)$ if $\lb^s\in\mathbb{C}_+$.
Then 
\begin{equation}
\label{pn4_theo2}
p_n(\lb)=\lb^{l}(\lb^s-t)^{k-\gamma}\left(\dfrac{\lb^{s}}{\hat{w}(\lb)}\right)^{\gamma}\left[(\hat{w}(\lb))^{\gamma}-\dfrac{e^{-k\hat{\phi}(\lb)}}{k^{\gamma}}\left(\tilde{\Psi}_{12}(k\hat{w}(\lb))+\order{\frac{1}{k}}\right)\right],
\end{equation}
\end{itemize}
where the $(1,2)$-entry of the matrix $\tilde{\Psi}$ is defined in \eqref{tildePsi}.\end{theorem}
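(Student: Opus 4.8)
\noindent\textbf{Strategy for the proof of Theorem~\ref{theorem2}.}
The plan is to characterise $p_n$ by a $2\times2$ Riemann--Hilbert problem and to run the Deift--Zhou nonlinear steepest descent analysis, so that the four regimes in the statement come out as the four regions in which the final transformation is read off. By the $\Z_s$ symmetry it suffices to analyse the reduced monic polynomials $q_k^{(l)}$ of \eqref{eq:ort_reduced}, equivalently the rescaled polynomials $\pi_k(z)$ of \eqref{def:pi_k0}. The first task is to turn the planar orthogonality \eqref{eq:ort_reduced} into a contour orthogonality for $\pi_k$ (following the technique used for related normal-matrix potentials): after the substitution $z=1-\lb^s/t$ one expects a weight of generalised-Laguerre type, $z^{\gamma}e^{-N\psi(z)}$ with $\psi$ the explicit phase already visible in \eqref{hat_phi_r}, whose zeros should lie near the Szeg\H{o}-type curve $\mathcal C$ of \eqref{Gamma0}. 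One then writes the usual RHP for $Y(z)$, with jump $\left(\begin{smallmatrix}1 & w\\ 0 & 1\end{smallmatrix}\right)$ on that contour and $Y(z)=(I+\order{1/z})z^{k\sigma_3}$ at infinity, so that $Y_{11}=\pi_k$.

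Next I would introduce the $g$-function $g(z)=\int_{\hat{\mathcal C}}\log(z-\eta)\,d\hat\nu(\eta)$ built from the probability measure $\hat\nu$ of \eqref{nur} provided by Lemma~\ref{lemma1}; the function $\hat\phi$ of \eqref{hat_phi} is, up to an additive constant, the associated effective potential, so $\hat{\mathcal C}$ is a level curve of $\Re\hat\phi$ and the Euler--Lagrange relations of Lemma~\ref{lemma_balayage} hold. Conjugating $Y$ by $e^{kg\sigma_3}$ and by the Robin constant makes the jump on $\hat{\mathcal C}$ oscillatory and factorisable, so that lenses may be opened around $\hat{\mathcal C}$; one must then verify that the lens-boundary jumps tend to $I$ exponentially in $k$ away from $\lb=0$, which amounts to the strict sign of $\Re\hat\phi$ off the curve (the $S$-property of $\hat{\mathcal C}$) and is read off from the analysis of the level curves $\hat{\mathcal C}_r$.

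The core of the argument is the parametrix construction. Away from $\lb=0$ the global parametrix $P^{(\infty)}$ solves the RHP with jump only on $\hat{\mathcal C}$; since the band for $\pi_k$ is a single closed contour this is solved explicitly (no theta functions), built from a Szeg\H{o} function for the weight $z^{\gamma}$, its $(1,1)$ entry producing the factor $(\lb^s-t)^{-\gamma}$ that pervades the statement. The delicate point is $\lb=0$: there the branch point of $z^{\gamma}$ lies in the bulk of the oscillatory band, a Fisher--Hartwig-type root singularity sitting on the support, and a local parametrix $P^{(0)}$ must be built in a fixed disc around $\lb=0$ out of a model solution $\tilde\Psi$ of a $2\times2$ RHP solvable in confluent hypergeometric (incomplete gamma) functions, whose $(1,2)$ entry is the function defined in \eqref{tildePsi}, and matched to $P^{(\infty)}$ on the disc boundary to order $\order{1/k}$; the expansion of that matching delivers the constants $1/\Gamma(-\gamma)$ and $(1-1/z_0)^{-1-\gamma}$. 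Then $R$, equal to $S(P^{(\infty)})^{-1}$ outside the disc and $S(P^{(0)})^{-1}$ inside it, solves a small-norm RHP with jumps $I+\order{1/k}$ on a fixed contour, so $R=I+\order{1/k}$ uniformly (and smaller than any power of $1/k$ on compact subsets of the exterior of $\hat{\mathcal C}$, where only the exponentially small lens jumps survive, giving the $\order{1/k^{M+\gamma}}$ of \eqref{pn1_theo2}). Unravelling $R\mapsto S\mapsto\dots\mapsto Y$ and translating back through \eqref{qk}--\eqref{def:pi_k0} in each region then gives \eqref{pn1_theo2}--\eqref{pn4_theo2}: region (1) from $P^{(\infty)}$ alone; region (2) from $P^{(\infty)}$ plus the leading lens correction $\sim e^{-k\hat\phi}$; region (3) from the interior analysis, where the surviving contribution is the one exponentially subdominant outside; and region (4) from $P^{(0)}$, the shift $\hat w=\hat\phi+2\pi i$ in the lower half $\lb^s$-plane keeping track of the branch of $z^{\gamma}$.

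I expect the main obstacle to be the local analysis at $\lb=0$ --- constructing the model $\tilde\Psi$ and showing that $P^{(0)}$ matches $P^{(\infty)}$ to the order needed for the stated error terms --- together with the prerequisite that the level curve $\hat{\mathcal C}$ has the $S$-property, i.e.\ that $\Re\hat\phi$ has the correct sign throughout $D$, which is what legitimises opening the lenses. Once these two ingredients are secured, the rest is the standard, if lengthy, bookkeeping of the steepest descent method.
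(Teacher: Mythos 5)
Your proposal is correct and tracks the paper's proof essentially step for step: the $\bar\partial$-reduction of the planar orthogonality to a contour integral (Theorem \ref{theodbar}), the change of variables $z=1-\lambda^s/t$, the $g$-function and lens opening in the $z$-plane, a scalar Szeg\H{o}-type outer parametrix, a local parametrix at the root singularity $z=1$ (i.e.\ $\lambda=0$) built from the abelian model RHP and solved by Sokhotski--Plemelj in terms of an incomplete gamma function, followed by a small-norm argument for the error matrix $R$. The one point you gloss over, calling it an ``$S$-property'' check, is the contour-selection step: the paper chooses the level $r=1$ among the one-parameter family $\mathcal{C}_r$ by arguing that for $r<1$ the homotopic deformation of $\Sigma$ is obstructed (leaving a piece with $\Re\phi>0$), while for $1<r\le z_0$ the asymptotics carry no sub-leading term and hence do not locate the zeros; this is slightly finer than a sign check of $\Re\hat\phi$ near the curve.
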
 

We observe that  in compact subsets of  the exterior  of  $\hat{\mathcal{C}}$ there are no  zeros of the polynomials $p_n(\lb)$. The only possible zeros are located in $\lb=0$ and   in the region where the  second 
term in parenthesis in the expression \eqref{pnzero} is of order one.  Since $\Re \hat{\phi}(\lb)$ is negative inside $\hat{\mathcal{C}}$ and positive outside $\hat{\mathcal{C}}$ it follows that the possible zeros of $p_n(\lb)$ lie
inside $\hat{\mathcal{C}}$ and are determined by the condition
\begin{equation}
\label{phi_modified}
\Re\hat{\phi}(\lb)=-(1+\gamma)\dfrac{\log k}{k}+\dfrac{1}{k}\log \left(\dfrac{1}{|\Gamma(-\gamma)|} \dfrac{t}{|\lb|^{s}}\left|1-\frac{1}{z_0}\right|^{-1-\gamma}\left|1-\frac{t}{\lb^s}\right|^{\gamma}\right),\quad |\lb^s-t|\leq t.
\end{equation}
The expansion \eqref{pnzero}  shows that the zeros of the polynomials $p_n(\lb)$  are within a  distance $\order{1/k^2}$ from the level curve \eqref{phi_modified}. This level curve approaches $\hat{\mathcal{C}}$ defined in \eqref{Gamma}
 at a rate $\order{\log k/k}$.

 \begin{figure}[H]
\centering
\includegraphics[scale=0.35]{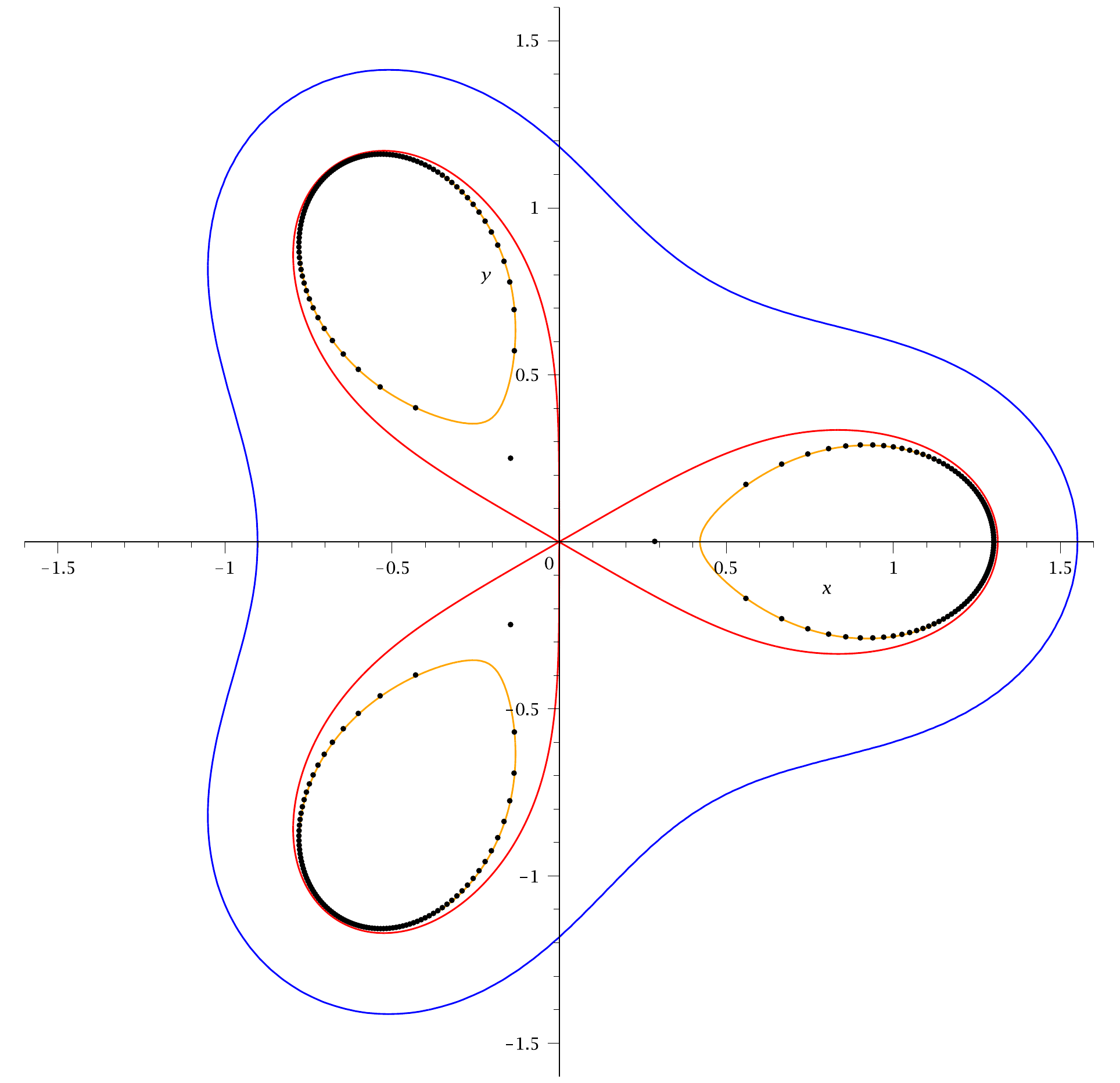}\includegraphics[scale=0.36, trim=0cm -.2cm 0cm 0cm]{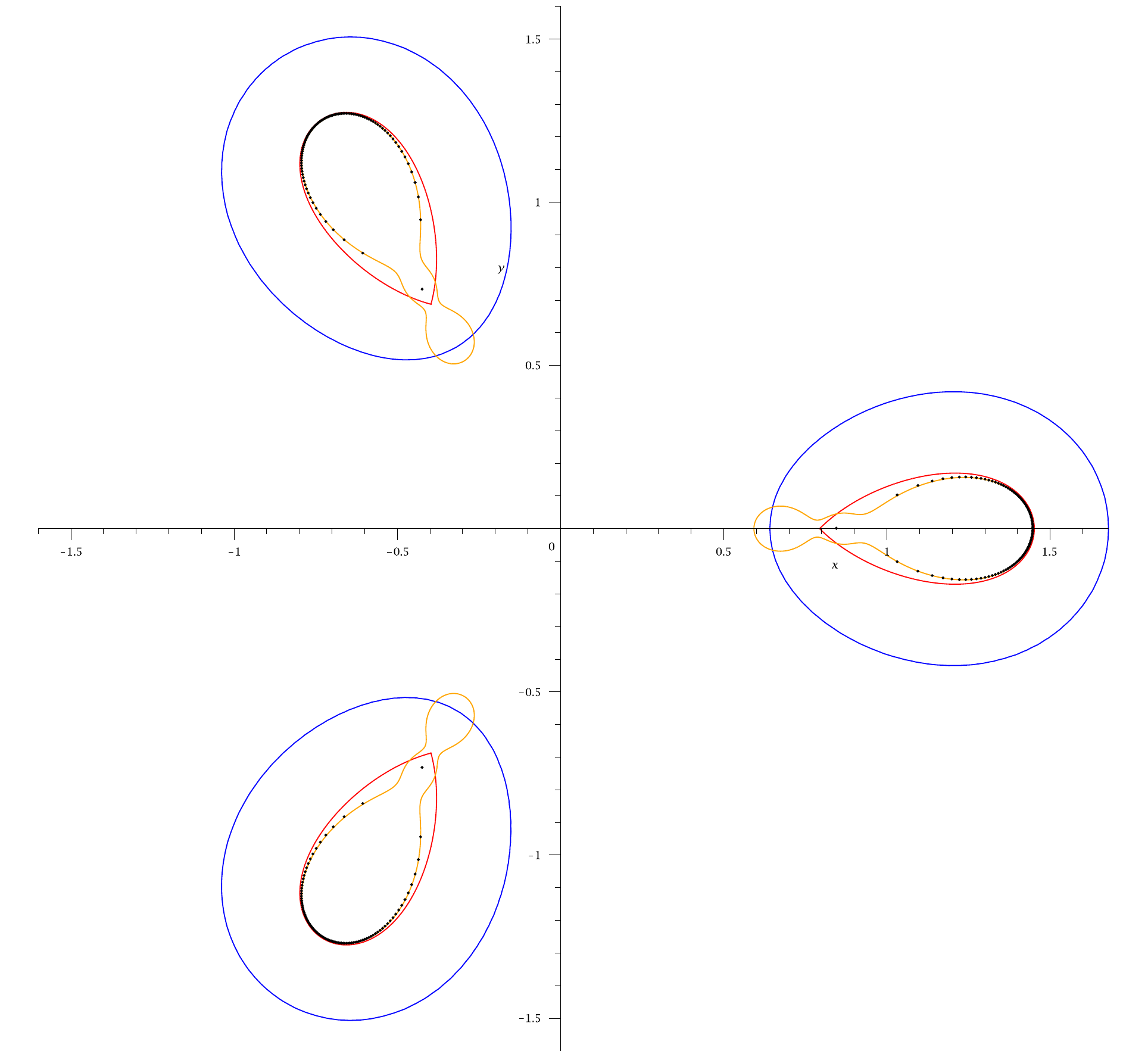}\\
\caption{The blue contour is the boundary of the  support of the eigenvalue distribution defined in \protect\eqref{domainmu} and  the red  contour  $\hat{\mathcal{C}}$ defined in \protect\eqref{Gamma}, is the support of the limiting zero distribution of the orthogonal polynomials $p_n(\lb)$.  The yellow contour is given by \protect\eqref{phi_modified} in the pre-critical case and \protect\eqref{phi_modified_post} in the post-critical case and lies within a distance of order $\order{\frac{\log n}{n}}$ from the contour $\hat{\mathcal{C}}$. The dots are the zeros of the polynomial $p_n(\lb)$ for $s=3$, $n=285$,
 $l=0$ and $t<t_c$ on the right and $t>t_c$ on the left, respectively.}
\label{zero_pn}
\end{figure}

 For $t>t_c $   the  polynomials $p_n(\lb)$ with $n=ks+l$, $l=0,\dots, s-2$,   have the following  asymptotic behaviour.

\begin{theorem}[Post-critical case]
\label{theorem3}
For $t>t_c $    the  polynomials $p_n(\lb)$ with $n=ks+l$, $l=0,\dots, s-2$, and $\gamma=\frac{s-l-1}{s}\in(0,1)$ have the following behaviour when $n,N\to\infty$ in such a way that $NT=n-l$
\begin{itemize}
\item[(1)] for $\lb$  in compact subsets of the exterior of $\hat{\mathcal{C}}$   one has 
\begin{equation}
\label{pnpost1}
p_n(\lb)=\lb^{l}(\lb^s-t)^{k-\gamma}(\lb^s+t(z_0-1))^{\gamma}\left(1+\order{\frac{1}{k}}\right),
\end{equation}
with $z_0=\dfrac{t_c^2}{t^2}$;
\item[(2)] for $\lb$ in  the region near  $\hat{\mathcal{C}}$   and away from  the points $\lb^s=t(1-z_0)$ one has 
\begin{equation}
\label{pn_post}
p_n(\lb)=\lb^{l}(\lb^s-t)^{k-\gamma}(\lb^s+t(z_0-1))^{\gamma}\left(1-\dfrac{e^{-k\hat{\phi}(\lb)}}{k^{\frac{1}{2}+\gamma}}\dfrac{\gamma z_0^2}{c}\dfrac{t((\lb^s-t)\lb^s)^{\gamma}}{(\lb^s+t(z_0-1))^{2\gamma+1}}+\order{\frac{1}{k}}\right),
\end{equation}
with $\hat{\phi}(\lb)$ defined in \eqref{hat_phi} and the constant $c$ defined in \eqref{def_c};
\item[(3)] for $\lb$ in compact subsets of the  interior region of $\hat{\mathcal{C}}$  one has
\begin{equation}
p_n(\lb)=\lb^l \dfrac{e^{k\frac{t-\lb^s}{tz_0}}}{k^{\frac{1}{2}+\gamma}}\dfrac{t\gamma z_0^{2}}{c}\left(\dfrac{tz_{0}}{e}\right)^{k}\left(\dfrac{\lb^{s\gamma}}{(\lb^s+t(z_0-1))^{\gamma+1}}+\order{\frac{1}{k}}\right);
\end{equation}
\item[(4)] in the  neighbourhood of   each of the points that solve the equation $\lb^s=t(1-z_0)$ one has 
\begin{equation}
p_n(\lb)=\lb^{l}(\lb^s-t)^{k-\gamma} \left(\dfrac{\lb^s+t(z_0-1)}{\sqrt{k}\hat{w}(\lb)}\right)^{\gamma}e^{-k\hat{\phi}(\lb)}\left({\cal U}(-\gamma-\frac{1}{2}; \sqrt{2k}\hat{w}(\lb))+\order{\frac{1}{k}}\right);
\end{equation}
where ${\cal U}(a;\xi)$ is the parabolic cylinder function (Chap.~13, \cite{AbramowitzStegun}) satisfying  the equation $\dfrac{d^2}{d\xi^2}{\cal U}=\left(\frac{1}{4}\xi^2+a\right){\cal U}$ and $\hat{w}^2(\lb)=-\hat{\phi}(\lb)-2\pi i $ for $\lb^s\in\mathbb{C}_-$ and $\hat{w}^2(\lb)=-\hat{\phi}(\lb)$ for $\lb^s\in\mathbb{C}_+$.
\end{itemize}
\end{theorem}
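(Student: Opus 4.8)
The plan is to obtain Theorem~\ref{theorem3} from the Deift--Zhou nonlinear steepest descent analysis of a $2\times 2$ Riemann--Hilbert problem for the reduced polynomials $q^{(l)}_k$ of \eqref{qk}, equivalently --- after the affine change of variable $z=1-\lb^s/t$ --- for the polynomials $\poly_k(z)$ of \eqref{def:pi_k0}, with external field governed by the phase $\hat\phi=\hat\phi_{r=1}$ of \eqref{hat_phi}. The first step is to replace the planar orthogonality \eqref{eq:ort_reduced} by a contour orthogonality. Writing $u=\rho e^{i\theta}$, expanding $e^{N(tu+t\c u)}$ as a double power series, performing the angular integration and using $\int_{\C}u^a\c u^{\,b}e^{-N|u|^2}\,\d A(u)=\delta_{ab}\,\pi N^{-b-1}\Gamma(b+1)$ on a fixed branch (forced by the Fisher--Hartwig factor $|u|^{-2\gamma}$), and then resumming the series, each of the $k$ relations in \eqref{eq:ort_reduced} becomes a single contour-integral condition
\[
\oint q^{(l)}_k(\zeta)\,\frac{e^{Nt\zeta}}{\zeta^{\gamma}(\zeta-t)^{j+1-\gamma}}\,\d\zeta=0,\qquad j=0,\dots,k-1,
\]
on a loop enclosing the segment $[0,t]$ between the two Fisher--Hartwig points $\zeta=0$ and $\zeta=t$; this is a non-Hermitian orthogonality of the Bleher--Kuijlaars type with weight $e^{Nt\zeta}\zeta^{-\gamma}(\zeta-t)^{\gamma}$. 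Since $NT=n-l=ks$ and $z_0=t_c^2/t^2$, one has $Nt^2=k/z_0$, so in the $z$-variable the weight is $\sim e^{-kz/z_0}z^{\gamma}(1-z)^{-\gamma}$, with Fisher--Hartwig exponents $+\gamma$ at $z=0$ and $-\gamma$ at $z=1$. This characterises $q^{(l)}_k$ up to normalization and produces the Fokas--Its--Kitaev Riemann--Hilbert problem $Y$ with $Y_{11}=q^{(l)}_k$, triangular jump built from that weight, and $Y(u)\sim(I+\order{u^{-1}})u^{k\sigma_3}$ at infinity; for $\gamma=0$ it correctly degenerates to $q^{(s-1)}_k(u)=(u-t)^k$.

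Then one runs the usual chain of transformations. (i)~Conjugate $Y$ by $e^{kg\sigma_3}$, where $g$ is the $g$-function built from the measure $\hat\nu$ of \eqref{nur} with $r=1$; by Lemma~\ref{lemma1}, $\hat\nu$ is a probability measure supported on $\hat{\mathcal C}$, so $e^{kg}$ restores the behaviour $u^k$ at infinity and the transformed off-band jumps carry factors $e^{\mp k\hat\phi}$, with $\Re\hat\phi$ of the sign making them exponentially small away from $\hat{\mathcal C}$ and from the $s$ points solving $\lb^s=t(1-z_0)$ (that is, $z=z_0$), which in the post-critical regime lie on $\hat{\mathcal C}$. (ii)~Factor the oscillatory jump on $\hat{\mathcal C}$ and open lens-shaped neighbourhoods, so that only the jump on $\hat{\mathcal C}$ and jumps inside small disks around those $s$ points survive to leading order. (iii)~Construct the global parametrix: the model problem with off-diagonal jump on $\hat{\mathcal C}$ alone has a scalar Szeg\H o-type solution, and imposing the correct monodromy along the $s$ loops of $\hat{\mathcal C}$, the Fisher--Hartwig exponent at $\lb^s=t$ and degree $n$ at infinity forces precisely the prefactor $\lb^l(\lb^s-t)^{k-\gamma}(\lb^s+t(z_0-1))^{\gamma}$. (iv)~Near each $\lb^s=t(1-z_0)$ --- where an edge of the band, the branch point of this prefactor, and the relevant saddle of $\hat\phi$ come together --- construct a local parametrix out of the parabolic cylinder function $\mathcal{U}(a;\xi)$ with $a=-\gamma-\tfrac12$, matched to the global parametrix on the boundary of a disk of fixed radius; by the $\mathbb{Z}_s$-symmetry one such parametrix, rotated, covers all $s$ points. (This is the structural contrast with the pre-critical case, where $\lb=0$ lies on $\hat{\mathcal C}$ and is simultaneously a Fisher--Hartwig point, handled there by the confluent hypergeometric parametrix $\tilde\Psi$ of \eqref{tildePsi}.)

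The last step is the small-norm analysis: setting $R=S\cdot(\text{global/local parametrices})^{-1}$, the jumps of $R$ are $I+\order{k^{-1/2}}$, the $k^{-1/2}$ arising from the parabolic cylinder matching on the disk boundaries, where $\xi=\sqrt{2k}\,\hat w\sim\sqrt k$ and the local model agrees with the global one only to relative order $\xi^{-1}$. Hence $R=I+\order{k^{-1/2}}$ uniformly, and unravelling $R\mapsto S\mapsto T\mapsto Y$ together with $p_n(\lb)=\lb^l q^{(l)}_k(\lb^s)$ produces the four regime formulas: in compact subsets of the exterior and the interior of $\hat{\mathcal C}$ the polynomial is governed by the global parametrix alone, with relative error $\order{1/k}$ (items (1) and (3)); near $\hat{\mathcal C}$ one retains in addition the subleading term carrying $e^{-k\hat\phi}$ and the extra power $k^{-1/2-\gamma}$ (item (2)); and near each $\lb^s=t(1-z_0)$ the parabolic cylinder function enters through $\mathcal{U}(-\gamma-\tfrac12;\sqrt{2k}\,\hat w(\lb))$, the constant $c$ of \eqref{def_c} being read off from the parametrix comparison (item (4)).

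The hard part is the construction and matching of the parabolic cylinder parametrix at $\lb^s=t(1-z_0)$: there the edge of the band, the $\gamma$-branch point of the global parametrix, and --- as $t\downarrow t_c$ --- the Fisher--Hartwig point $\lb=0$ all come together, so the local model must resolve the fractional monodromy and the band at the same time; the precise matching is what simultaneously fixes the constant $c$ and the $k^{-1/2}$ error rate, and it requires a careful choice of the conformal coordinate $\hat w$ and of the lens contours near that point, compatible with $\hat{\mathcal C}$ passing through it. A preliminary difficulty is the global control of the weight with uniform error estimates in $k$, and the verification --- in the post-critical topology, where $\hat{\mathcal C}$ consists of $s$ components --- that $\Re\hat\phi$ has the required sign on each side of $\hat{\mathcal C}$ and that the lenses can be opened without meeting $\hat{\mathcal C}$ or the singular points $\lb^s=t$ and $\lb=0$ of the weight.
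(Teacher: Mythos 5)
Your overall architecture — contour orthogonality of Bleher--Kuijlaars type, FIK Riemann--Hilbert problem, $g$-function normalization, lens opening, global parametrix with a $\gamma$-branch point at $z_0$, parabolic cylinder local parametrix, small-norm conclusion — is the architecture the paper uses. Two points, however, need correction; the second is a genuine gap.

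First, you take the $g$-function built from $\hat\nu$ with $r=1$. In the post-critical regime $z_0<1$ this is outside the admissible range: as the proof of Lemma~\ref{lemma1b} shows, $\nu$ is a positive probability measure on $\mathcal{C}_r$ only for $0<r\leq z_0$, and for $t>t_c$ one has $z_0=t_c^2/t^2<1$. What Section~4 actually uses is $\phi=\phi_{z_0}$ and $\mathcal{C}=\mathcal{C}_{z_0}$ (equation~\eqref{def_phi_post}); the pre-/post- labels in~\eqref{hat_phi} are swapped relative to the body. The value $r=z_0$ is precisely what your own subsequent (correct) observation — that $z_0$ lies on $\hat{\mathcal{C}}$ — requires, since $\phi_r(z_0)=\log z_0-\log r+r/z_0-1=0$ iff $r=z_0$. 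It is also forced by the homotopy: in the post-critical topology $\mathcal{C}_r$ alone does not encircle both $z=0$ and $z=1$, and one must attach an arc $\mathcal{C}_o$ through $z_0$ enclosing $z=1$ on which $\Re\phi_r<0$, which is possible only when the level set $\Re\phi_r=0$ touches $z_0$, i.e., $r=z_0$ (Figures~\ref{tikz_gamma_R_post1}--\ref{tikz_gammapost3}). Your proposal does not mention $\mathcal{C}_o$ at all.

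Second, the naive parabolic-cylinder matching does \emph{not} give $R=I+\order{k^{-1/2}}$. The prefactor $E(z)$ that conjugates $\Psi(\sqrt k\,w(z))$ into position (equation~\eqref{E}) contains $(\sqrt k\,w(z))^{-\gamma\sigma_3}$. Conjugating the off-diagonal correction $\Psi_1/(\sqrt k\,w)$ by $E$ therefore produces a $(1,2)$-entry of size $k^{-1/2-\gamma}$ but a $(2,1)$-entry of size $k^{\gamma-1/2}$. For $\gamma\geq 1/2$ the latter does not decay and the small-norm argument as you set it up fails outright; for $\gamma<1/2$ it decays, but only as $k^{\gamma-1/2}$, which is neither your claimed $k^{-1/2}$ nor the $k^{-1/2-\gamma}$ rate that produces item~(2). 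The paper resolves this with the ``improvement of the local parametrix'': both $P^\infty$ and $P^0$ are multiplied on the left by $I+C/(z-z_0)$ with a $k$-dependent lower-triangular nilpotent $C$ of size $k^{\gamma-1/2}$ (equation~\eqref{C}), chosen so that the improved $\hat P^0$ stays bounded at $z_0$. Only then is $v_R-I=\order{k^{-1/2-\gamma}}$ for $\gamma\in(0,1/2)$ and $\order{k^{\gamma-3/2}}$ for $\gamma\in[1/2,1)$, and the residue calculus for $R^{(1)},R^{(2)},R^{(3)}$ produces the subleading coefficients — including the constant $c$ of~\eqref{def_c} — in items (2) and (3). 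Without this step your error bound is wrong and the constant $c$ does not come out. (Minor: your route to the contour orthogonality via angular integration and resummation is a legitimate alternative to the paper's $\bar\partial$/Stokes derivation, but you must track the branch of $\zeta^{-\gamma}(\zeta-t)^{\gamma}$ and the normalization constants $\Gamma(j-\gamma+1)N^{\gamma-j-1}$, which the paper's method delivers directly.)
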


We observe that  in compact subsets of  the exterior  of  $\hat{\mathcal{C}}$  the polynomials $p_n(\lb)$ have zero at $\lb=0$ with multiplicity $l$.
 The other  possible zeros are located   in the region where the  second 
term in parenthesis in the expression \eqref{pn_post} is of order one. This happens in the region where  $\Re \hat{\phi}(\lb)<0 $ and $\Re\hat{\phi}(\lb)=\order{\log k/k}$, namely in a region inside the contour  $\hat{\mathcal{C}}$ and defined by the equation
\begin{equation}
\label{phi_modified_post}
\Re\hat{\phi}(\lb)=-(1+\gamma)\dfrac{\log k}{k}+\dfrac{1}{k}\log \left(\dfrac{t \gamma z_0^2}{|c|}\dfrac{|(\lb^s-t)\lb^s|^{\gamma}}{|\lb^{s}+t(z_{0}-1)|^{2\gamma+1}}
\right).
\end{equation}
The expansion \eqref{pn_post}  shows that the zeros of the polynomials $p_n(\lb)$  are within a  distance $\order{1/k^2}$from the level curve \eqref{phi_modified_post}. This level curve approaches $\hat{\mathcal{C}}$ defined in \eqref{Gamma}
 at a rate $\order{\log k/k}$.
 \medskip
 
 The proofs of Theorem~\ref{theorem2} and Theorem~\ref{theorem3} are obtained by reducing the planar orthogonality relations of the polynomials $p_n(\lb)$ to orthogonality relations with respect to a complex density on a contour. More precisely, the sequence of polynomials $p_n(\lb)$ can be reduced to $s$ families of polynomials $q^{(l)}_k(\lb^s)$, $l=0,\dots,s-1$ as in \eqref{qk} with $n=ks+l$. The orthogonality relations of the polynomials $q^{(l)}(u)$ are reduced to orthogonality relations on a contour.  We then reformulate such orthogonality relation as a Riemann-Hilbert problem. We perform the asymptotic analysis of the polynomials $q^{(l)}_k(\lb)$ 
  using  the nonlinear steepest descent/stationary phase method introduced by Deift and Zhou \cite{DeiftZhou} and successfully applied in Hermitian random matrices and orthogonal polynomials on the line
   in the seminal papers, \cite{DKMVZ},\cite{DKMVZ2}. 
  See \cite{Deift} for an introduction to this method, with a special emphasis on the applications to orthogonal polynomials and random matrices. It would be also interesting to explore the asymptotic 
  for the polynomials $p_n(\lb)$ using the  $\bar{\partial}$-problem introduced in \cite{ItsTak}.
  
  The zeros of $p_n(\lb)$  accumulate along an open contour as shown in Figure~\ref{domain_D}.  
   The determination of this contour is a first step in the analysis. Riemann--Hilbert analysis in which a contour selection method was required, also appeared in the  papers \cite{BBLM},\cite{KuijMc}.

  The Riemann-Hilbert method gives strong and uniform asymptotics of the polynomials $p_n(\lb)$  in the whole complex plane. The asymptotic behaviour of the polynomials $p_n(\lb)$ in the leading and sub-leading order can be expressed  in terms of 
  elementary functions in the pre-critical case $t<t_c$, while  in the post-critical case $t>t_c$ we have used, in some regions of the complex plane,    parabolic cylinder functions as in \cite{DeiftZhou},\cite{JM},\cite{M}.
The proof of Theorem~\ref{theorem1} is  then deduced from the strong asymptotic of the orthogonal polynomials.
 
 The paper is organized as follows:
 \begin{itemize}
 \item In Sect.~\ref{sec:RH}, by using the symmetry properties of the external potential, we reduce the orthogonality relations of the polynomials $p_n(\lb)$ on a contour and recall how to formulate a Riemann-Hilbert problem for such orthogonal polynomials \cite{FIK}.
 \item In Sect.~\ref{sec:pre-crit} we obtain the strong and uniform asymptotics for the polynomials $p_n(\lb)$ in the pre-critical case $t<t_c$.
 \item In Sect.~4 we find the strong and uniform asymptotics for the polynomials $p_n(\lb)$ in the post-critical case $t>t_c$ and we prove Theorem~\ref{theorem1}.
\end{itemize}

In the  critical case  $t=t_c$,  the set $D$ in \eqref{domainmu} that supports the eigenvalue distribution of the normal matrix models has a singularity in $\lb=0$. The corresponding asymptotics of the orthogonal polynomials  seems to be described in terms of Painlev\'e IV  equation as in \cite{Dai} and this  situation is quite different from the generic singularity that  have been described in terms of the Painlev\'e I equation 
 \cite{LeeTeoWieg2},\cite{KuiTov}. Such problem will be investigated in a subsequent publication.
 
 Furthermore, from the norming constants of the orthogonal polynomials, one may consider the problem of determining  the asymptotic expansion of the partition function $Z_{n,N}$ in the spirit of \cite{CGM}.
 Also this problem will be investigated in a subsequent publication.

\medskip
{\bf Acknowledgements.} The authors thank M.~Bertola,  G.~Silva,  A.~Tovbis for useful discussions and  K.~McLaughin for suggesting the problem.
 The first part of this  work  was carried out while F.B.~was a postdoctoral fellow at SISSA, later he was supported in part by the \emph{Fonds de recherche du Qu\'ebec - Nature et technologies} (FRQNT) via the research grant \emph{Matrices Al\'eatoires, Processus Stochastiques et Syst\`emes Int\'egrables}. F.~B.~acknowledges the excellent working conditions at Concordia University and at the Centre des Recherches Math\'ematiques (CRM).
 T.~G.~acknowledges the support by  the Leverhulme Trust Research Fellowship RF-2015-442 and the Miur Research project Geometric and analytic theory of Hamiltonian systems in finite and infinite dimensions of Italian Ministry of Universities and Research.

 \section{The associated Riemann--Hilbert problem}
 \label{sec:RH}
In this section we set up the Riemann--Hilbert problem to study the asymptotic behaviour of the orthogonal polynomials $p_n(\lb)$. As seen above, the analysis of $p_n(\lb)$ can be reduced to that of the polynomials $q_k^{(l)}(\lb^d)$ introduced in \eqref{qk}. The polynomials $q_k^{(l)}(u)$ are characterized by the symmetry-reduced orthogonality relations
 \be
 \label{eq:ort_reduced1}
 \int_{\C}q_k^{(l)}(u)\bar{u}^{j}|u|^{-2\gamma}e^{-N\left(|u|^2-tu-t\bar{u}\right)}dA(u)=0, \qquad j=0,\dots, k-1.
 \ee
 \subsection{Reduction to contour integral orthogonality}
 The crucial step in the present analysis is to replace the two-dimensional integral conditions \eqref{eq:ort_reduced1} by an equivalent set of linear constraints in terms of contour integrals.
In what follows it will be advantageous to perform the change of coordinate
\be
u = -t(z-1),\quad z\in\mathbb{C},
\ee
and characterize $q_k^{(l)}(u)$ in terms of the transformed polynomial
\be
\label{def:pi_k}
\poly_k(z) := \frac{(-1)^k}{t^k}q^{(l)}_k(-t(z-1))
\ee
in the new variable $z$. The polynomial $\poly_k(z)$ is also a monic polynomial of degree $k$ and it can be characterized as follows.

\begin{theorem}
\label{theodbar}

The polynomial $\poly_k(z)$ is characterized by the non-hermitian orthogonality relations
\be
\label{pi_k}
\oint_{\Sigma} \poly_k(z)z^j \frac{e^{-Nt^2 z}}{z^k}\left(\frac{z}{z-1}\right)^{\gamma}dz  = 0 \qquad j=0,1,\dots ,k-1\ ,\;\;\gamma\in(0,1),
\ee
where $\Sigma$ is a simple positively oriented contour encircling $z=0$ and $z=1$ and the function $\left(\frac{z}{z-1}\right)^{\gamma}$ is analytic in $\mathbb{C}\backslash [0,1]$ and tends to one for $|z|\to\infty$.
\end{theorem}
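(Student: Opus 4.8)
The plan is to turn the two-dimensional orthogonality relations \eqref{eq:ort_reduced1} into the one-dimensional ones \eqref{pi_k} by carrying out the angular integration explicitly and then resumming the resulting series as a Cauchy integral. Write the reduced weight as $|u|^{-2\gamma}e^{-N|u|^2}e^{Ntu}e^{Nt\bar u}$, expand $e^{Ntu}=\sum_{p\ge0}\frac{(Nt)^p}{p!}u^p$ and $e^{Nt\bar u}=\sum_{q\ge0}\frac{(Nt)^q}{q!}\bar u^q$, write the monic polynomial as $q_k^{(l)}(u)=\sum_{m=0}^{k}a_m u^m$ with $a_k=1$, and substitute into \eqref{eq:ort_reduced1}. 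Integrating term by term — legitimate because, after passing to absolute values and polar coordinates, the triple series of Gaussian moments converges (the factorials dominate the Gamma factors by Stirling, while $\gamma<1$ keeps the singularity at $u=0$ integrable) — the angular integral of $u^{m+p}\bar u^{\,j+q}$ vanishes unless $m+p=j+q=:n$, and the radial integral is the Euler integral
\[
\int_0^{\infty}r^{2n-2\gamma+1}e^{-Nr^2}\,dr=\frac{\Gamma(n+1-\gamma)}{2\,N^{n+1-\gamma}}.
\]
Using $q=n-j$ and $p=n-m$, the $j$-th relation becomes, up to an overall nonzero constant,
\[
\sum_{n\ge j}\frac{(Nt)^{n-j}}{(n-j)!}\,\frac{\Gamma(n+1-\gamma)}{N^{n+1-\gamma}}\Bigl(\sum_{m=0}^{\min(n,k)}a_m\frac{(Nt)^{n-m}}{(n-m)!}\Bigr)=0 .
\]

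Next I would recognise the inner sum as the Taylor coefficient $[x^n]\bigl(q_k^{(l)}(x)e^{Ntx}\bigr)=\frac{1}{2\pi i}\oint_{|x|=R}\frac{q_k^{(l)}(x)e^{Ntx}}{x^{n+1}}\,dx$, valid for every circle since $q_k^{(l)}(x)e^{Ntx}$ is entire; choosing $R>t$ so that the $n$-series converges uniformly on $|x|=R$, I interchange it with the integral. Writing $n=j+r$ and summing the binomial series
\[
\sum_{r\ge0}\frac{\Gamma(j+1-\gamma+r)}{r!}\Bigl(\frac{t}{x}\Bigr)^{r}=\Gamma(j+1-\gamma)\Bigl(1-\frac{t}{x}\Bigr)^{-(j+1-\gamma)},
\]
which is the step that forces one to start on a large contour (it needs $|t/x|<1$), the relation collapses, after dividing by the nonzero constant $\pi\,\Gamma(j+1-\gamma)/N^{j+1-\gamma}$ and using $\bigl(1-\frac{t}{x}\bigr)^{-(j+1-\gamma)}=\frac{1}{(x-t)^{j+1}}\bigl(\frac{x-t}{x}\bigr)^{\gamma}$, to
\[
\oint_{|x|=R}q_k^{(l)}(x)\,e^{Ntx}\,\frac{1}{(x-t)^{j+1}}\Bigl(\frac{x-t}{x}\Bigr)^{\gamma}dx=0,\qquad j=0,\dots,k-1,
\]
with the branch of $\bigl(\frac{x-t}{x}\bigr)^{\gamma}$ that is analytic off $[0,t]$ and equals $1$ at infinity. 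For each fixed $j$ the integrand is analytic in $\mathbb{C}\setminus[0,t]$, so the circle may be deformed to any simple positively oriented loop around $[0,t]$. Finally I substitute $x=-t(z-1)$, under which $x=0\leftrightarrow z=1$, $x=t\leftrightarrow z=0$, $\frac{x-t}{x}=\frac{z}{z-1}$, $e^{Ntx}=e^{Nt^2}e^{-Nt^2z}$, $(x-t)^{j+1}=(-t)^{j+1}z^{j+1}$, and $q_k^{(l)}(-t(z-1))=(-t)^k\poly_k(z)$ by \eqref{def:pi_k}; this affine change of variable preserves orientation, and pulling out the nonzero factor $(-t)^{k-j}e^{Nt^2}$ yields $\oint_{\Sigma}\poly_k(z)\,z^{-(j+1)}e^{-Nt^2z}\bigl(\frac{z}{z-1}\bigr)^{\gamma}dz=0$ for $j=0,\dots,k-1$, which is \eqref{pi_k} after the relabelling $j\mapsto k-1-j$ (the families $\{z^{-(j+1)}\}_{j=0}^{k-1}$ and $\{z^{j-k}\}_{j=0}^{k-1}$ coincide).

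For the word ``characterised'' I would observe that the reduced weight is a positive measure with finite moments of all orders (again since $\gamma<1$), so its Gram matrix is positive definite and the monic orthogonal polynomial $q_k^{(l)}$, hence $\poly_k$, exists and is unique. Since every step above is an identity up to a nonzero multiplicative constant and the index correspondence is a bijection, a monic polynomial of degree $k$ satisfies the planar relations \eqref{eq:ort_reduced1} if and only if the associated polynomial satisfies \eqref{pi_k}; thus \eqref{pi_k} determines $\poly_k$ uniquely.

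I expect the only delicate point to be bookkeeping: one must resum on a contour with $|x|>t$ and only afterwards contract it onto $[0,t]$, and one must keep the branch of the power function and the orientation straight through the substitution $x=-t(z-1)$. The analytic justifications for the interchanges (Fubini/Tonelli for the planar integral, uniform convergence on the fixed circle $|x|=R$ for the binomial resummation) are routine and I would relegate them to a remark.
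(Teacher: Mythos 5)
Your proposal is correct, and it arrives at the same pivotal contour-integral identity as the paper, namely
\[
\int_{\C}q(u)\bar{u}^j |u|^{-2\gamma}e^{-N\left(|u|^2-tu-t\bar{u}\right)}dA(u) =\frac{\pi \Gamma(j-\gamma+1)}{N^{j-\gamma+1}}\frac{1}{2\pi i}\oint q(u)\frac{e^{Nt u}}{(u-t)^{j+1}}\left(1-\frac{t}{u}\right)^{\gamma}du,
\]
followed by the same affine change of variable and index reflection. The route you take to this identity, however, is genuinely different from the paper's. The paper constructs an explicit function $\chi_j(u,\bar u)$ solving the $\bar\partial$-problem $\partial_{\bar u}\chi_j=\bar u^j|u|^{-2\gamma}e^{-N(|u|^2-tu-t\bar u)}$ (via the one-dimensional integral $\chi_j=u^{-\gamma}e^{Ntu}\int_0^{\bar u}a^{j-\gamma}e^{-Nua+Nta}\,da$, which it reduces to an incomplete Gamma function), and then applies Stokes' theorem on large disks, estimating the incomplete-Gamma tail to pass to the limit. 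You instead expand the two exponentials and the polynomial into a triple series, carry out the planar Gaussian moment integrals (which force $m+p=j+q$ and produce $\Gamma(n+1-\gamma)/N^{n+1-\gamma}$), repackage the inner sum as the $n$-th Taylor coefficient of $q_k^{(l)}(x)e^{Ntx}$ via the Cauchy formula on a circle $|x|=R>t$, and then resum the remaining series as a Gauss binomial series. The two approaches are dual to one another — the paper integrates the $\bar\partial$-anti-derivative once and estimates an error term, whereas you integrate term-by-term and resum — and both require one analytic justification: the paper must control $\int_{N\bar u(u-t)}^\infty r^{j-\gamma}e^{-r}dr$ on large circles, and you must justify the Fubini interchange and the sum–integral interchange on $|x|=R$, which you do correctly. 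Your proof has the advantage of being self-contained at the level of classical special-function identities and of making the binomial-series mechanism behind the factor $(1-t/u)^\gamma$ transparent; the paper's $\bar\partial$-Stokes argument is shorter and is the form that generalises to weights where no closed-form resummation exists. Your closing remark on uniqueness — positive-definiteness of the planar Gram matrix plus the two-way equivalence of the linear systems — is also sound, although the paper records that fact as the separate proposition on $\det[\nu_{i+j}]$ rather than inside the proof of Theorem 2.1.
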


\begin{wrapfigure}{ltbh!}{0.3\textwidth}
\includegraphics{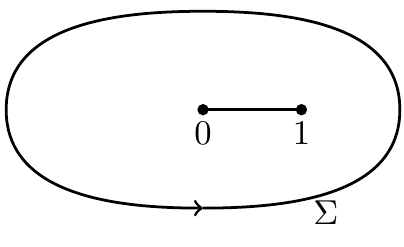}
\caption{The contour $\Sigma$}
\label{fig:cont_sigma}
\end{wrapfigure}
\begin{proof}
 In order to prove the theorem we first show that the orthogonality relation for the polynomials $q^l_k(u)$ on the  plane can be reduced to an orthogonality relation on a contour. To this end we seek  a function $\chi_j(u,\bar{u})$ that solves the $\bar{\partial }$-problem
 \be
 \label{dbarchi}
\partial_{\c{u}}\chi_j(u,\bar{u}) = \c{u}^j|u|^{-2\gamma}e^{-N\left(|u|^2-tu-t\c{u}\right)}.
\ee
Having such a function, for any polynomial $q(u)$ one has 
\begin{align}
& &d\left[q(u) \chi_j(u,\c{u}) du\right] &= q(u)\partial_{\c{u}}\chi_j(u,\c{u})d\c{u}\wedge du\\
& & &= q(u)\c{u}^j|u|^{-2\gamma}e^{-N\left(|u|^2-tu-t\c{u}\right)}d\c{u}\wedge du,
\end{align}
where $d$ denotes the operation of exterior differentiation. If such $\chi_j(u,\c{u})$ exists, one can use Stokes' theorem and reduce the planar orthogonality relation to an orthogonality relation on a suitable contour.
The equation \eqref{dbarchi} has a contour integral solution
\begin{align*}
\chi_j(u,\bar{u}) &= u^{-\gamma}e^{Ntu}\int_{0}^{\c{u}}a^{j-\gamma}e^{-Nua+Nt a}da\\
&= \frac{1}{N^{j-\gamma+1}}\left(1-\frac{t}{u}\right)^{\gamma}\frac{e^{Ntu}}{ (u-t)^{j+1}}\int_{0}^{N\c{u}(u-t)}r^{j-\gamma}e^{-r}dr\\
&= \frac{1}{N^{j-\gamma+1}}\left(1-\frac{t}{u}\right)^{\gamma}\frac{e^{Ntu}}{(u-t)^{j+1}}\left[\Gamma(j-\gamma+1)-\int_{N\c{u}(u-t)}^{\infty}r^{j-\gamma}e^{-r}dr\right]\\
&= \frac{\Gamma(j-\gamma+1)}{N^{j-\gamma+1}}\left(1-\frac{t}{u}\right)^{\gamma}\frac{e^{Ntu}}{(u-t)^{j+1}}\left[1-\order{e^{-N\c{u}(u-t)}}\right]\quad |u|\to \infty\\
\end{align*}

It follows that  for any polynomial $q(u)$ the following integral identity holds:
\begin{align*}
\int_{\C}q(u)\bar{u}^{j}|u|^{-2\gamma}e^{-N(|u|^2-tu-t\c{u})}dA(u)&=\frac{1}{2i}\lim_{R \to \infty}\int_{|u|\leq R}q(u)\bar{u}^{j}|u|^{-2\gamma}e^{-N(|u|^2-tu-t\c{u})}d\c{u}\wedge du\\
&=\frac{1}{2i}\lim_{R \to \infty}\oint_{|u|= R}q(u)\chi_j(u,\c{u}) du\\
&=\frac{1}{2i}\lim_{R \to \infty}\oint_{|u|= R}q(u)\left[G_j(u) - \order{e^{-\c{u}(u-t)}}\right]du\\
&=\frac{1}{2i}\oint_{|z|= R_0}q(u)G_j(u)du\\
\end{align*}
where  $R$ and  $R_0$ are  sufficiently large and 
$ G_j(u) = \frac{\Gamma(j-\gamma+1)}{N^{j-\gamma+1}}\left(1-\frac{t}{u}\right)^{\gamma}\frac{e^{Ntu}}{(u-t)^{j+1}}$.
So  it follows that  for any polynomial $q(u)$ the following identity is satisfied:
\be
\int_{\C}q(u)\bar{u}^j |u|^{-2\gamma}e^{-N\left(|u|^2-tu-t\bar{u}\right)}dA(u) =\frac{\pi \Gamma(j-\gamma+1)}{N^{j-\gamma+1}}\frac{1}{2\pi i}\oint_{\tilde\Sigma}q(u)\frac{e^{Nt u}}{(u-t)^{j+1}}\left(1-\frac{t}{u}\right)^{\gamma}du\ ,
\ee
where $\gamma \in (0,1)$, $j$ is an arbitrary non-negative integer, and $\tilde\Sigma$ is a positively oriented simple closed loop enclosing $u=0$ and $u=t$.
Making the change of coordinate $u=-t(z-1)$   one arrives to the statement of the theorem.
\end{proof}

\subsection{The Riemann--Hilbert problem}
Our aim is to study the behaviour of the polynomials $\poly_k(z)$ in the limit $k\to\infty$ and $N\to\infty $ in such a way that for $n=ks+l$ one has
\begin{equation}
\label{defN}
 N=\frac{n-l}{T},\quad T>0.
\end{equation}
Let
\be
t_c^2=\dfrac{T}{s} \quad \text{and} \quad z_0=\dfrac{t_c^2}{t^2}
\ee
and introduce the function
\begin{equation}
\label{def_V}
V(z)=\log z+\dfrac{z}{z_0}.
\end{equation}
In terms of the weight function
\be
w_k(z) := e^{-kV(z)}\left(\frac{z}{z-1}\right)^{\gamma},
\ee
the orthogonality relations \eqref{pi_k} can be written in the form
\[
\oint_{\Sigma} \poly_k(z)z^j w_k(z)\,dz = 0 \qquad j=0,1,\dots ,k-1.
\]

In the limit $k \to \infty$, we distinguish two different cases:
\begin{itemize}
\item \emph{pre-critical case} $z_0 >1$, corresponding to $0<t < t_c$\ ,
\item \emph{post-critical case} $z_0<1$, corresponding to $t_c<t$\ .
\end{itemize}

Our goal now is to characterize the polynomial $\pi_k(z)$ as a particular entry of the unique solution of a matrix-valued Riemann-Hilbert problem.
Let us first define the complex moments
\begin{equation}
\nu_j:= \oint_{\Sigma} z^j w_{k}(z)\d z
\end{equation}
where,  for simplicity, the dependence on $k$ is suppressed in the notation. Introduce the auxiliary polynomial  
\begin{equation}
\label{Pidet}
\Pi_{k-1}(z):=\frac 1{\det\left[\nu_{i+j}\right]_{0\leq i,j\leq k-1}} \det \left[
\begin{array}{ccccc}
\nu_0 & \nu_1 & \dots & \nu_{k-1}\\
\nu_1 & \nu_2 & \dots & \nu_{k}\\
\vdots &&&\vdots \\
\nu_{k-2}&\dots&&\nu_{2k-3}\\
1 & z & \dots & z^{k-1}
\end{array}
\right]
\end{equation}
Note that $\Pi_{k-1}$ is not necessarily monic and its degree may be less than $k-1$: its existence is guaranteed just by requiring that the determinant in the denominator does not vanish.
\begin{proposition}
The determinant $\det [\nu_{i+j}]_{0\leq i,j\leq k-1}$ does not vanish and therefore $\Pi_{k-1}(z)$ is well-defined.
\end{proposition}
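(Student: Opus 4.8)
The plan is to identify the (a priori complex) Hankel determinant $\det[\nu_{i+j}]_{0\le i,j\le k-1}$, up to a nonzero factor, with a genuine positive‑definite Gram determinant coming from the planar orthogonality \eqref{eq:ort_reduced1}. The measure $d\mu(u):=|u|^{-2\gamma}e^{-N(|u|^2-tu-t\bar u)}\,dA(u)=e^{Nt^2}|u|^{-2\gamma}e^{-N|u-t|^2}\,dA(u)$ is a positive finite measure on $\C$ whose support has positive area, and (using $2\gamma<2$ near $u=0$ and the Gaussian decay at infinity) all polynomial moments $\int|u|^m\,d\mu$ are finite. Hence $\langle f,g\rangle:=\int_\C f\,\bar g\,d\mu$ is a genuine positive‑definite inner product on polynomials, $1,u,\dots,u^{k-1}$ are linearly independent for it, and the Gram matrix $G:=\bigl[\langle u^i,u^j\rangle\bigr]_{0\le i,j\le k-1}$ satisfies $\det G>0$.

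Next I would transport $G$ to the contour $\Sigma$ using the identity established inside the proof of Theorem~\ref{theodbar}: taking there $q(u)=u^i$ yields, for $0\le i,j\le k-1$,
\[
\langle u^i,u^j\rangle=\frac{\pi\,\Gamma(j-\gamma+1)}{N^{\,j-\gamma+1}}\cdot\frac1{2\pi i}\oint_{\tilde\Sigma}u^i\,\frac{e^{Ntu}}{(u-t)^{j+1}}\Bigl(1-\tfrac tu\Bigr)^{\gamma}\,du,
\]
with a prefactor that is finite and nonzero because $j-\gamma+1>0$. Applying the substitution $u=-t(z-1)$ (which sends $\tilde\Sigma$ to $\Sigma$, turns $(1-t/u)^\gamma$ into $(z/(z-1))^\gamma$ and $e^{Ntu}$ into $e^{Nt^2}e^{-Nt^2z}$, and satisfies $e^{-Nt^2z}(z/(z-1))^\gamma=z^{k}w_k(z)$ since $Nt^2=k/z_0$, cf.\ \eqref{def_V}), one finds that $\langle u^i,u^j\rangle$ equals $\alpha_i\beta_j$ times $\oint_{\Sigma}(1-z)^i z^{\,k-1-j}w_k(z)\,dz$, where $\alpha_i=t^{i}$ and $\beta_j$ collects the remaining $j$‑dependent factors; both are nonzero. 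Thus $G=\operatorname{diag}(\alpha)\,M'\,\operatorname{diag}(\beta)$, where $M':=\bigl[\oint_{\Sigma}(1-z)^i z^{\,k-1-j}w_k(z)\,dz\bigr]_{0\le i,j\le k-1}$.

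Finally, since $\oint_\Sigma z^p z^q w_k(z)\,dz=\nu_{p+q}$, expanding $(1-z)^i$ in the monomial basis (an invertible lower‑triangular change of coordinates, with determinant $\pm1$) and reindexing $j\mapsto k-1-j$ (a permutation) gives $M'=C\,[\nu_{i+j}]_{0\le i,j\le k-1}\,P$ with $\det C=\pm1$ and $\det P=\pm1$. Combining, $\det[\nu_{i+j}]_{0\le i,j\le k-1}$ equals a nonzero scalar times $\det G$, hence is nonzero, and therefore $\Pi_{k-1}$ is well defined. The one conceptual point — the ``main obstacle'' — is that the contour bilinear form $(f,g)\mapsto\oint_\Sigma f g\,w_k$ carries no positivity of its own (the weight is complex), so a direct argument is unavailable; the whole proof consists in observing that, via the dictionary of Theorem~\ref{theodbar} and the change of variable $u=-t(z-1)$, this determinant is equivalent to the manifestly positive planar Gram determinant $\det G$, after which only elementary bookkeeping with invertible change‑of‑basis matrices remains.
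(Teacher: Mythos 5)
Your proof is correct and follows essentially the same route as the paper: both arguments transport the Hankel determinant of contour moments, via the identity of Theorem~\ref{theodbar} together with the change of variable $u=-t(z-1)$, to the manifestly positive planar Gram determinant, up to nonzero scalar factors, a unimodular triangular change of basis, and a column permutation. The only cosmetic difference is that you start from the positive Gram matrix in the $u$-variable and push forward to $\Sigma$, while the paper begins with the contour determinant, performs the reflection $j\mapsto k-1-j$, and then reads the theorem backward to reach the planar determinant.
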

\begin{proof}
We have
\begin{equation}
\begin{split}
\det [\nu_{i+j}]_{0\leq i,j\leq k-1}& =  \det \left[\oint_{\Sigma} z^{i+j} \frac{e^{-Nt^2 z}}{z^k}\left(\frac{z}{z-1}\right)^{\gamma}dz\right]_{0\leq i,j\leq k-1}\\
&=(-1)^{k(k-1)/2}  \det \left[\oint_{\Sigma} z^{i-j} \frac{e^{-Nt^2 z}}{z}\left(\frac{z}{z-1}\right)^{\gamma}dz\right]_{0\leq i,j\leq k-1},
\end{split}
\end{equation}
where the last identity has been obtained by the reflection of the column index $j\to k-1-j$.
Due to Theorem \ref{theodbar} we have
\be
\int_{\C}\pi(z)(\bar{z}-1)^j |z-1|^{-2\gamma}e^{-Nt^2|z|^2}dA(z) =t^{2-2j-2\gamma}\frac{\pi \Gamma(j-\gamma+1)}{N^{j-\gamma+1}}\frac{1}{2\pi i}\oint_{\Sigma}\pi(z)\frac{e^{-Nt^2 z}}{z^{j+1}}\left(\frac{z}{z-1}\right)^{\gamma}dz\ ,
\ee
an hence the second determinant is given by 
\begin{equation}
\det \left[\oint_{\Sigma} z^{i-j} \frac{e^{-Nt^2 z}}{z}\left(\frac{z}{z-1}\right)^{\gamma}dz\right] = \det \left[\iint_\C z^i (\bar{z}-1)^j |z-1|^{-2\gamma}e^{-Nt^2|z|^2} \d A(z)\right] \prod_{j=0}^{k-1} \frac {2it^{2j+2\gamma-2}N^{j-\gamma+1}}{\Gamma(j-\gamma+1)}.\label{detcont}
\end{equation}
Finally, the determinant on the right-hand side is strictly  positive because 
\begin{equation}
\det \left[\iint_\C z^i (\bar{z}-1)^j |z-1|^{-2\gamma}e^{-Nt^2|z|^2} \d A(z)\right]  =  \det \left[\iint_\C z^i \bar{z}^j |z-1|^{-2\gamma}e^{-Nt^2|z|^2} \d A(z) \right] >0,
\end{equation}
where the equality follows from the fact that the columns of the two matrices are related by a unimodular triangular matrix, while the inequality follows from the positivity of the measure. Finally, since $\Gamma(z)$ has no zeros (and no poles since $j-\gamma+1>0$), the non-vanishing follows from \eqref{detcont}. 
\end{proof}

 Define the matrix 
\begin{equation}
\label{Ymatrix}
Y(z) = \begin{bmatrix}
\pi_{k}(z) & \dfrac{\strut1}{\strut2 \pi i}  \displaystyle{\int_{\Sigma}} \dfrac{\strut \pi_{k}(z')}{\strut z' - z} w_{k}(z') dz'\\
-{2 \pi i } \Pi_{k-1}(z) &- \displaystyle{\int_{\Sigma}} \dfrac{\strut\Pi_{k-1}(z')}{\strut z' - z} w_{k}(z') dz' 
\end{bmatrix}\ .
\end{equation}
It is easy to verify that the matrix $Y(z)$ is the \emph{unique} solution of the following Riemann-Hilbert problem  of  Fokas--Its--Kitaev-type \cite{FIK}:
\begin{enumerate}
\item Piecewise analyticity:
\be
\label{RH_Y0}
Y(z)  \text{ is analytic in } \C\setminus \Sigma \text{ and both limits } Y_{\pm}(z) \text{ exist along } \Sigma,
\ee
\item Jump on $\Sigma$:
\begin{equation}
\label{RH_Y11}
Y_{+}(z) = Y_{-}(z)\begin{pmatrix}1& w_k(z) \\ 0 & 1 \end{pmatrix}\ , \quad z \in \Sigma,
\end{equation}
\item Behaviour at $z=\infty$:
\begin{equation}
\label{RH_Y2}
Y(z) = \left(I+{\mathcal O}\left(\frac{1}{z}\right)\right)z^{k\sigma_3}\ ,\quad z \to \infty.
\end{equation}
\end{enumerate}
%

The last relation is obtained by noticing that if $\Pi_{k-1}$ is  given by \eqref{Pidet} then  the entry $Y_{22}(z)$ of the matrix $Y(z)$ in 
\eqref{Ymatrix} satisfies 
\begin{equation}
- \int_{\Sigma} \frac{\Pi_{k-1}(z')}{z' - z} w_{k}(z') dz' = z^{-k} \left(1 + \order{\frac{1}{z}}\right)\ ,\quad z \to \infty.
\end{equation}

\subsection{Initial undressing step}
In order to simplify the subsequent analysis, we define the following modified matrix:
\be
\label{Ytilde}
\tilde Y(z) := Y(z)\left(1-\frac{1}{z}\right)^{-\frac{\gamma}{2}\sigma_3} \qquad z \in \C\setminus (\Sigma\cup [0,1])\,.
\ee
%
This matrix-valued function  $\tilde Y(z) $ satisfies the following Riemann-Hilbert problem:
\begin{enumerate}
\item Piecewise analyticity:
\be
\tilde Y(z)\text{ is analytic in }\C\setminus(\Sigma\cup [0,1])
\ee
\item Jumps on $\Sigma$ and $[0,1]$:
\[
\tilde Y_{+}(z) = \tilde Y_{-}(z)\left\{
\begin{array}{cl}
\displaystyle \begin{pmatrix}1&  e^{-kV(z)} \\ 0 & 1 \end{pmatrix} &\quad  z \in \Sigma,\\
&\\
\displaystyle e^{-\gamma\pi i\sigma_3} & \quad  z \in (0,1),
\end{array}\right .
\]
with $V(z)$ as \eqref{def_V}.
\item Behaviour at $z=\infty$:
\be
\tilde Y(z) = \left(I+{\mathcal O}\left(\frac{1}{z}\right)\right)z^{k\sigma_3}\ ,\quad z \to \infty\ .
\ee
\item Endpoint behaviour at $z=0$ and at $z=1$:
\be
\tilde Y(z)z^{-\frac{\gamma}{2}\sigma_3} = \order{1} \quad z \to 0\ , \qquad \tilde Y(z)(z-1)^{\frac{\gamma}{2}\sigma_3} = \order{1} \quad z \to 1\ .
\ee
\end{enumerate}
The polynomials $\pi_k(z)$ is recovered from $\tilde Y(z)$ as
\[
\pi_k(z)=\tilde{Y}_{11}(z)\left(1-\frac{1}{z}\right)^{\frac{\gamma}{2}} .
\]

\section{Asymptotic analysis in the pre-critical case}
\label{sec:pre-crit}
In order to analyse the large $k$ behaviour of $\tilde{Y}(z)$ we use the Deift--Zhou nonlinear steepest descent method \cite{DeiftZhou}.
The first step to study the large $k$ behaviour of the matrix function $\tilde{Y}(z)$ is to make a transformation  $\tilde{Y}(z)\to U(z)$ so that the Riemann--Hilbert problem for $U(z)$ is  normalised to the identity as $|z|\to \infty$.
For this purpose we introduce a  contour $\mathcal{C}$ homotopically equivalent to $\Sigma$  in   $\C\setminus [0,1]$  and a function $g(z)$ analytic off $\mathcal{C}$.
Both the contour $\mathcal{C}$ and the function $g(z)$ will be determined later.
We assume that the function $g(z)$ is of the form
\begin{equation}
\label{g}
g(z)=\int_{\mathcal{C}}\log(z-s)d\nu(s),
\end{equation}
where $d\nu(s)$ is a positive  measure  with support on $\mathcal{C}$ such that
\[
\int_{\mathcal{C}}d\nu(s)=1.
\]
With this assumption clearly one has
\begin{equation}
\label{gasymp}
g(z)=\log z+\order{z^{-1}}\quad \mbox{az $|z|\to\infty$},
\end{equation}
where the logarithm is branched on the positive real axis.
\subsection{First transformation $\tilde{Y}\to U$}
Since $\mathcal{C}$ is homotopically equivalent to $\Sigma$ in $\C\setminus [0,1]$  we can deform the contour $\Sigma$ appearing in the Riemann--Hilbert problem for $\tilde{Y}$ to  $\mathcal{C}$. Define the modified matrix
\be
U(z) = e^{-k(\ell/2)\sigma_3}\tilde Y(z)e^{-kg(z)\sigma_3}e^{k(\ell/2)\sigma_3} \qquad z \in \C\setminus (\mathcal{C}\cup [0,1]),
\ee
where $\ell$ is a real number, to be determined below.
Then $U(z)$ solves the following RHP problem 
\begin{enumerate}
\item Piecewise analyticity:
\be
U(z)\text{ is analytic in }\C\setminus(\mathcal{C}\cup [0,1]).
\ee
\item Jump discontinuity on $\mathcal{C}$:
\be
\label{jU1}
U_{+}(z) = U_{-}(z)\begin{pmatrix}e^{-k(g_+-g_-)}&  e^{ k(g_++g_--\ell-V)}\\ 0 & e^{k(g_+-g_-)} \end{pmatrix}\ ,
\ee
\item Jump discontinuity on $(0,1)$:
\be
\label{jU2}
U_{+}(z) = U_{-}(z)e^{-\gamma\pi i\sigma_3}\ ,
\ee
\item Endpoint behaviour at $z=0$ and $z=1$:
\be
\label{jU3}
U(z)z^{-\frac{\gamma}{2}\sigma_3} = \order{1} \quad z \to 0\ , \qquad U(z)(z-1)^{\frac{\gamma}{2}\sigma_3} = \order{1} \quad z \to 1\ .
\ee
\item Large $z$ boundary behaviour:
\be
\label{jU4}
U(z) = I+{\mathcal O}\left(\frac{1}{z}\right)\ ,\quad z \to \infty\ .
\ee
\end{enumerate}
The polynomial $\pi_k(z)$ is determined  from $U(z)$ by
\be
\label{pi_U}
\pi_k(z)=U_{11}(z)e^{kg(z)}\left(1-\frac{1}{z}\right)^{\frac{\gamma}{2}} .
\ee
\subsubsection{The choice of $g$-function}
In order to  determine the function $g$ and the contour $\mathcal{C}$ we   impose that the  jump matrix \eqref{jU1} becomes purely oscillatory for large $k$.  This is accomplished if  the following conditions are satisfied:
\begin{equation}
\label{eq_g}
\begin{split}
g_+(z)+g_-(z)-\ell-V(z)&=0\\
\Re(g_+(z)-g_-(z))&=0
\end{split}\qquad z \in \mathcal{C}.
\end{equation}
Next we show that we can find  a function $g$ and a contour $\mathcal{C}$ that satisfy the conditions \eqref{eq_g}.
For the purpose we use the following elementary result in the theory of boundary value problems.
\begin{lemma}[\cite{Gakhov}, p.~78]
\label{lemmaGakhov}
Let $L$ be a simple closed contour dividing the complex plane in two regions $D_+$ and $D_-$ where $D_+=\inte(L)$ and $D_-=\exte(L)$. Suppose that a function $\psi(\zeta)$ defined on $L$  can be represented  in the form
\[
\psi(\zeta)=\psi_+(\zeta)+\psi_-(\zeta),\quad \zeta\in L
\]
where $\psi_+(\zeta)$ is  the boundary value  of a function $\psi_+(z)$ analytic for $z\in D_+$ and $\psi_-(\zeta)$ is the boundary value of a function $\psi_-(z)$  analytic for $z\in D_-$ and such that $\psi_-(\infty)=0$.
Then the Cauchy integral
\[
\Phi(z)=\dfrac{1}{2\pi i}\int_{L}\dfrac{\psi(\zeta)}{\zeta-z}d\zeta, 
\]
can be represented in the form
\begin{equation}
\label{eqPhi}
\begin{split}
&\Phi_+(z)=\psi_+(z),\quad \text{ for $z\in D_+$},\\
&\Phi_-(z)=-\psi_-(z), \quad \text{for $z\in D_-$}.
\end{split}
\end{equation}
The boundary values of the function $\Phi$ on the two sides of the contour $L$  then satisfy
\begin{equation}
\begin{split}
\label{Phi2}
&\Phi_+(\zeta)+\Phi_-(\zeta)=\psi_+(\zeta)-\psi_-(\zeta)\\
&\Phi_+(\zeta)-\Phi_-(\zeta)=\psi_+(\zeta)+\psi_-(\zeta)
\end{split}
\qquad \zeta \in L.
\end{equation}
\end{lemma}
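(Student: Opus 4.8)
The plan is to exploit the linearity of the Cauchy transform in $\psi$ and to evaluate separately the contributions of $\psi_+$ and $\psi_-$ using only Cauchy's integral theorem and formula, keeping careful track of the orientation of $L$ and of the behaviour at $z=\infty$. Write
\[
\Phi(z)=\Phi^{(+)}(z)+\Phi^{(-)}(z),\qquad \Phi^{(\pm)}(z):=\frac{1}{2\pi i}\int_L\frac{\psi_\pm(\zeta)}{\zeta-z}\,d\zeta .
\]
First I would handle $\Phi^{(+)}$. Since $\psi_+$ extends analytically to $D_+=\inte(L)$ and continuously (indeed, in the Gakhov setting, H\"older-continuously, which is tacitly assumed) up to $L$, Cauchy's integral formula gives $\Phi^{(+)}(z)=\psi_+(z)$ for $z\in D_+$, while for $z\in D_-$ the integrand $\zeta\mapsto\psi_+(\zeta)/(\zeta-z)$ is analytic throughout the simply connected region $D_+$, so Cauchy's theorem gives $\Phi^{(+)}(z)=0$.

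The companion computation for $\Phi^{(-)}$ is where the normalisation $\psi_-(\infty)=0$ enters, and it is the only point that requires genuine care. Since $\psi_-$ is analytic on $D_-=\exte(L)$ up to and including $z=\infty$ with $\psi_-(\infty)=0$, we have $\psi_-(\zeta)/(\zeta-z)=\order{1/\zeta^2}$ as $\zeta\to\infty$. For $z\in D_+$, deforming $L$ outward to a large circle $C_R$ inside $D_-$ (legitimate, since the integrand is analytic in the annular region between them) gives $\Phi^{(-)}(z)=\frac1{2\pi i}\int_{C_R}\psi_-(\zeta)(\zeta-z)^{-1}d\zeta\to0$ as $R\to\infty$. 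For $z\in D_-$, I would instead apply the residue theorem on the region bounded by $C_R$ and $L$: the pole at $\zeta=z$ contributes $\psi_-(z)$, but $L$ is now traversed with the orientation opposite to the one it carries as the boundary of $D_+$, and the $C_R$-integral again vanishes as $R\to\infty$; hence $\Phi^{(-)}(z)=-\psi_-(z)$. Adding the two pieces yields precisely \eqref{eqPhi}.

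Finally, to reach \eqref{Phi2} I would pass to the boundary values on $L$. From $\Phi(z)=\psi_+(z)$ on $D_+$ together with the continuity of $\psi_+$ up to $L$ one gets $\Phi_+(\zeta)=\psi_+(\zeta)$, and symmetrically $\Phi_-(\zeta)=-\psi_-(\zeta)$ from the $D_-$ side. Subtracting gives $\Phi_+(\zeta)-\Phi_-(\zeta)=\psi_+(\zeta)+\psi_-(\zeta)$, which, since $\psi=\psi_++\psi_-$, is nothing but the Sokhotski--Plemelj jump relation for the Cauchy integral of $\psi$; adding gives $\Phi_+(\zeta)+\Phi_-(\zeta)=\psi_+(\zeta)-\psi_-(\zeta)$. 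The main obstacle is purely bookkeeping: fixing consistently the orientation of $L$ as the boundary of the exterior domain and the contribution of the point at infinity; once these sign conventions are pinned down, the argument is a routine application of Cauchy's theorem, with the H\"older regularity implicit in the hypotheses ensuring that all the boundary limits exist.
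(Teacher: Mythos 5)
The paper does not supply a proof of this lemma: it is quoted verbatim from Gakhov's book (the citation \cite{Gakhov}, p.~78, is in the lemma heading), so there is no internal argument to compare against. Your proof is correct and is, in substance, the standard one: split the Cauchy integral by linearity into the contributions of $\psi_+$ and $\psi_-$, treat the $\psi_+$ piece by Cauchy's formula (for $z\in D_+$) and Cauchy's theorem (for $z\in D_-$), and handle the $\psi_-$ piece by deforming $L$ to a large circle $C_R$, using $\psi_-(\infty)=0$ (hence $\psi_-(\zeta)=\order{1/\zeta}$, so $\psi_-(\zeta)/(\zeta-z)=\order{1/\zeta^2}$) to kill the $C_R$ contribution, picking up the residue at $\zeta=z$ with the correct sign when $z\in D_-$. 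The orientation bookkeeping you flag is exactly the one delicate point, and you handle it correctly: $L$ is positively oriented as the boundary of $D_+$, so as the inner boundary of the annular region between $L$ and $C_R$ it enters with a minus sign, producing $\Phi^{(-)}(z)=-\psi_-(z)$ for $z\in D_-$. Passing to boundary values and adding/subtracting then gives \eqref{Phi2}, and your observation that the difference relation is just Sokhotski--Plemelj applied to $\psi=\psi_++\psi_-$ is a useful consistency check.
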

Now we apply Lemma~\ref{lemmaGakhov} to  the function $g'(z)$ that satisfies the differentiated boundary condition
\[
g_+'(\zeta)+g'_-(\zeta)=V'(\zeta)=\dfrac{1}{z_0}+\dfrac{1}{\zeta},\quad \zeta\in\mathcal{C}.
\]
The functions $\psi_\pm(z)$  of Lemma \ref{lemmaGakhov} are $\psi_+(z)=\dfrac{1}{z_0}$ and $\psi_-(z)=-\dfrac{1}{z}$ and therefore the function $g'(z)$ takes the form
\be
\label{gprime}
g'(z)=\dfrac{1}{ 2\pi i }\int_{\mathcal{C}}\dfrac{  \frac{1}{\tau}-\frac{1}{z_0}}{z-\tau}d\tau=
\left\{
\begin{array}{ll}
\displaystyle \frac{1}{z_0} & z \in \inte(\mathcal{C})\\
&\\
\displaystyle \frac{1}{z} & z \in \exte(\mathcal{C})\,,
\end{array}
\right.
\ee
so that the measure $d\nu$ in \eqref{g} is given by
\be
\label{dnu}
d\nu(z)=\dfrac{1}{2\pi i}\left(\dfrac{1}{z}-\dfrac{1}{z_0}\right)dz,\quad z\in\mathcal{C}.
\ee
Integrating the  relation  \eqref{gprime} and using \eqref{gasymp} one has
\be
\label{gint}
g(z)=
\left\{
\begin{array}{ll}
\displaystyle \frac{z}{z_0}+\ell & z \in \inte(\mathcal{C})\\
&\\
\displaystyle \log z & z \in \exte(\mathcal{C})\ 
\end{array},
\right.
\ee
where $\ell $ is an integration constant and $\log z$ is analytic in $\mathbb{C}\backslash \mathbb{R}^+$. Performing the integral in \eqref{g} for a specific value of $z\in\mathcal{C}$, say $z = 0$  and deforming $\mathcal{C}$ to a circle of radius $r$ one can determine the value of $\ell$:
\begin{equation}
\label{def_l}
\ell=\log r-\dfrac{r}{z_0},\quad r>0.
\end{equation}
The total integral of  $d\nu$ in \eqref{dnu} is normalized to one on any closed contour containing the point $z=0$. However we have to define the contour $\mathcal{C}$ so that $d\nu(z)$ is a real and positive measure along $\mathcal{C}$.
To this end we introduce  the function
\be
\label{def_phir}
\phi_r(z)= 
\left\{
\begin{array}{ll}
-2g(z)+V(z)+\log r-\dfrac{r}{z_0}=\log z-\dfrac{z}{z_0}-\log r+\dfrac{r}{z_0}& z \in \inte(\mathcal{C})\backslash[0,r]\\
&\\
2g(z)-V(z)-\log r+\dfrac{r}{z_0}=\log z-\dfrac{z}{z_0}-\log r+\dfrac{r}{z_0}& z \in \exte(\mathcal{C})\backslash(r,\infty)\ .
\end{array}
\right.
\ee
Observe that $\phi_r(z)$ is analytic  across $\mathcal{C}$, namely $\phi_{r+}(z)=\phi_{r-}(z)$, $z\in\mathcal{C}$ and that 
\be
\label{def_phir1}
\phi_r(z)=\dfrac{\phi_{r+}(z)+\phi_{r-}(z)}{2}=-g_+(z)+g_-(z),\quad z\in\mathcal{C}.
\ee
The next identity follows in a trivial way
\[
d\nu(z)=\dfrac{1}{2\pi i} d\phi_r(z).
\]

Imposing the relation \eqref{eq_g} on  \eqref{def_phir1} one has 
\be
\Re(\phi_r(z))=\log |z|-\frac{\Re(z)}{z_0}-\log r+\dfrac{ r}{z_0}=0.
\ee
This equation defines a family of contours which are closed for
$|z|\leq r$ (easy to verify). Since the function $-\log r+\dfrac{ r}{z_0}$ for $r>0$  has a single minimum at $r=z_0$ and diverge to $+\infty$ for $r\to 0$ and $r\to\infty$,  it is sufficient to consider only  the values $0<r\leq z_0$.

We define the contour $\mathcal{C}_r$  associated to $r$ as  (see Figure~\ref{fig:tikz_jump_Tam})
\begin{equation}
\label{def_Gammar}
\mathcal{C}_r=\{z\in\mathbb{C}\ \colon\ \Re(\phi_r(z))=0,\;\;|z|\leq z_0\},\quad  0< r\leq z_0.
\end{equation}
\begin{wrapfigure}{lhtb!}{0.4\textwidth}
\includegraphics[width=0.4\textwidth]{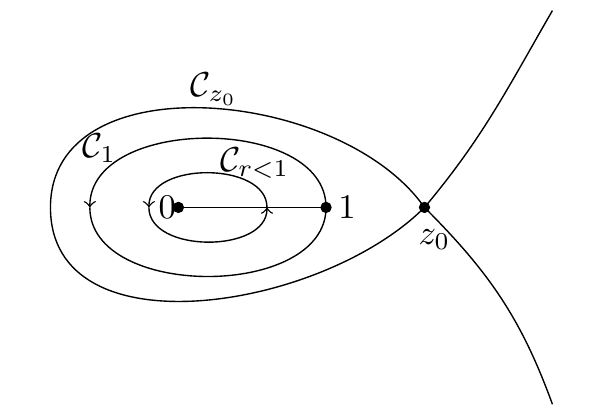}
\caption{The family of contours $\mathcal{C}_r$ for three values of $r$:  $r=z_0$, $r=1$ and $r<1$. For $r=z_0$, the region $|z|>z_0$ is plotted as well.}
\label{fig:tikz_jump_Tam}
\end{wrapfigure}
Since $ \partial_x\Re\phi_r|_{y=0}=\dfrac{|x|}{x^2}-\dfrac{1}{z_0}>0$, for $|x|<z_0$,  $\Re\phi_r(z)$ is negative inside $\mathcal{C}_r$.
Note that the point $z=0$ lies inside $\mathcal{C}_r$ since $\Re\,\phi(z)\to-\infty$ for $|z|\to 0$. Furthermore $\mathcal{C}_r$ intersects the real line in the points $z=r$ and $z\in(-r,0)$. Indeed $\Re\,\phi(-r)>0$ which shows that the point $z=-r$ lies outside $\mathcal{C}_r$.

\begin{lemma}
\label{lemma1b}
The a-priori complex measure $\nu$ in \eqref{dnu}  is a probability measure on the contour $\mathcal{C}_r$ defined in \eqref{def_Gammar} for $0<r\leq z_0$.
\end{lemma}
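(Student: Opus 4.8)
\emph{Proof plan.} The strategy is to verify directly the two defining properties of a probability measure --- total mass one and non-negativity --- by exploiting the identity $d\nu=\frac{1}{2\pi i}\,d\phi_r$ recorded just above, where $\phi_r$ is the function in \eqref{def_phir} and $d\nu$ is as in \eqref{dnu}. The point is that on $\mathcal{C}_r$ the real part of $\phi_r$ is constant, which makes $d\nu$ real, while the total increment of its imaginary part controls both the sign and the total mass.

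First I would note that on $\mathcal{C}_r$, by the very definition \eqref{def_Gammar}, one has $\Re\,\phi_r\equiv 0$, hence $\Re(d\phi_r)=d(\Re\,\phi_r)=0$ along $\mathcal{C}_r$ and therefore $d\phi_r=i\,d(\Im\,\phi_r)$ there. Consequently $d\nu=\frac{1}{2\pi}\,d(\Im\,\phi_r)$ is a real $1$-form on $\mathcal{C}_r$, and it only remains to fix its sign for the counterclockwise orientation and to evaluate its total integral. For the total mass, $\mathcal{C}_r$ is a Jordan curve encircling $z=0$ but not $z=z_0$ (this, and the fact that its interior is $\{\Re\,\phi_r<0\}$, were established in the discussion preceding the statement), so by the residue theorem
\[
\int_{\mathcal{C}_r}d\nu=\frac{1}{2\pi i}\oint_{\mathcal{C}_r}\Bigl(\frac1z-\frac1{z_0}\Bigr)dz=\frac{1}{2\pi i}\oint_{\mathcal{C}_r}\frac{dz}{z}=1,
\]
since $1/z_0$ is entire; equivalently, the total increment of $\Im\,\phi_r$ around $\mathcal{C}_r$ equals $2\pi$.

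For positivity I would use that $\mathcal{C}_r$ is a level curve of the harmonic function $\Re\,\phi_r$ carrying no critical point of $\phi_r$: indeed $\phi_r'(z)=\frac1z-\frac1{z_0}$ vanishes only at $z=z_0$, and (since $z_0>0$) $\phi_r(z_0)=-\log(r/z_0)+r/z_0-1$, which is strictly positive for $0<r<z_0$ and vanishes only when $r=z_0$; thus $z_0\notin\mathcal{C}_r$ for $r<z_0$. Hence for $r<z_0$ the curve $\mathcal{C}_r$ is regular and is an orthogonal trajectory of the family $\{\Im\,\phi_r=\mathrm{const}\}$, so $\Im\,\phi_r$ is strictly monotone along it. To fix the sign, observe that the interior of $\mathcal{C}_r$ is $\{\Re\,\phi_r<0\}$, so the counterclockwise unit tangent $T$ is the outward normal $\nabla\,\Re\,\phi_r$ rotated by $+\pi/2$; since the frame $(\nabla\,\Re\,\phi_r,\nabla\,\Im\,\phi_r)$ is positively oriented wherever $\phi_r'\neq0$, $T$ is a positive multiple of $\nabla\,\Im\,\phi_r$, whence $\frac{d}{ds}\Im\,\phi_r=|\nabla\,\Im\,\phi_r|=|\phi_r'|>0$ along $\mathcal{C}_r$. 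Therefore $d\nu=\frac{1}{2\pi}\,d(\Im\,\phi_r)>0$, and combined with the mass computation this proves the claim for $0<r<z_0$. For the critical value $r=z_0$ the only extra feature is that $z_0$ lies on $\mathcal{C}_{z_0}$ as an isolated corner; removing this single point leaves a regular arc on which the argument above still applies, and a point carries zero $\nu$-measure, so $\nu\geq0$ as a measure with unchanged total mass.

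The only slightly delicate ingredient is the geometry of $\mathcal{C}_r$ --- that it is a single Jordan curve encircling the origin (and not $z_0$), with interior exactly $\{\Re\,\phi_r<0\}$ and no critical point of $\phi_r$ on it apart from the corner at $r=z_0$ --- but this is already furnished by the analysis carried out before the lemma (the sign of $\partial_x\Re\,\phi_r$ on the real axis, the crossings $z=r$ and $z\in(-r,0)$, and $\Re\,\phi_r\to-\infty$ as $z\to0$). Granting that, the remaining content is just the orientation bookkeeping for $d(\Im\,\phi_r)$, which is routine.
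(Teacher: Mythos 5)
Your proof is correct and reaches the same conclusion, but the positivity step is handled by a different mechanism than the paper's. The paper introduces $\psi_r = e^{\phi_r}$ and cites Szeg\H o for the fact that this is a univalent conformal map from $\inte(\mathcal{C}_r)$ onto the unit disk; then $d\nu=\frac{1}{2\pi i}\frac{d\psi_r}{\psi_r}$ is simply the normalized uniform (Haar) measure on the circle, whence positivity. Your argument is more elementary and self-contained: you use the Cauchy--Riemann relations to see that the frame $(\nabla\Re\phi_r,\nabla\Im\phi_r)$ is positively oriented off critical points, combine this with the fact that the interior of $\mathcal{C}_r$ is $\{\Re\phi_r<0\}$ to identify the counterclockwise tangent with a positive multiple of $\nabla\Im\phi_r$, and conclude $d(\Im\phi_r)>0$ along the oriented curve. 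This avoids invoking any external univalence result, at the cost of a bit of orientation bookkeeping and an explicit check that $\phi_r'$ has no zero on $\mathcal{C}_r$ for $r<z_0$. Your observation that $\phi_r(z_0)=-\log(r/z_0)+r/z_0-1>0$ for $r<z_0$ is the correct way to see this, and your separate treatment of the limiting case $r=z_0$ (the critical point of $\phi_{z_0}$ lands on the curve as an isolated corner, which carries no $\nu$-mass) is sound, whereas the paper does not discuss this endpoint explicitly. Overall a valid alternative: essentially you make the winding-and-orientation argument that underlies the paper's conformal-map shortcut entirely explicit.
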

\begin{proof}
By the Residue Theorem the measure  $d\nu$ has total mass $1$ on any contour $\mathcal{C}_r$. Since $\Re\phi_r=0$ on the contour $\mathcal{C}_r$ it follows that the measure $d\nu=\dfrac{1}{2\pi i}d\phi_r$ is real on the contour $\mathcal{C}_r$.
In order to show that the measure is positive on the contour $\mathcal{C}_r$ we introduce 
 the variable 
\[
\psi_r=e^{\phi_r},
\] 
 On the contour $\mathcal{C}_r$ we have that $ |\psi_r|=1$, so that in the $\psi_r$--plane the contour $\mathcal{C}_r$ is  mapped to a circle of radius one  and the map $\psi_r=e^{\phi_r}$ is a univalent conformal map from the interior of $\mathcal{C}_r$ to the interior of a circle \cite{Szego}.
 We have 
 \[
d\nu=\dfrac{1}{2\pi i }\dfrac{d \psi_r}{\psi_r},
 \] 
 which shows that the measure $d\nu$  in the variable $\psi_r$ is a uniform  measure on the circle and therefore $d\nu$ is a positive measure on each contour $\mathcal{C}_r$.
\end{proof}\\
 We are now ready to prove Lemma~\ref{lemma1} and Lemma~\ref{lemma_balayage} announced  in the Introduction.\\
 
 \noindent
 {\bf Proof of Lemma~\ref{lemma1}}.
By using the Residue Theorem, it is straightforward to check that  $d\hat{\nu}$ is a probability measure on any contour $\hat{\mathcal{C}}_r$ defined in \eqref{Gamma_r}.
In the post-critical case such curve $\hat{\mathcal{C}}_r $ has  $s$ connected components $\hat{\mathcal{C}}^j_r$,  $j=0,\dots,s-1$, each  encircling exactly one root of the equation  $\lb^s=t$. In the pre-critical case we denote with the same symbol $\hat{\mathcal{C}}^j_r$ the connected components of $\hat{\mathcal{C}}_r\setminus\{0\}$ together with $\{0\}$.

With the appropriate choice of the branch of the $s^{\text{th}}$ root, any $\hat{\mathcal{C}}_{r}^{j}$ can be parametrised by the map $\lambda(z):=(-t(z-1))^{\frac{1}{s}}$, which respects the orientations of $\mathcal{C}$ and $\hat{\mathcal{C}}^{j}_{r}$ . 
Since $\phi(z)=\hat{\phi}(\lambda(z))$, equation \eqref{nur} implies
\[ d\hat{\nu}(\lambda(z))=\frac{1}{s}d\nu(z)\ .\]
This relation, together with Lemma~\ref{lemma1b}, where we proved that $d\nu(z)$ is real and positive on $\mathcal{C}_{r}$, implies that also $d\hat{\nu}$ is real and positive on any $\hat{\mathcal{C}}^{j}_{r}$ and hence on the whole $\hat{\mathcal{C}}_{r}$.
\hfill{\bf Q.E.D.}\\

\noindent {\bf Proof of Lemma~\ref{lemma_balayage}}.
By using the Residue Theorem we first calculate the r.h.s. of \eqref{balayage00}  obtaining
\be
\label{Right}
\int_{\hat{\mathcal{C}}}\dfrac{d\hat{\nu}(\eta)}{\lambda-\eta}=\dfrac{1}{2\pi i }\int_{\hat{\mathcal{C}}}\dfrac{\eta^{s-1}}{\lambda-\eta}\left(\dfrac{1}{\eta^s-t}+\dfrac{1}{z_0t}\right)d\eta=\dfrac{\lb^{s-1}}{\lb^s-t}.
\ee
Therefore  \eqref{balayage00} is equivalent to
\[
\dfrac{s\lb^{s-1}(\bar{\lb}^s-t)}{T}=\dfrac{\lb^{s-1}}{\lb^s-t},\quad\mbox{or}\quad (\bar{\lb}^s-t)(\lb^s-t)=\dfrac{T}{s}=t_c^2,
\]
which is equivalent to the equation for the  boundary of $D$ defined in \eqref{domainmu}.
Next we show that the l.h.s. of \eqref{balayage1} is equal to its r.h.s. By using Stokes' Theorem, the relation \eqref{Schwartz} and the Residue Theorem one obtains
\begin{align}
&\int_{D}\dfrac{d\mu^*(\eta)}{\lambda-\eta}=\dfrac{1}{2\pi i  t_c^2}\int_{\partial D}\dfrac{\eta^{s-1}\bar{\eta}^sd\eta}{\lb-\eta}=\dfrac{1}{2\pi i  t_c^2}\int_{\partial D}\dfrac{\eta^{s-1}S(\eta)d\eta}{\lb-\eta}
=\dfrac{\lb^{s-1}}{\lb^s-t}
\end{align}
which coincides with \eqref{Right}.
\hfill{\bf Q.E.D.}

\subsubsection{Choice of the contour}

\begin{wrapfigure}[17]{lhtb!}{0.4\textwidth}
\includegraphics[width=0.4\textwidth]{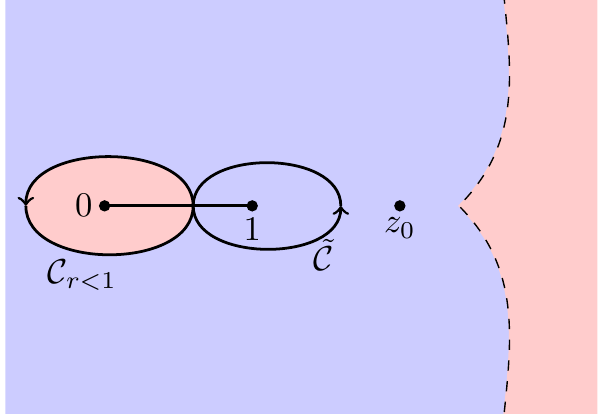}
\caption{The   contour  $\mathcal{C}_{r<1}\cup\tilde{\mathcal{C}}$ that is homotopic to $\Sigma$. The region where $\Re \phi(z)<0$ is coloured in red.}
\label{fig:tikz_jump_Cont}
\end{wrapfigure}
   We now argue that the relevant contour on which the zeros of the orthogonal polynomials $\pi_k(z)$  accumulate in the pre-critical case is given by the level $r=1$.
    The family of contours $0<r<1$ can be immediately ignored because in this case $\Sigma$ has to be deformed to two contours 
    $\Sigma\simeq \mathcal{C}_{r<1}\cup\tilde{\mathcal{C}}$ as shown in Figure~\ref{fig:tikz_jump_Cont}.
On the contour $\tilde{\mathcal{C}}$ we have  $\Re(\phi)>0$  and it is not possible to perform any contour deformation to get exponentially small terms in the jump matrix
\eqref{jU1} as $k\to\infty$.

For $1< r\leq  z_0$ the original contour $\Sigma$ can be deformed homotopically to  the contours $\mathcal{C}_r$. In this case the asymptotic expansion  of  the matrix entry $Y_{11}(z)=\pi_k(z)$ as $k\to \infty$, does not give any sub-leading contribution.
For this reason we discard this case and we omit the corresponding  asymptotic analysis.
The only possible case remains  $r=1$.
In this case  the  analysis as $k\to\infty$ to the matrix entry  $Y_{11}(z)=\pi_k(z)$  which will performed in subsections~\ref{Sec3.2} to \ref{Sec3.5},  
 gives leading and sub-leading terms. The comparison of these two terms enable us to locate the zeros of the polynomial $\pi_k(z)$ on a contour that lies within a distance $\order{\log k/k}$ from the contour $\mathcal{C}_{r=1}$.

So for the reasons explained above, we are going to perform the asymptotic analysis of the RH-problem \eqref{jU1}-\eqref{jU4} by deforming the contour $\Sigma$ to the contour $\mathcal{C}_{r=1}$. For simplicity we denote this contour by $\mathcal{C}$:
\begin{equation}
\label{def_Gamma1}
\mathcal{C}=\left\{z\in\mathbb{C}\ \colon\ \log |z|-\frac{\Re(z)}{z_0}+\dfrac{ 1}{z_0}=0,\;\;|z|\leq 1\right\} 
\end{equation}
\begin{wrapfigure}{lhtb!}{0.3\textwidth}
\includegraphics[width=0.3\textwidth]{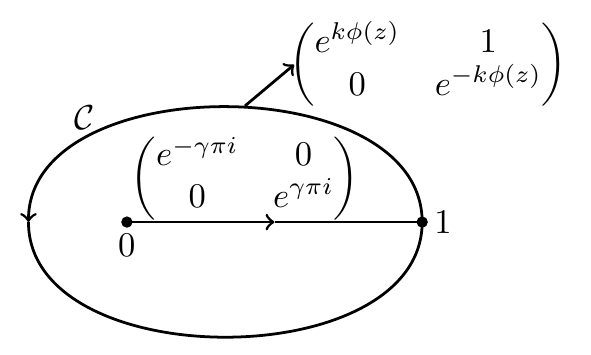}
\caption{The jump matrices for $U(z)$}
\label{fig:jump_U}
\end{wrapfigure}
Also, for simplicity, the function $\phi_{r=1}$ will be referred to as $\phi$ below:
\be
\label{def_phi}
\phi(z)=\log z-\dfrac{z}{z_0}+\dfrac{1}{z_0},
\ee
where $\log z$ is analytic in $\mathbb{C}\backslash (-\infty,0]$.
With this choice of contour, the  jump matrices of the RH problem \eqref{jU1}-\eqref{jU4}  for the matrix $U(z)$  are summarized in  Figure \ref{fig:jump_U}.

\subsection{The second transformation $U \mapsto T$ \label{Sec3.2} }
Consider two extra loops $\mathcal{C}_{i}$ and $\mathcal{C}_{e}$ as shown in Figure \ref{fig:jump_T}.
These define new domains $\Omega_{0}$, $\Omega_{1}$, $\Omega_{2}$ and $\Omega_{\infty}$.
Define the new matrix-valued function
\be
\label{def_T}
T(z)
=\left\{
\begin{array}{ll}
U(z) & z \in \Omega_{\infty}\cup \Omega_0\\
U(z)\begin{pmatrix}1&  0 \\ -e^{k\phi(z)} & 1 \end{pmatrix} & z \in \Omega_{1}\\
U(z)\begin{pmatrix}1&  0 \\ e^{-k\phi(z)} & 1 \end{pmatrix} & z \in \Omega_{2}\ .
\end{array}
\right.
\ee
Then $T(z)$ satisfies the following Riemann--Hilbert problem:\\

\begin{enumerate}
\item Piecewise analyticity:
\be
T(z) \text{ is analytic in }\C \setminus\left(\mathcal{C}_{e}\cup \mathcal{C} \cup \mathcal{C}_{i} \cup [0,1]\right) 
\ee
\end{enumerate}

\begin{enumerate}
\setcounter{enumi}{1}
\item Jump discontinuities on $\Sigma_T= \mathcal{C}_{e}\cup \mathcal{C} \cup \mathcal{C}_{i} \cup [0,1]$:
\be
T_{+}(z)=T_{-}(z)v_{T}(z),\quad z\in\Sigma_T,
\ee
where
\be
v_{T}(z)=\left\{
\begin{array}{cl}
\begin{pmatrix}0&1\\-1&0\end{pmatrix}  & z \in \mathcal{C}\\
\begin{pmatrix}1& 0 \\ e^{-k\phi(z)} & 1 \end{pmatrix}  & z \in \mathcal{C}_{e}\\
\begin{pmatrix}1& 0 \\ e^{k\phi(z)} & 1 \end{pmatrix}  & z \in \mathcal{C}_{i}\\
e^{-\gamma\pi i\sigma_3}  & z \in (0,1)\,.
\end{array}
\right.
\ee
\item Endpoint behaviour at $z=0$ and $z=1$:
\be
T(z)z^{-\frac{\gamma}{2}\sigma_3} = \order{1} \quad z \to 0\ , \qquad T(z)(z-1)^{\frac{\gamma}{2}\sigma_3} = \order{1} \quad z \to 1\ .
\ee
\item Large $z$ boundary behaviour:
\be
T(z) = I+{\mathcal O}\left(\frac{1}{z}\right)\ ,\quad z \to \infty\ .
\ee
\end{enumerate}
Note that $e^{k\phi(z)}$ is analytic in $\C\setminus \{0\}$\ .
The important feature of this Riemann--Hilbert problem  is that the jumps are either constant or they tend to the identity matrix as $k \to\infty$ at an exponential rate.
\begin{proposition} There exists a constant $c_0>0$ so that 
\[
v_T(z)=I+\order{e^{-c_0k}}\quad \mbox{as $k\to\infty$}
\]
uniformly for $z\in\mathcal{C}_e\cup\mathcal{C}_i\backslash {\cal U}_{1}$, where ${\cal U}_{1}$ is a small neighbourhood of $z=1$.
\end{proposition}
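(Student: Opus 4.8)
The plan is to show that the off-diagonal entries $e^{\mp k\phi(z)}$ appearing in the jump matrices $v_T(z)$ on $\mathcal{C}_e$ and $\mathcal{C}_i$ decay exponentially in $k$. Everything hinges on the sign of $\Re\phi(z)$: on $\mathcal{C}_e$ we need $\Re\phi(z)>0$ (so that $e^{-k\phi(z)}\to 0$), and on $\mathcal{C}_i$ we need $\Re\phi(z)<0$ (so that $e^{k\phi(z)}\to 0$). The key structural fact, established right before the statement, is that $\mathcal{C}=\mathcal{C}_{r=1}$ is the zero set of $\Re\phi(z)$ inside $\{|z|\le 1\}$, that $\Re\phi(z)<0$ strictly inside $\mathcal{C}$, and that $\Re\phi(z)>0$ outside. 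So the whole point is to place the auxiliary loops $\mathcal{C}_i$ and $\mathcal{C}_e$ on the correct sides of $\mathcal{C}$.

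**Main steps.** First I would recall from \eqref{def_phi} that $\phi(z)=\log z - z/z_0 + 1/z_0$ is harmonic (as a real part) away from the branch cut, and that $\Re\phi$ vanishes exactly on $\mathcal{C}$ by construction. Using the sign computation $\partial_x\Re\phi|_{y=0} = |x|/x^2 - 1/z_0 > 0$ for $|x|<z_0$ recorded just above, together with the fact that $\Re\phi\to-\infty$ as $z\to 0$, one gets that $\Re\phi<0$ on the bounded component $\inte(\mathcal{C})$ and $\Re\phi>0$ on a neighbourhood of $\mathcal{C}$ in $\exte(\mathcal{C})$. Second, I would invoke the choice of the loops (Figure~\ref{fig:jump_T}): $\mathcal{C}_i$ is taken to lie in $\Omega_1\subset\inte(\mathcal{C})$ and $\mathcal{C}_e$ in $\Omega_2\subset\exte(\mathcal{C})$, both at a fixed positive distance from $\mathcal{C}$, except that near $z=1$ the loops must squeeze together (since $\mathcal{C}$ passes through $z=1$ and the branch cut $[0,1]$ is there). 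Third, compactness: on the set $\mathcal{C}_e\setminus\mathcal{U}_1$ the continuous function $\Re\phi(z)$ is strictly positive on a compact set, hence bounded below by some $c_0>0$; likewise $-\Re\phi(z)\ge c_0$ on $\mathcal{C}_i\setminus\mathcal{U}_1$ after possibly shrinking $c_0$. Then $|e^{-k\phi(z)}| = e^{-k\Re\phi(z)}\le e^{-c_0 k}$ on $\mathcal{C}_e\setminus\mathcal{U}_1$, and $|e^{k\phi(z)}| = e^{k\Re\phi(z)}\le e^{-c_0 k}$ on $\mathcal{C}_i\setminus\mathcal{U}_1$, which is exactly the claimed estimate $v_T(z)=I+\order{e^{-c_0 k}}$.

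**Main obstacle.** The only delicate point is the behaviour near $z=1$, which is precisely why the neighbourhood $\mathcal{U}_1$ is excised from the statement. There $\mathcal{C}$, the two auxiliary loops, and the branch cut $[0,1]$ all come together, so $\Re\phi$ is not bounded away from zero and the exponential decay degenerates; this is the reason a local parametrix at $z=1$ will be needed later, and it is legitimately outside the scope of this proposition. A secondary (but routine) point is to make sure the loops $\mathcal{C}_i,\mathcal{C}_e$ can actually be drawn homotopic to $\Sigma$ while staying on the correct sides of $\mathcal{C}$ away from $z=1$ — this follows from the topology of the region $\{\Re\phi<0\}$ described above (connected, containing $0$, bounded by $\mathcal{C}$) together with the fact that $\Re\phi>0$ in a full exterior neighbourhood of $\mathcal{C}\setminus\{1\}$. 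So the proof is essentially: identify the sign regions of $\Re\phi$, place the loops accordingly, and apply compactness; I would present it in roughly that order, flagging the $z=1$ exclusion as the sole place where uniformity fails.
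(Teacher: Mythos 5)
Your proposal is correct and follows the same approach as the paper: identify the sign of $\Re\phi$ in $\inte(\mathcal{C})$ and $\exte(\mathcal{C})$, note that the loops $\mathcal{C}_i,\mathcal{C}_e$ lie on the appropriate sides, and conclude by bounding $\Re\phi$ away from zero. The paper's proof is a single sentence asserting these sign facts; you simply make the compactness step (deducing a uniform lower bound $c_0$ on $\mathcal{C}_e\setminus\mathcal{U}_1$ and $-\Re\phi\geq c_0$ on $\mathcal{C}_i\setminus\mathcal{U}_1$) explicit, which is the content the paper leaves as ``follows immediately.''
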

\begin{wrapfigure}[12]{rhtb!}{0.45\textwidth}
\includegraphics[width=0.45\textwidth]{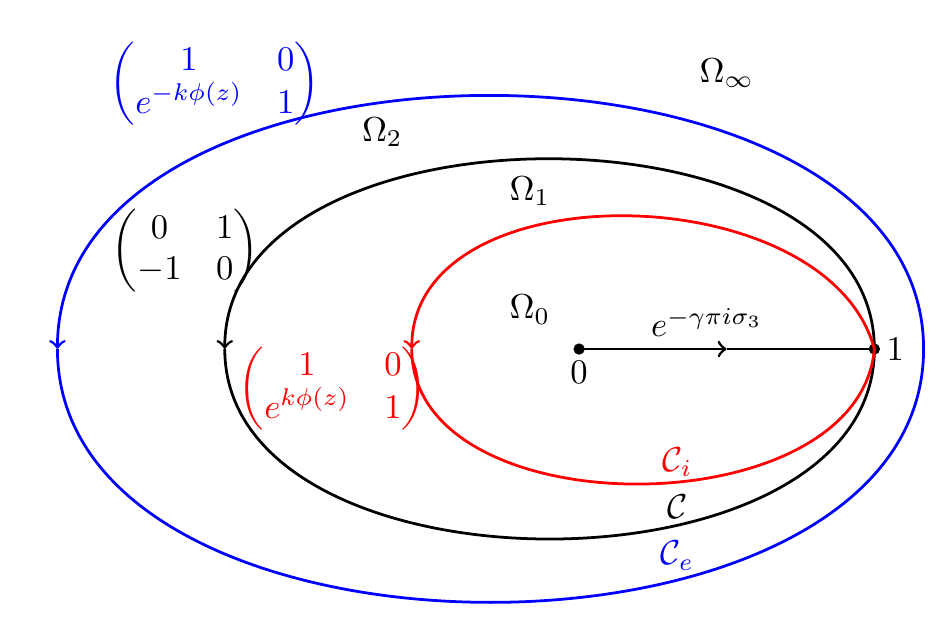}
\caption{The jump matrices for $T(z)$ with contour $\Sigma_T=\mathcal{C}_e\cup\mathcal{C}\cup\mathcal{C}_i\cup(0,1)$.}
\label{fig:jump_T}
\end{wrapfigure}
The proof of this proposition follows immediately from the fact that $\Re\phi(z)<0$   for $z\in\inte(\mathcal{C})\backslash {\cal U}_1$ and
$\Re\phi(z)>0$ for  $z\in\exte(\mathcal{C})\backslash {\cal U}_1$. Here $\inte(\mathcal{C})$ is the interior region bounded by 
$\mathcal{C}$ and $\exte(\mathcal{C})$ is the exterior region bounded by $\mathcal{C}$.

From the above proposition it follows that 
\[
v_T(z)\to v^{\infty}(z)\quad \mbox{as $k\to\infty$},
\]
exponentially fast, where
\begin{equation}
\label{vinfinity_pre}
v^{\infty}(z)=\left\{
\begin{array}{cl}
\begin{pmatrix}0&1\\-1&0\end{pmatrix}&\mbox{ as $z\in\mathcal{C}$},\\
\\
e^{-\gamma\pi i \sigma_3}& \mbox{ as $z\in(0,1)$}\\
\\
I& \mbox{ as $z\in\mathcal{C}_e\cup\mathcal{C}_i$}.
\end{array}\right.
\end{equation}

\subsection{The outer parametrix for large $z$}

We need to find a matrix-valued function $P^{\infty}(z)$ analytic in $\C\setminus(\mathcal{C} \cup[0,1])$ such that it has jump discontinuities given by the matrix $v^{\infty}(z)$ in \eqref{vinfinity_pre}, namely
\be
P^{\infty}_{+}(z) = P^{\infty}_{-}(z)
\left\{\begin{split}
\begin{pmatrix}0& 1 \\ -1 & 0 \end{pmatrix}& \qquad z \in \mathcal{C}\\
e^{-\gamma\pi i\sigma_3} & \qquad z \in (0,1)\ ,
\end{split}\right.
\ee
with endpoint behaviour
\be
P^{\infty}(z)z^{-\frac{\gamma}{2}\sigma_3} = \order{1} \quad z \to 0\ , \qquad P^{\infty}(z)(z-1)^{\frac{\gamma}{2}\sigma_3} = \order{1} \quad z \to 1\ ,
\ee
and large $z$ boundary behaviour
\be
P^{\infty}(z) = I+{\mathcal O}\left(\frac{1}{z}\right)\ ,\quad z \to \infty.
\ee
Define
\be
\tilde P^{\infty}(z) := P^{\infty}(z)\chi^{-1}(z)\,
\ee
where 
\be
\label{chi}
\chi(z):= \left\{\begin{array}{cl}
\begin{pmatrix}0& 1 \\ -1 & 0 \end{pmatrix}& z \in \inte(\mathcal{C})\\
I& z \in \exte(\mathcal{C})\ .
\end{array}\right.
\ee

The $k$-independent matrix $\tilde P^{\infty}(z)$ has no jump on $\mathcal{C}$ and it satisfies the following Riemann--Hilbert problem: 
\begin{enumerate}
\item Piecewise analyticity:
\be
\tilde P^{\infty}\text{ is holomorphic in }\C\setminus (0,1)
\ee
\item Jump discontinuity on $(0,1)$:
\be
\tilde P_{+}^{\infty}(z)=\tilde P_{-}^{\infty}(z)e^{\gamma\pi i\sigma_3}
\ee
\item 
Boundary behaviour at $z=\infty$:
\be
\tilde P^{\infty}(z) = I+{\mathcal O}\left(\frac{1}{z}\right)\ ,\quad z \to \infty\ .
\ee
\end{enumerate}
The  solution to this Riemann--Hilbert problem is given by
\be
\tilde P^{\infty}(z) = \left(1-\frac{1}{z}\right)^{\frac{\gamma}{2}\sigma_3}
\ee
which leads to the particular solution
\be
\label{eq:global_parametrix_pre}
P^{\infty}(z) = \left(1-\frac{1}{z}\right)^{\frac{\gamma}{2}\sigma_3}\chi(z)= 
\left\{\begin{array}{ll}
\displaystyle \left(1-\frac{1}{z}\right)^{\frac{\gamma}{2}\sigma_3}\begin{pmatrix}0& 1 \\ -1 & 0 \end{pmatrix}& z \in \inte(\mathcal{C})\\
\displaystyle \left(1-\frac{1}{z}\right)^{\frac{\gamma}{2}\sigma_3} & z\in \exte(\mathcal{C})\ .
\end{array}\right.
\ee

\subsection{The local parametrix at $z=1$}
The aim of this section is to construct a local parametrix $P^0(z)$  in a small  neighbourhood
${\mathcal U}_{1}$ of $z=1$ having  the same jump property as $T$  for $z$ near $1$  and matching  the outer parametric $P^{\infty}(z)$ 
in the limit $k\to \infty$ and $z\in\partial {\cal U}_{1}$. Then the Riemann-Hilbert problem for $P^0(z)$  is given by
\be
P^0_+(z)=P_-^0(z)v_{T}(z),\quad z\in {\mathcal U}_{1}\cap\Sigma_T,
\ee 
and 
\be
\label{match_pre}
P^0(z)=P^{\infty}(z)(I+o(1))\qquad \mbox{as $k\to\infty$ and $z\in \partial {\mathcal U}_{1}$. }
\ee


In order to build such  local parametrix near the point $z=1$ we first construct a new matrix function $B(z)$ from $P^0(z)$:
\be
\label{V_pre}
B(z)=P^0(z)\chi(z)^{-1}\mathcal{Q}(z)
\ee
where
\be
\label{def_Q}
\mathcal{Q}(z):=\left\{\begin{array}{lc}\begin{pmatrix}1& e^{k\phi(z)} \\0 & 1 \end{pmatrix} & z\in \Omega_0\cap {\mathcal U}_{1} \\ I & z\in {\mathcal U}_{1}\setminus\Omega_0\end{array}  \right.
\ee
The matrix $B(z)$ satisfies the following jump relations in a neighbourhood of $1$:
\be
B_{+}(z)=B_{-}(z)\begin{pmatrix}e^{\gamma\pi i}& (e^{\gamma\pi i}-e^{-\gamma\pi i})e^{k\phi(z)}\\ 0 & e^{-\gamma\pi i} \end{pmatrix} \qquad z\in(-\infty,1)\,.
\ee

Next we construct a solution to the so called model problem, namely a Riemann-Hilbert problem that has the same jumps as $B$.

\subsubsection{Model problem}

Consider the model problem for the $2\times2$ matrix function $\Psi(\xi)$  analytic in $\mathbb{C}\backslash \mathbb{R}^{-}$
with boundary behaviour
\begin{align}
\label{model_pre1}
\Psi_{+}(\xi)&=\Psi_{-}(\xi)\begin{pmatrix}e^{\gamma\pi i}& (e^{\gamma\pi i}-e^{-\gamma\pi i})e^{\xi}\\ 0 & e^{-\gamma\pi i} \end{pmatrix}& \xi\in(-\infty,0)\\
\label{model_pre2}
\Psi(\xi)&=\left(I+\order{\frac{1}{\xi}}\right)\xi^{\frac{\gamma}{2}\sigma_{3}}& \xi\to\infty\,.
\end{align}
Defining 
\be
\tilde \Psi(\xi) := \Psi(\xi)\xi^{-\frac{\gamma}{2}\sigma_{3}}
\ee
we obtain the following Riemann--Hilbert problem for $\tilde \Psi(\xi)$:
\begin{align}
\label{model_pre11}
\tilde\Psi_{+}(\xi)&=\tilde\Psi_{-}(\xi)\begin{pmatrix}1& (1-e^{-2\gamma\pi i})(\xi^{\gamma})_{+}e^{\xi}\\ 0 & 1 \end{pmatrix}\qquad \xi\in(-\infty,0)\\
\label{model_pre21}
\tilde\Psi(\xi)&=I+\order{\frac{1}{\xi}} \qquad \xi\to\infty\,,
\end{align}
where the function $\xi^{\gamma}$ is analytic in $\mathbb{C}\backslash (-\infty, 0]$ and $(\xi^{\gamma})_{+}$ denotes the boundary from the right with respect to the contour $(-\infty,0)$ oriented in the positive  direction.
This is an abelian Riemann--Hilbert problem, and therefore easily solvable by the Sokhotski--Plemelj formula:
\be
\label{tildePsi}
\tilde \Psi(\xi)= \begin{pmatrix}1&   \displaystyle -\frac{1-e^{-2\gamma\pi i}}{2\pi i}\int_{\mathbb{R}^{-}}\frac{(\zeta^{\gamma})_+e^{\zeta}d\zeta}{\zeta-\xi}\\ 0 & 1 \end{pmatrix},
\ee
so that 
\be
\label{def_Lambda}
 \Psi(\xi)=\left[I+\sum_{j=1}^{\infty}\frac{\Psi_j}{\xi^j}\Lambda\right]\xi^{\frac{\gamma}{2}\sigma_3},\quad  \Lambda = \begin{pmatrix}0&   1\\ 0 & 0 \end{pmatrix}\,, \quad\mbox{as $\xi\to\infty$}.
\ee
In particular
\begin{equation}
\label{Psiexpansion_pre}
\Psi_1=\frac{e^{-2\gamma\pi i}-1}{2\pi i}\int_{\mathbb{R}^{-}}(\zeta^{\gamma})_+e^{\zeta}d\zeta=\frac{e^{-\gamma\pi i}-e^{\gamma\pi i}}{2\pi i}\int_{0}^{\infty}r^{\gamma}e^{-r}dr=-\frac{\sin(\gamma\pi)}{\pi}\Gamma(\gamma+1)=\dfrac{1}{\Gamma(-\gamma)}\ .
\end{equation}

\subsubsection{Construction of the parametrix}
We are now ready to specify the parametric $P^0(z)$ which follows from combining \eqref{def_Lambda} with \eqref{V_pre}, that is 
\be
P^0(z)=E(z)\Psi(kw(z))\mathcal{Q}(z)^{-1}\chi(z)\,,
\ee
where  ${\cal Q}(z)$ and $\chi(z)$ are defined in \eqref{def_Q}) and \eqref{chi} respectively and $E(z)$ is an  analytic  matrix  in a neighbourhood of $\mathcal{U}_{1}$ and $w(z)$ is a conformal mapping from a neighbourhood of $1$ to a neighbourhood of $0$.

The conformal map $w(z)$ is specified
by 
\be
w(z):=\left\{
\begin{array}{ll}
\phi(z) + 2\pi i& z \in \mathcal{U}_{1}\cap \C_{-}\\
\phi(z) & z \in \mathcal{U}_{1}\cap \C_{+}\ .
\end{array}
\right.
\ee
We observe that 
\be
w(z) = \left(1-\frac{1}{z_0}\right)(z-1)-\dfrac{1}{2}(z-1)^2+\order{(z-1)^3} \qquad z \to 1\ .
\ee
The matrix $E(z)$ is obtained from condition \eqref{match_pre} which, when combined with \eqref{Psiexpansion_pre} gives
 \begin{equation}
 \label{Pmatching_pre}
 \begin{split}
 &P^{\infty}(z)(P^0(z))^{-1}=P^{\infty}(z)\chi(z)^{-1}\mathcal{Q}(z)\Psi(kw(z))^{-1}E(z)^{-1}=\\
 &=P^{\infty}(z)\chi(z)^{-1}(kw(z))^{-\frac{\gamma}{2}\sigma_3}\left(I-\frac{1}{\Gamma(-\gamma)}\frac{\Lambda}{kw(z)}+
\order{k^{-2}}\right)E^{-1}(z),\quad k\to\infty,\quad z\in\partial{\cal U}_{1}
 \end{split}
 \end{equation}
 where we use the fact that ${\cal Q}(z)\to I$ exponentially fast as  $ k\to\infty,$  and $z\in\partial{\cal U}_{1}$.
 From the above expression it turns out that the matrix $E(z)$ takes the form
\be
\label{E_pre}
E(z)=P^{\infty}(z)\chi(z)^{-1}(kw(z))^{-\frac{\gamma}{2}\sigma_3}= \left(1-\frac{1}{z}\right)^{\frac{\gamma}{2}\sigma_3}(kw(z))^{-\frac{\gamma}{2}\sigma_3}.
\ee
We observe that the function $E(z)$ is single valued in a neighbourhood of ${\cal U}_{1}$. Indeed the jumps of $(z-1)^{\frac{\gamma}{2}\sigma_3}$ and $w(z)^{-\frac{\gamma}{2}\sigma_3}$ cancel each other. 

From the expression  \eqref{Pmatching_pre} and \eqref{E_pre} the matching between  $P^0(z)$ and $P^{\infty}(z)$ takes the form
\begin{equation}
\begin{split}
\label{P_match}
P^{\infty}(z)(P^0(z))^{-1}&=E(z)\left(I-\left(\sum_{j=1}^{M-1}\frac{\Psi_j}{(kw(z))^j}+\order{k^{-M}}\right)\begin{pmatrix}0&   1\\ 0 & 0 \end{pmatrix}\right)E^{-1}(z)\\
&=I-\left(\sum_{j=1}^{M-1}\frac{\Psi_j}{(kw(z))^j}+\order{k^{-M}}\right)E(z) \begin{pmatrix}0&   1\\ 0 & 0 \end{pmatrix} E^{-1}(z) \\
&=I-\left(1-\frac{1}{z}\right)^{\gamma}(k w(z))^{-\gamma}\left(\sum_{j=1}^{M-1}\frac{\Psi_j}{(kw(z))^j}+\order{k^{-M}}\right)\begin{pmatrix}0&   1\\ 0 & 0 \end{pmatrix}\qquad \text{as }k\to\infty
\end{split}
\end{equation}
for $z\in\partial{\cal U}_{1}$, where $\gamma\in(0,1)$. 

\subsubsection{Riemann-Hilbert problem for the error matrix $R$}
\label{Rmatrix}
\begin{wrapfigure}[14]{rhtb!}{0.45\textwidth}
\includegraphics[width=0.45\textwidth]{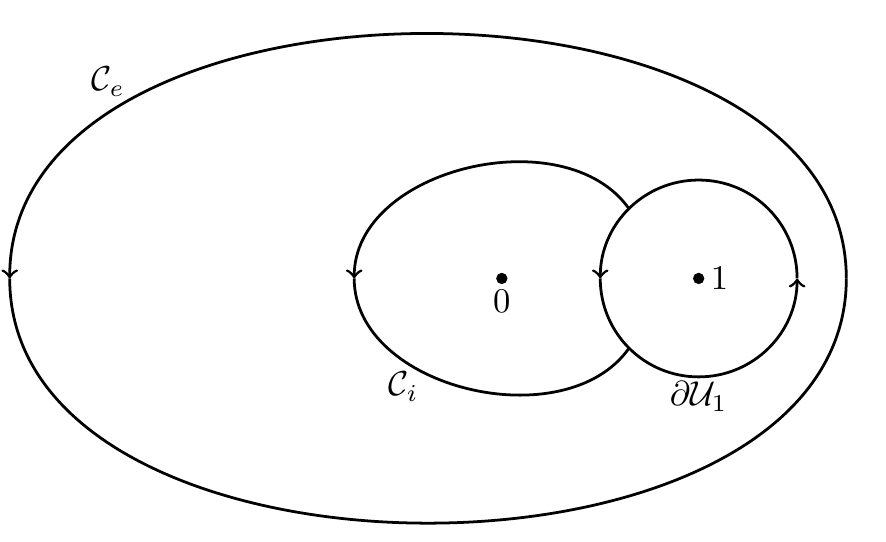}
\vspace*{-.4cm}
\caption{The jump contour structure $\mathcal{C}_R$}
\label{fig:jump_R}
\end{wrapfigure}
We now define the error matrix  $R$ in two  regions of the plane, using our approximations to the matrix  $T$.  Set
\begin{equation}\label{def R_pre}
R(z) =\begin{cases} T(z) \left( P^{(0)}(z) \right)^{-1} ,& z \in \mathcal{U}_{1}\ ,\\
 T(z) \left( P^{\infty}(z) \right)^{-1},& \mbox{ everywhere else.}
\end{cases}
\end{equation}
The matrix $R$ is piecewise analytic in $\mathbb{C}$ with a jump across  the contour $\mathcal{C}_R=\mathcal{C}_i\cup\mathcal{C}_e\cup \partial {\cal U}_1$ given in  Fig.~\ref{fig:jump_R}.
\subsubsection*{RH problem for $R$}
\begin{itemize}
\item[1.] $R$ is analytic in $\mathbb C\setminus\mathcal{C}_R$,
\end{itemize}
\clearpage
\begin{itemize}
\item[2.] For $z\in \mathcal{C}_R$, we have 
\be
\label{RHR_pre}
R_+(z)=R_-(z)v_R(z),
\ee
with
\be
\label{v_RU}
v_R(z) =\left\{
\begin{split}
P^{\infty}_-(z)v_T(z)\left(P_+^{\infty}(z)\right)^{-1} &\qquad z\in\mathcal{C}_R\backslash  \partial \mathcal{U}_{1}\\
P^{\infty}(z)\left(P^{(0)}(z)\right)^{-1} &\qquad z\in\partial  \mathcal{U}_{1}.
\end{split}\right.
\ee
\item[3.] As $z\to\infty$, we have
\begin{equation}
R(z)=I+\order{z^{-1}}.
\end{equation}
\end{itemize}

The jump matrices across the contour  $\mathcal{C}_R\setminus  \partial\mathcal{U}_{1}$ are all exponentially close to $I$ for large $k$ because $v_{T}$  converges  exponentially  fast to $v^{\infty}$ defined in \eqref{vinfinity_pre} and the product  $P^{\infty}_-(z)v^{\infty}(z) \left(P_+^{\infty}(z)\right)^{-1}=I$  with $P^{\infty}(z)$ defined in \eqref{eq:global_parametrix_pre}.
The only jump that is not exponentially small is the one on $\partial  \mathcal{U}_{1}$ since from \eqref{P_match} for any integer $M>2$ we have
\begin{align}
v_R(z)&=P^{\infty}(z)(P_{k}^{0}(z))^{-1}\\
&=I-\left(1-\frac{1}{z}\right)^{\gamma}(k w(z))^{-\gamma}\left(\sum_{j=1}^{M-1}\frac{\Psi_j}{(kw(z))^j}+\order{k^{-M}}\right)\Lambda,
\end{align}
with the shift matrix $\Lambda$ defined in \eqref{def_Lambda}.
By employing the notation
\[
v_R^{(j)}(z) =-\left(1-\frac{1}{z}\right)^{\gamma}\frac{\Psi_j}{(w(z))^{j+\gamma}},\qquad j=1,2,\dots,
\]
we can rewrite the matrix $v_R(z)$ in the form
\begin{equation}
\label{exp_vR}
v_R=I+\dfrac{1}{k^{\gamma}}
\left(
\sum_{j=1}^{M-1}\dfrac{v_R^{(j)}(z) }{ k^j }+\order{k^{-M}}\right)\Lambda
\end{equation}
where, in particular, \eqref{Psiexpansion_pre} implies
\begin{equation}
\label{exp_vR1}
\quad v_R^{(1)}(z)=-\dfrac{1}{\Gamma(-\gamma)}\left(1-\frac{1}{z}\right)^{\gamma}( w(z))^{-\gamma-1}.
\end{equation}
By a standard perturbation theory argument one has the expansion
\be
\label{exp_R}
R(z) = I +\dfrac{1}{k^{\gamma}}\left(\sum_{j=1}^{M-1}\frac{R^{(j)}(z)}{k^j}+\order{k^{-M}}\right) \quad k \to \infty\ ,
\ee
which gives, using   \eqref{RHR_pre}, \eqref{exp_vR} and \eqref{exp_R}
\be
R_{+}^{(1)}(z)=R_{-}^{(1)}(z)+v^{(1)}_R(z)\Lambda.
\ee

Therefore
\be
R^{(1)}(z) = \left[\frac{1}{2\pi i}\oint_{\partial \mathcal{U}_{1}}\frac{v_R^{(1)}(\zeta)d\zeta}{\zeta-z}\right] \Lambda\ 
\ee
where we observe that the function $v_R^{(1)}(z)$ has a simple pole in $z=1$ with expansion
\[
v_R^{(1)}(z)=-\dfrac{1}{\Gamma(-\gamma)} \frac{1}{z-1}\left(1-\frac{1}{z_0}\right)^{-\gamma-1}\left(1+\order{z-1}\right)\ .
\]
By a simple residue calculation we obtain
\be
\label{R1_pre}
R^{(1)}(z) = 
\left\{\begin{array}{ll}
\displaystyle  \frac{1}{(z-1)\Gamma(-\gamma)}\left(1-\frac{1}{z_0}\right)^{-\gamma-1}\begin{pmatrix}0&   1\\ 0 & 0 \end{pmatrix} & z \in \C\setminus \mathcal{U}_{1}\\
&\\
\displaystyle\frac{1}{(z-1)\Gamma(-\gamma)}\left(1-\frac{1}{z_0}\right)^{-\gamma-1}\begin{pmatrix}0&   1\\ 0 & 0 \end{pmatrix}+v_R^{(1)}(z)\begin{pmatrix}0&   1\\ 0 & 0 \end{pmatrix} & z \in \mathcal{U}_{1}\ .
\end{array}
\right.
\ee
In general, given the structure of the jump matrix \eqref{exp_vR} the error matrix $R^{(j)}(z)$  in \eqref{exp_R} is of the form
\be
\label{Rj}
R^{(j)}(z)=\begin{pmatrix}
0&*\\
0&0\end{pmatrix},
\ee
namely, only the $(1,2)$-entry of the matrices $ R^{(j)}(z)$ is  non-zero.
From \eqref{def R_pre} and \eqref{exp_R}  one has
\begin{equation}\label{T_exp}
T(z)=\left\{
\begin{array}{ll}
P^{(0)}(z) + \dfrac{1}{k^{\gamma}}\left(\displaystyle\sum\limits_{j=1}^{M-1}\dfrac{R^{(j)}(z)}{k^j}+\order{k^{-M}}\right) P^{(0)}(z)&\quad  z \in { \mathcal U}_1\\
P^{\infty}(z) +\dfrac{1}{k^{\gamma}}\left(\displaystyle\sum\limits_{j=1}^{M-1}\dfrac{R^{(j)}(z)}{k^j}+\order{k^{-M}}\right)  P^{\infty}(z)&\quad \mbox{ everywhere else,}
\end{array}\right.
\end{equation}
where, in particular, the first term $R^{(1)}(z)$ is given in  \eqref{R1_pre}.

\subsection{ Proof of Theorem~\ref{theorem2}: asymptotics for $p_n(\lb)$ for $z_0 >1$ \label{Sec3.5}}
In order to obtain the asymptotic  expansion of the polynomials $p_n(\lb)$ for $n\to\infty$, and $NT=n-l$ we first derive the asymptotic expansion of the reduced polynomials  
$\pi_k(z)$ for $k\to \infty$ with $n=sk+l$. For the purpose  we use \eqref{pi_U}, \eqref{def_T}, \eqref{R1_pre} and \eqref{T_exp} to obtain\begin{align}
\poly_k(z) &=e^{kg(z)}\left(1-\frac{1}{z}\right)^{\frac{\gamma}{2}}\left[U_k(z)\right]_{11}\\
&=
e^{kg(z)}\left(1-\frac{1}{z}\right)^{\frac{\gamma}{2}}\left\{
\begin{array}{ll}
\displaystyle \left[T_k(z)\right]_{11} & z \in \Omega_{\infty}\cup \Omega_{0}\\
&\\
\displaystyle \left[T_k(z)\begin{pmatrix}1& 0 \\ e^{k\phi(z)} & 1 \end{pmatrix}\right]_{11} & z \in \Omega_{1}\\
&\\
\displaystyle \left[T_k(z)\begin{pmatrix}1& 0 \\ -e^{-k\phi(z)} & 1 \end{pmatrix}\right]_{11} & z \in \Omega_{2}
\end{array}
\right.\\
&=
e^{kg(z)}\left(1-\frac{1}{z}\right)^{\frac{\gamma}{2}}\left\{
\begin{array}{ll}
\displaystyle \left[R(z)P^{\infty}(z)\right]_{11} & z \in \Omega_{\infty}\cup (\Omega_{0}\setminus \mathcal{U}_{1})\\
&\\
\displaystyle \left[R(z)P^{\infty}(z)\begin{pmatrix}1& 0 \\ e^{k\phi(z)} & 1 \end{pmatrix}\right]_{11} & z \in \Omega_{1}\setminus \mathcal{U}_{1}\\
&\\
\displaystyle \left[R(z)P^{\infty}(z)\begin{pmatrix}1& 0 \\ -e^{-k\phi(z)} & 1 \end{pmatrix}\right]_{11} & z \in \Omega_{2}\setminus \mathcal{U}_{1}\\
&\\
\displaystyle \left[R(z)P^{0}(z)\right]_{11} & z \in \Omega_{0} \cap \mathcal{U}_{1}\\
&\\
\displaystyle \left[R(z)P^{0}(z)\begin{pmatrix}1& 0 \\ e^{k\phi(z)} & 1 \end{pmatrix}\right]_{11} & z \in \Omega_{1}\cap \mathcal{U}_{1}\\
&\\
\displaystyle \left[R(z)P^{0}(z)\begin{pmatrix}1& 0 \\ -e^{-k\phi(z)} & 1 \end{pmatrix}\right]_{11} & z \in \Omega_{2}\cap \mathcal{U}_{1}\ .
\end{array}
\right.
\end{align}
From the above relation we obtain the expansions in the following regions.\\

\vskip 1pt
\noindent
{\bf The exterior region $\Omega_{\infty}$.}\\
In this region, from \eqref{Rj} and \eqref{T_exp},  for any  integer $M\geq 2$ we have 
\begin{align}
\nonumber
\poly_k(z) &= e^{kg(z)}\left(1-\frac{1}{z}\right)^{\gamma}\left(1+\order{\frac{1}{k^{M+\gamma}}}\right)\\
\label{exp_Omega_infty}
&=z^k\left(1-\frac{1}{z}\right)^{\gamma}\left(1+\order{\frac{1}{k^{M+\gamma}}}\right)\,,
\end{align}
 on any compact subset of $\Omega_{\infty}$. Therefore there are no zeros accumulating in this region.

\vskip 2pt

\noindent
{\bf The interior region $\Omega_{0}\setminus \mathcal{U}_{1}$.}\\
\begin{align}
\label{Omega0}
\poly_k(z) &= e^{kg(z)}\left(\frac{1}{k^{1+\gamma}}\frac{1}{(z-1)\Gamma(-\gamma)}\left(1-\frac{1}{z_0}\right)^{-\gamma-1}+\order{\frac{1}{k^{2+\gamma}}}\right)\\
\end{align}
The leading term of the above expansion is of order  $\order{k^{-1-\gamma}}$, so there are no zeros in this region.

\vskip 2pt
\noindent
{\bf The interesting region $\Omega_{1}\setminus \mathcal{U}_{1}$.}
\begin{align}
\label{Omega1}
\poly_k(z) &= e^{kg(z)}\left(1-\frac{1}{z}\right)^{\gamma}\left[e^{k\phi(z)}-\frac{1}{k^{1+\gamma}}\frac{\left(1-\frac{1}{z}\right)^{-\gamma}}{(z-1)\Gamma(-\gamma)}\left(1-\frac{1}{z_0}\right)^{-\gamma-1}+\order{\frac{1}{k^{2+\gamma}}}\right]\,,
\end{align}
where $e^{k\phi(z)}$ is uniformly bounded on $\Omega_1 \subset \{\Re(\phi) \leq 0\}$.\\
%
\noindent
\vskip 2pt
\noindent
{\bf The other interesting region $z \in \Omega_{2}\setminus \mathcal{U}_{1}$:}\\
\begin{align}
\label{Omega2}
\poly_k(z) &= e^{kg(z)}\left(1-\frac{1}{z}\right)^{\gamma}\left[1-\dfrac{e^{-k\phi(z)}}{k^{\gamma+1}}\left[
\frac{\left(1-\frac{1}{z}\right)^{-\gamma}}{(z-1)\Gamma(-\gamma)}\left(1-\frac{1}{z_0}\right)^{-\gamma-1}+\order{\frac{1}{k}}\right]\right]\,,
\end{align}
 where $e^{-k\phi(z)}$ is uniformly bounded on $\Omega_2 \subset \{\Re(\phi) \geq 0\}$.\\
 
 \vskip 2pt
\noindent
{\bf The region $ \mathcal{U}_{1}$}: for $z\in  \mathcal{U}_{1}\cap \exte(\mathcal{C})$
\begin{align}
\label{pol_pre_U1a}
\poly_k(z)& = e^{kg(z)}\dfrac{(z-1)^\gamma}{(zw(z))^{\gamma}}\left[w(z)^{\gamma}-\dfrac{e^{-k\phi(z)}}{k^{\gamma}}
\left( \tilde{\Psi}_{12}(kw(z))-
\frac{ w(z)^\gamma}{k}\left(\frac{\left(1-\frac{1}{z_0}\right)^{-\gamma-1}}{(z-1)\Gamma(-\gamma)}+v_R^{(1)}(z)+\order{\frac{1}{k}}\right)\right)\right]
\end{align}
for $z\in  \mathcal{U}_{1}\cap \inte(\mathcal{C})$
\begin{align}
\label{pol_pre_U1b}
\poly_k(z)& = e^{kg(z)}\dfrac{(z-1)^\gamma}{(zw(z))^{\gamma}}\left[e^{k\phi(z)}w(z)^\gamma-\dfrac{\tilde{\Psi}_{12}(kw(z))}{k^{\gamma}}-
\frac{w(z)^{\gamma}}{k^{1+\gamma}}\left(\frac{\left(1-\frac{1}{z_0}\right)^{-\gamma-1}}{(z-1)\Gamma(-\gamma)}+v_R^{(1)}(z)+\order{\frac{1}{k}}\right)\right]
\end{align}
where  $\tilde{\Psi}_{12}$ is the $(1,2)$-entry of the matrix $\tilde{\Psi}$  defined in \eqref{tildePsi} and $v_R^{(1)}(z)$ has been defined in \eqref{exp_vR1}.
Here $\inte(\mathcal{C})$ and $\exte(\mathcal{C})$ is the interior and exterior of $\mathcal{C}$ respectively.\\

In order to obtain the asymptotic behaviour of the polynomials $p_n(\lb)$ we use the substitution
\[
p_n(\lb)=(-t)^k\lb^l \pi_k\left(1-\frac{\lb^s}{t}\right),\quad n=ks+l,\;\; \gamma=\frac{s-1-l}{s}.
\]
which gives the relations \eqref{pn1_theo2}-\eqref{pn4_theo2} in Theorem~\ref{theorem2}.
\hfill{\bf Q.E.D.}

\begin{proposition}
\label{propo_zeros_pre}
The support of the counting measure of the zeros of the polynomials $\pi_k(z)$  outside an arbitrary small disk  ${\cal U}_1$ surrounding the point $z=1$  tends uniformly to the curve $\mathcal{C}$ defined in \eqref{Gamma0}.
The zeros are within a distance $o(1/k)$ from the curve defined by
\be
\label{phi_deform}
\Re\phi(z)=-(1+\gamma)\frac{\log(k)}{k}+\dfrac{1}{k}\log\left(\dfrac{1}{|\Gamma(-\gamma)|}\dfrac{|z|^\gamma}{|z-1|^{\gamma+1}}\left|1-\frac{1}{z_0}\right|^{-\gamma-1}\right)
\ee
where the function $\phi(z)$ has been defined in \eqref{def_phi}.
The curves in \eqref{phi_deform} approach $\mathcal{C}$ at the rate  $\order{\log k /k}$ and lies in $\inte(\mathcal{C})$. The normalised counting measure of the zeros of $\pi_k(z)$ converges  to the probability measure $\nu$ defined in \eqref{dnu}.
\end{proposition}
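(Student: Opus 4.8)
The plan is to read off the location and the asymptotic distribution of the zeros of $\poly_k(z)$ directly from the strong asymptotics established in the proof of Theorem~\ref{theorem2}, i.e.\ from the expansions \eqref{exp_Omega_infty}, \eqref{Omega0}, \eqref{Omega1}, \eqref{Omega2} and \eqref{pol_pre_U1a}--\eqref{pol_pre_U1b}, which hold uniformly on $\Omega_\infty$, $\Omega_0\setminus{\cal U}_1$, $\Omega_1\setminus{\cal U}_1$, $\Omega_2\setminus{\cal U}_1$ and ${\cal U}_1$, whose union is $\C$. First I would rule out zeros away from the curve. On compact subsets of $\Omega_\infty$ one has $\poly_k(z)=z^k(1-1/z)^\gamma(1+\order{k^{-M-\gamma}})$, which is nonvanishing for $k$ large. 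On $\Omega_0\setminus{\cal U}_1$ the leading term in \eqref{Omega0} is the nowhere-vanishing function $e^{kg(z)}k^{-1-\gamma}[(z-1)\Gamma(-\gamma)]^{-1}(1-1/z_0)^{-\gamma-1}$, so $\poly_k$ has no zeros there either. On $\Omega_2\setminus{\cal U}_1$ we have $\Re\phi(z)\ge 0$, hence $|e^{-k\phi(z)}|\le 1$, and the bracket in \eqref{Omega2} equals $1+\order{k^{-\gamma-1}}\neq 0$. Therefore, outside the fixed disk ${\cal U}_1$, every zero of $\poly_k$ must lie in the thin lens $\Omega_1\setminus{\cal U}_1$; since the auxiliary contour $\mathcal{C}_i$ bounding $\Omega_1$ from inside may be taken arbitrarily close to $\mathcal{C}$, this already yields the uniform convergence of the zero set (outside ${\cal U}_1$) to $\mathcal{C}$, and shows that the zeros approach $\mathcal{C}$ from $\inte(\mathcal{C})$.

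Next I would pin down the zeros inside $\Omega_1\setminus{\cal U}_1$. Factoring $e^{k\phi(z)}$ out of the bracket in \eqref{Omega1} and using $(1-1/z)^{-\gamma}(z-1)^{-1}=z^\gamma(z-1)^{-\gamma-1}$, a point $z\in\Omega_1\setminus{\cal U}_1$ is a zero of $\poly_k$ precisely when
\[
C(z)\,e^{-k\phi(z)}=k^{1+\gamma}\bigl(1+\order{1/k}\bigr),\qquad C(z):=\frac{1}{\Gamma(-\gamma)}\Bigl(1-\frac{1}{z_0}\Bigr)^{-\gamma-1}\frac{z^\gamma}{(z-1)^{\gamma+1}}\,,
\]
where $C(z)$ is holomorphic and nonvanishing on $\Omega_1\setminus{\cal U}_1$. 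Taking moduli and logarithms in this relation gives
\[
\Re\phi(z)=-(1+\gamma)\frac{\log k}{k}+\frac{1}{k}\log|C(z)|+\order{k^{-2}},
\]
which is exactly \eqref{phi_deform} up to an $\order{k^{-2}}=o(1/k)$ error. Since $\phi'(z)=1/z-1/z_0$ vanishes only at $z=z_0>1$, it stays bounded away from $0$ on $\Omega_1\setminus{\cal U}_1$, so $|\nabla\Re\phi|$ is bounded below there and the $o(1/k)$ discrepancy in the value of $\Re\phi$ translates into an $o(1/k)$ displacement in the $z$-plane, which is the asserted estimate. Finally, the right-hand side of \eqref{phi_deform} is $\order{\log k/k}$ and, for $k$ large, strictly negative; since $\Re\phi<0$ in $\inte(\mathcal{C})$ while $\Re\phi=0$ on $\mathcal{C}$ (cf.\ \eqref{def_Gamma1}, \eqref{def_phi}), the level curve \eqref{phi_deform} lies in $\inte(\mathcal{C})$ and tends to $\mathcal{C}$ at the rate $\order{\log k/k}$.

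For the weak-$\star$ convergence of the normalised counting measure $\nu_k$ of the zeros of $\poly_k$ I would argue via logarithmic potentials. Since $\poly_k$ is monic of degree $k$, $\nu_k$ is a probability measure and $\frac{1}{k}\log|\poly_k(z)|=\int_\C\log|z-s|\,d\nu_k(s)$. The expansions above show $\frac{1}{k}\log|\poly_k(z)|\to\Re g(z)=\int_{\mathcal{C}}\log|z-s|\,d\nu(s)$ uniformly on compact subsets of $\C\setminus(\mathcal{C}\cup\overline{{\cal U}_1})$; moreover, for $k$ large the zeros of $\poly_k$ lie in a fixed small neighbourhood of $\mathcal{C}$, so $\frac{1}{k}\log|\poly_k|$ is locally uniformly bounded above, and the fact that these zeros cluster within $o(1/k)$ of a smooth curve — together with the local estimates \eqref{pol_pre_U1a}--\eqref{pol_pre_U1b} near $z=1$ — upgrades the convergence to $L^1_{\mathrm{loc}}(\C)$, hence to $\mathcal{D}'(\C)$. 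Applying the distributional Laplacian and $\frac{1}{2\pi}\Delta\log|z-s|=\delta_s$ then yields $\nu_k=\frac{1}{2\pi}\Delta\bigl(\frac{1}{k}\log|\poly_k|\bigr)\to\frac{1}{2\pi}\Delta\,\Re g=\nu$, where the last equality uses \eqref{g} and Lemma~\ref{lemma1b} (which guarantees $\nu$ is a genuine probability measure). The main obstacle is precisely this last passage: promoting the locally uniform convergence of $\frac{1}{k}\log|\poly_k|$ off $\mathcal{C}$ to $L^1_{\mathrm{loc}}$ convergence across $\mathcal{C}$ and across ${\cal U}_1$. An alternative that sidesteps it is a compactness argument: any weak-$\star$ limit point of $\{\nu_k\}$ is a probability measure supported on $\mathcal{C}$ whose logarithmic potential agrees with that of $\nu$ on the unbounded component of $\C\setminus\mathcal{C}$, and hence equals $\nu$ by the unicity theorem for logarithmic potentials.
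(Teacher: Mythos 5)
Your argument is correct and follows essentially the same route as the paper: rule out zeros on $\Omega_\infty$, $\Omega_0\setminus{\cal U}_1$ and $\Omega_2\setminus{\cal U}_1$ from the strong asymptotics of Theorem~\ref{theorem2}, localize them in $\Omega_1$ to the level curve \eqref{phi_deform} (with $\order{1/k^2}=o(1/k)$ error, using that $\phi'$ is bounded away from zero there), and deduce the weak-$\star$ limit from the $k$-th root asymptotics on the unbounded component together with $\mathcal{C}$ being the outer boundary of its polynomial convex hull. The only cosmetic difference is that you spell out, via the compactness/unicity-of-potentials argument, what the paper delegates to the citation of Mhaskar--Saff and Saff--Totik.
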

\begin{proof}
Observing the asymptotic expansion \eqref{exp_Omega_infty} of $\pi_k(z)$ in $\Omega_\infty\setminus {\cal U}_1$ it is clear that $\pi_k(z)$ does not have any zeros in that region, since $z=0$ and $z=1$ do not belong to  $\Omega_\infty\setminus {\cal U}_1$.  The same reasoning applies to the region $\Omega_0\setminus {\cal U}_1$ where there are no zeros of $\pi_k(z)$ for $k$ sufficiently large. 

From the relations \eqref{Omega1} and \eqref{Omega2} one has that in $\Omega_1\cup\Omega_2$  using the explicit expression of $g(z)$ defined in \eqref{gint}
\begin{align}
\label{Omega22}
\poly_k(z) &= z^{k}\left(1-\frac{1}{z}\right)^{\gamma}\left[1-\dfrac{e^{-k\phi(z)}}{k^{\gamma+1}}\left[
\frac{\left(1-\frac{1}{z}\right)^{-\gamma}}{(z-1)\Gamma(-\gamma)}\left(1-\frac{1}{z_0}\right)^{-\gamma-1}+\order{\frac{1}{k}}\right]\right]\,.
\end{align}
The zeros of $\pi_k(z)$ may only lie asymptotically where the expression
\[
1-\dfrac{e^{-k\phi(z)}}{k^{\gamma+1}}
\frac{\left(1-\frac{1}{z}\right)^{-\gamma}}{(z-1)\Gamma(-\gamma)}\left(1-\frac{1}{z_0}\right)^{-\gamma-1},\quad z\in \Omega_1\cup\Omega_2,
\]
is equal to zero.
Since $\Omega_2 \subset \{\Re(\phi) \geq 0\}$ and $\Omega_1 \subset \{\Re(\phi) \leq 0\}$, it follows that the zeros of $\pi_k(z)$ may lie only in the region $\Omega_1$  and such that 
  $\Re\phi(z)=\order{\log k / k}$. Namely the zeros of the polynomials $\pi_k(z)$ lie on the   curve given by \eqref{phi_deform}  with an error of order  $\order{1/k^2}$. Such curves  converge to the curve $\mathcal{C}$ defined \eqref{Gamma0} at a rate 
$\order{\log k /k}$ (see Fig.~\ref{fzero_pre}).
\begin{figure}[H]
\centering
\includegraphics[scale=0.25]{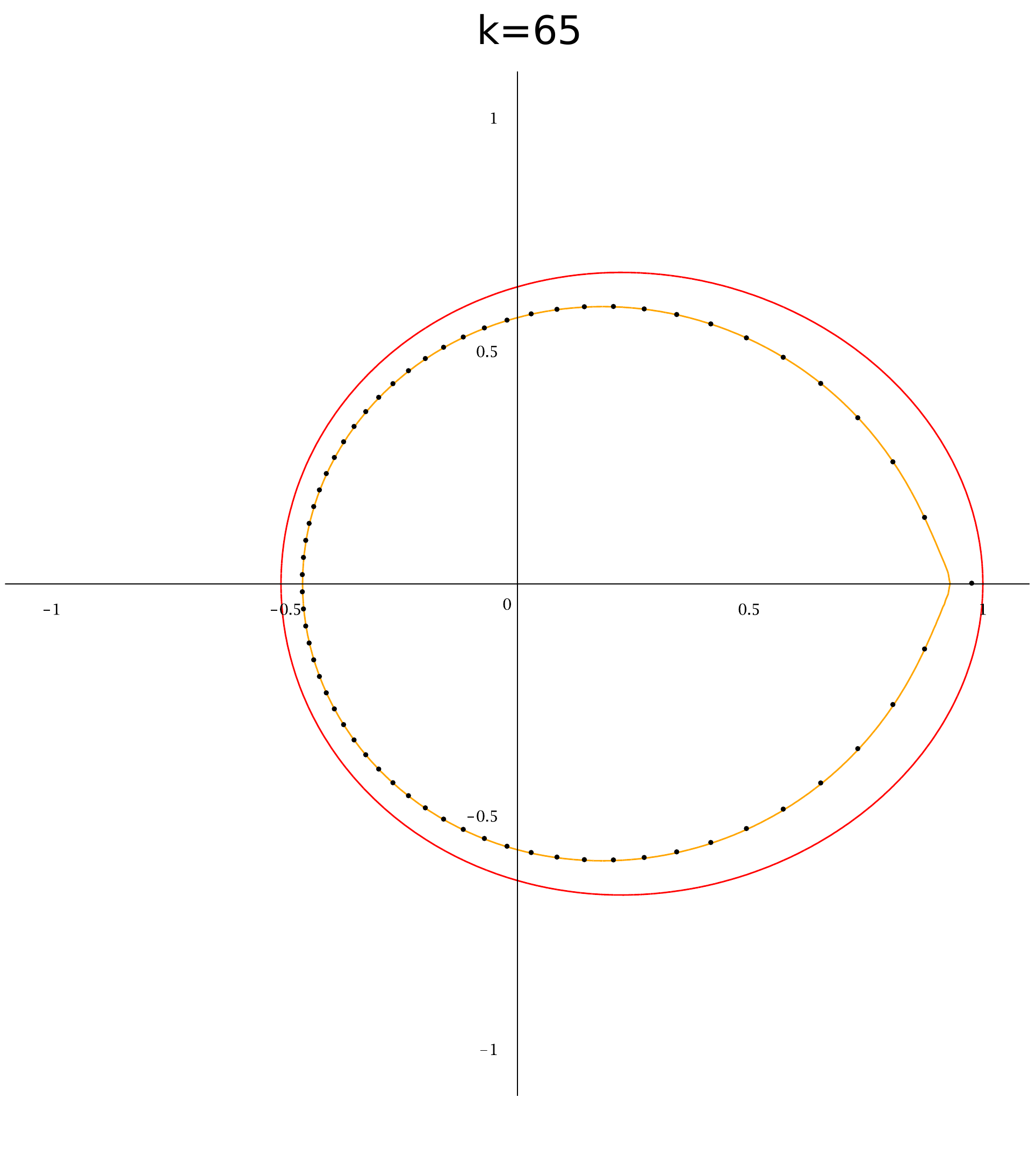}\includegraphics[scale=0.25]{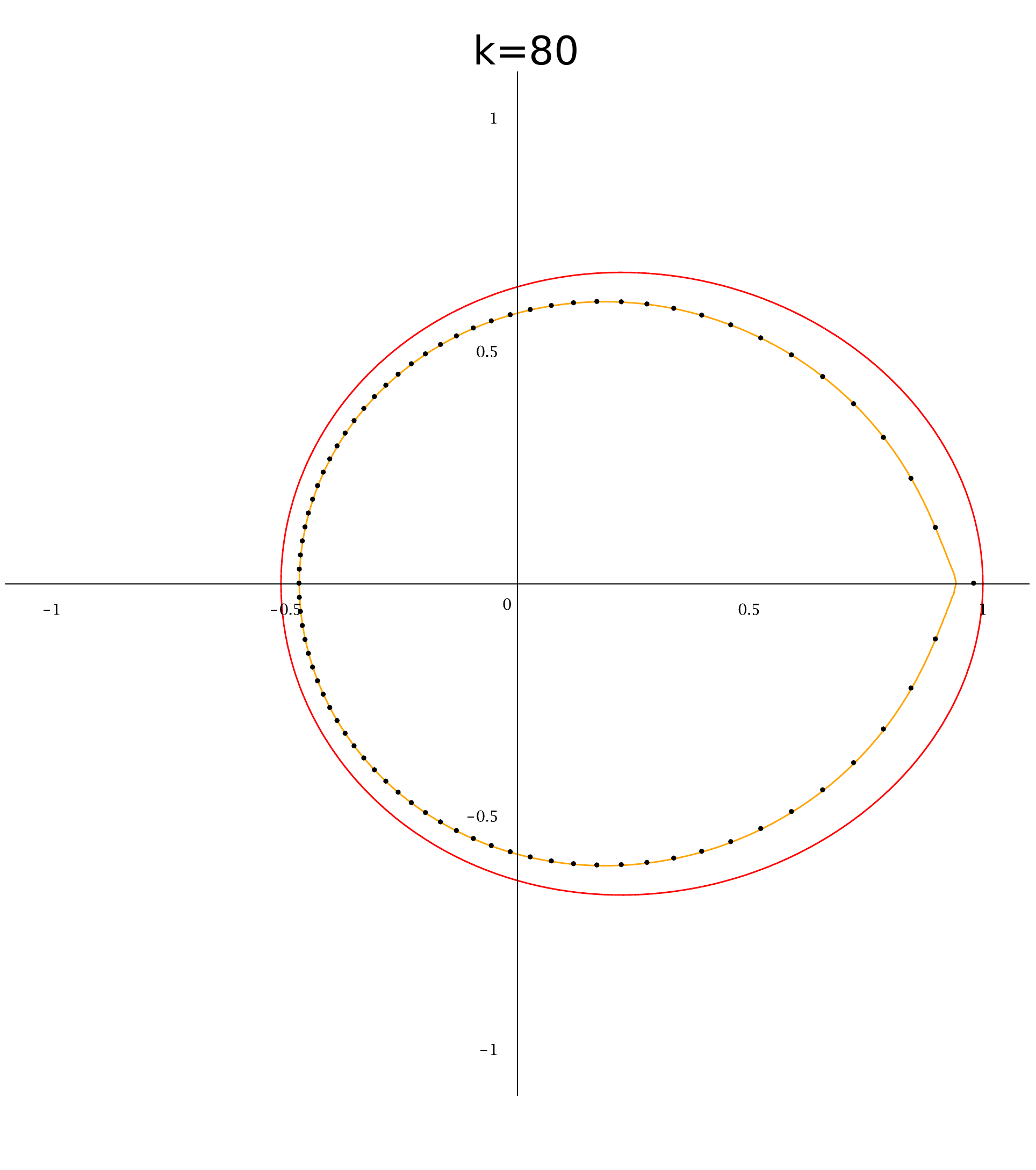}\includegraphics[scale=0.25]{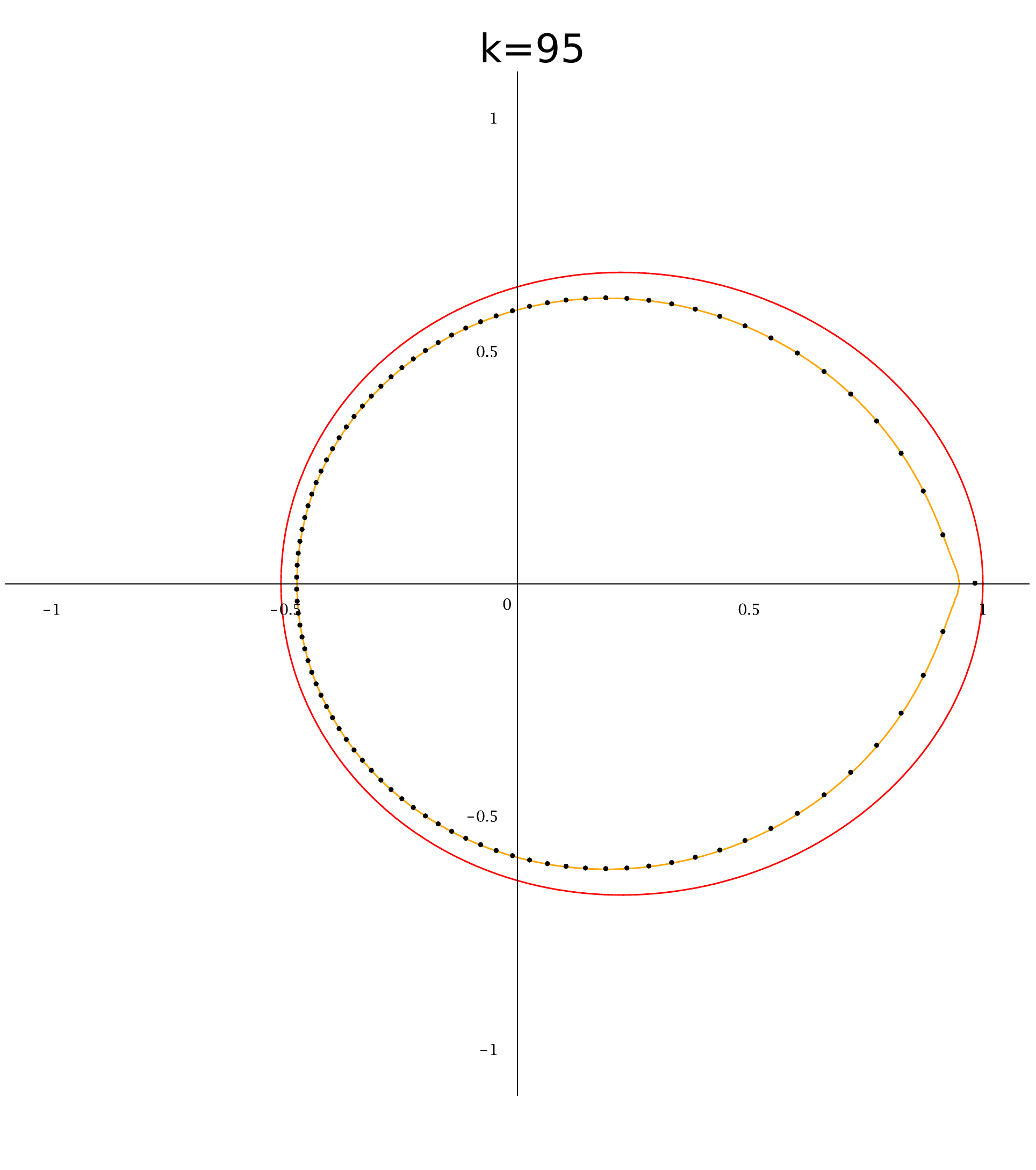}\\
\caption{The zeros of $\pi_k(z)$ for $s=3$, $l=0$,  $t<t_c$, and $k=65, 80, 95$. The  red contour is $\mathcal{C}$ while the   yellow contour is the curve  \protect\eqref{phi_deform}.}
\label{fzero_pre}
\end{figure}
We now know that the zeros of $\pi_k(z)$ accumulate on the curve $\mathcal{C}$. We still need to determine the asymptotic zero  distribution.
  Due to the strong asymptotic of $\pi_k(z)$ we have from \eqref{exp_Omega_infty}
\[
\lim_{k\to\infty}\frac{1}{k}\log(\pi_k(z))=\int_{\mathcal{C}}\log(z-\xi)d\nu(\xi),
\]
uniformly on compact subsets of the exterior of $\mathcal{C}$.  Furthermore $\mathcal{C}$ is the boundary of its polynomial convex hull. Then it follows  that the measure  $d\nu$ in \eqref{dnu} is the weak-star limit  of the zeros distribution  of  the polynomials $\pi_k(z)$ (see \cite{MhaskarSaff}, Theorem 2.3 and \cite{SaffTotik} Chapter 3).

\end{proof}

\section{Post-critical case}
In this section we assume that
\be
0<z_0< 1\,.
\ee
To perform the analysis similar to the pre-critical case, one has to choose the appropriate contour from the  homotopy class of the contour $\Sigma$ that encircles the branch cut $[0,1]$.
 Since the point $z_0$ is lying in $(0,1)$, the  family of curves $\mathcal{C}_r$ ,  $0<r\leq z_0$ defined in \eqref{def_Gammar} are loops encircling $z=0$ and crossing the real line in $z=r$.
 The regions where $\Re \phi_r(z)<0$  with $\phi_r(z)$  defined in \eqref{def_phir}, are depicted in the figures below in red for two values of the parameter $r$.
\begin{figure}[H]
\centering
\includegraphics{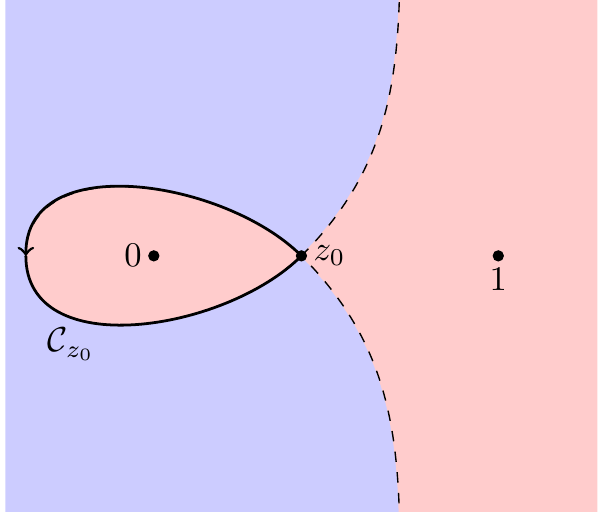}\includegraphics{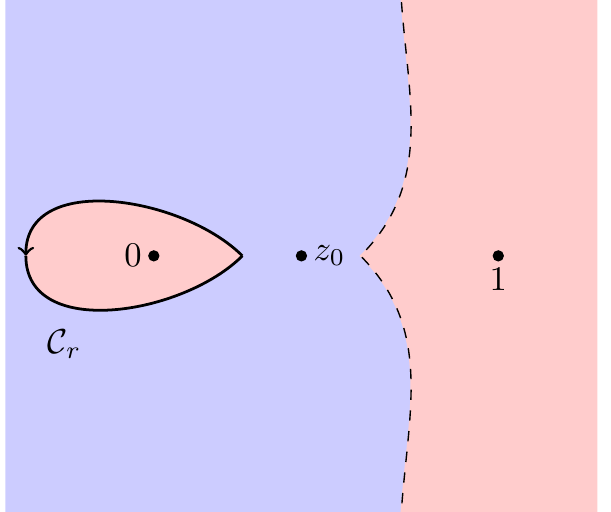}\\
\caption{The contour $\mathcal{C}_r$ for $r=z_0$ on the left and for $0<r<z_0$ on the right. The region where  $\Re \phi_r(z)<0$, with $\phi_r(z)$  defined in \protect\eqref{def_phir},  is red.}
\label{tikz_gamma_R_post1}
\end{figure}

\begin{wrapfigure}[15]{rhtb!}{0.35\textwidth}
\includegraphics[width=0.35\textwidth]{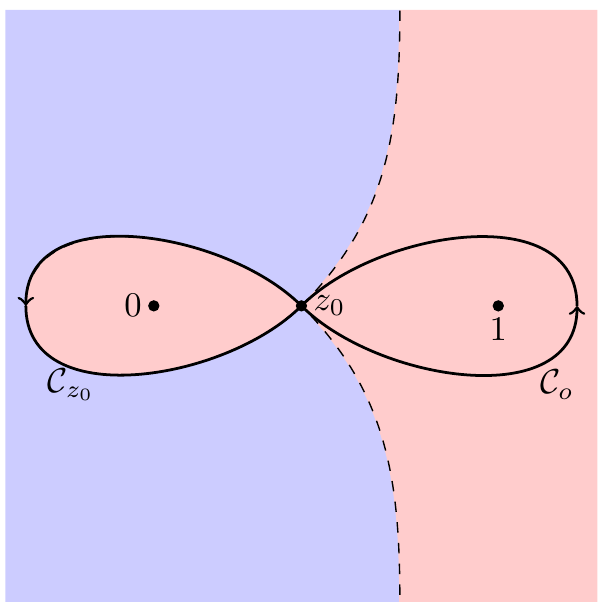}
\caption{The contour $\mathcal{C}_{z_0}\cup \mathcal{C}_o$. The region $\Re \phi_{z_0}(z)<0$  is plotted in red.}
\label{tikz_gammapost3}
\end{wrapfigure}
In order to perform the asymptotic analysis of the Riemann-Hilbert problem \eqref{RH_Y0},  \eqref{RH_Y11} and  \eqref{RH_Y2}, we need to deform  $\Sigma $ to a homotopic  contour $\mathcal{C}_r\cup \mathcal{C}_o$ in such a way that the $\Re\phi_r(z)$ is  negative on $\mathcal{C}_o$.

 It is clear from Figure~\ref{tikz_gamma_R_post1} that   the only possibility is to deform the contour $\Sigma$ to the contour $\mathcal{C}_{z_0}\cup \mathcal{C}_o$ with $\mathcal{C}_o$ as depicted in 
 Fig.~\ref{tikz_gammapost3}.
For simplifying the notation we define the following.
  \begin{definition} Let
\be
\label{def_phi_post}
\phi(z) :=\phi_{z_0}(z)=\log\left(\frac{z}{z_0}\right)-\frac{z}{z_0}+ 1\qquad z \in \C\setminus(0,+\infty)\ ,
\ee
with
\be
\phi_{+}(z) - \phi_{-}(z) = -2\pi i \quad z \in (0,+\infty)\ .
\ee
and 
\be
\mathcal{C}:=\mathcal{C}_{z_0}
\ee
with $\phi_{z_0}(z)$ defined in  \eqref{def_phir} and $\mathcal{C}_{z_0}$ defined in \eqref{def_Gammar}. 
\end{definition}

 According to \eqref{def_phir1} on the contour $\mathcal{C}$ we have
\begin{align}
g_{+}(z)+g_{-}(z)&=V(z)+\ell\\
g_{+}(z)-g_{-}(z)&=-\phi(z)\ ,
\end{align}
where
\[
\ell = \log(z_0) - 1\ .
\]

\subsection{First transformation $\tilde Y \mapsto S$}
Since $\mathcal{C}\cup\mathcal{C}_{o}$ is homotopically equivalent to $\Sigma$ in $\C\setminus [0,1]$ when $z_0<1$, we can deform the contour $\Sigma$ appearing in the Riemann--Hilbert problem \eqref{RH_Y0}-\eqref{RH_Y2}  for the matrix $\tilde Y(z)$ 
to $\mathcal{C}\cup\mathcal{C}_{o}$. Define the modified matrix
\be
S(z) = e^{-k(\ell/2)\sigma_3}\tilde Y(z)e^{-kg(z)\sigma_3}e^{k(\ell/2)\sigma_3} \qquad z \in \C\setminus (\mathcal{C}\cup\mathcal{C}_o\cup [0,1]),
\ee
where $\tilde{Y}(z)$ has been defined in \eqref{Ytilde}.
Then the matrix $S(z)$  is the unique solution of the following Riemann-Hilbert problem with standard large $z$ behaviour at $z=\infty$:\\
\begin{wrapfigure}[5]{rhtb!}{0.5\textwidth}
\includegraphics[width=0.5\textwidth]{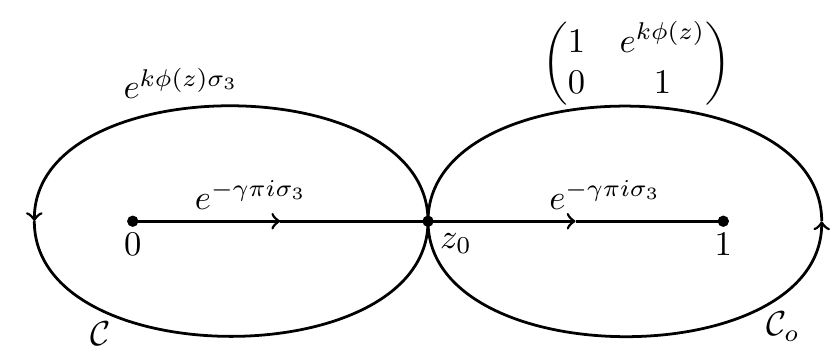}
\caption{The jump matrices for $S(z)$. The function $\phi(z)$ is defined in \protect\eqref{def_phi_post}.}
\label{fig:jump_U_post}
\end{wrapfigure}
\begin{itemize}
\item[1.] $S(z)$ is analytic for  $z\in \C\setminus(\mathcal{C}\cup\mathcal{C}_o\cup [0,1]),$.
\item[2.] Jump discontinuities (with $\phi(z)$ as in \eqref{def_phi_post}):
\be
S_{+}(z) = S_{-}(z)
\left\{
\begin{array}{cl}
e^{k\phi(z)\sigma_3} & z \in \mathcal{C}\\
\begin{pmatrix}1&  e^{k\phi(z)} \\ 0 & 1 \end{pmatrix} & z \in \mathcal{C}_{o}\\
e^{-\gamma\pi i\sigma_3} & z\in (0,1)
\end{array}
\right.
\ee

\item[3.] Endpoint behaviour at $z=0$ and $z=1$:
\be
\begin{split}
S(z)z^{-\frac{\gamma}{2}\sigma_3} &= \order{1} \quad z \to 0\\
S(z)(z-1)^{\frac{\gamma}{2}\sigma_3} &= \order{1} \quad z \to 1\ .
\end{split}
\ee
\end{itemize}

\begin{itemize}
\item[4.] Large $z$ boundary behaviour:
\be
S(z) = I+{\mathcal O}\left(\frac{1}{z}\right)\ ,\quad z \to \infty\ .
\ee
\end{itemize}

The orthogonal polynomials $\pi_k(z)$ are recovered from the matrix $S(z)$ using the relation
\be
\pi_k(z)=e^{kg(z)}\left(1-\dfrac{1}{z}\right)^{\frac{\gamma}{2}}S_{11}(z).
\ee
\subsection{The second transformation $S \mapsto T$: opening of the lenses}
Consider two extra loops $\mathcal{C}_{i}$ and $\mathcal{C}_{e}$ as shown in Figure \ref{fig:jump_T_postT}.
These define new domains $\Omega_{0}$, $\Omega_{1}$, $\Omega_{2}$ and $\Omega_{\infty}$.
Define
\be
\label{Tk1}
T(z)
=\left\{
\begin{array}{ll}
S(z) & z \in \Omega_{\infty}\cup \Omega_0\cup{\Omega_{3}}\\
S(z)\begin{pmatrix}1&  0 \\ -e^{k\phi(z)} & 1 \end{pmatrix} & z \in \Omega_{1}\\
S(z)\begin{pmatrix}1&  0 \\ e^{-k\phi(z)} & 1 \end{pmatrix} & z \in \Omega_{2}\ .
\end{array}
\right.
\ee
Then this matrix-valued function has the following jump discontinuities:
\be
\begin{split}
\label{Tk2}
T_{+}(z)
&=T_{-}(z)v_T(z),\quad z\in\Sigma_T\\
v_T(z)&=\left\{
\begin{array}{cl}
\begin{pmatrix}0&1\\-1&0\end{pmatrix}  & z \in \mathcal{C}\\
\begin{pmatrix}1& 0 \\ e^{-k\phi(z)} & 1 \end{pmatrix}  & z \in \mathcal{C}_{e}\\
\begin{pmatrix}1& 0 \\ e^{k\phi(z)} & 1 \end{pmatrix}  & z \in \mathcal{C}_{i}\\
\begin{pmatrix}1 & e^{k\phi(z)}\\0 & 1 \end{pmatrix}  & z \in \mathcal{C}_{o}\\
e^{-\gamma\pi i\sigma_3}  & z \in (0,1)\,,
\end{array}
\right.
\end{split}
\ee
where $\Sigma_T$ is the contour defined in the Figure~\ref{fig:jump_T_postT}.
\tfigure{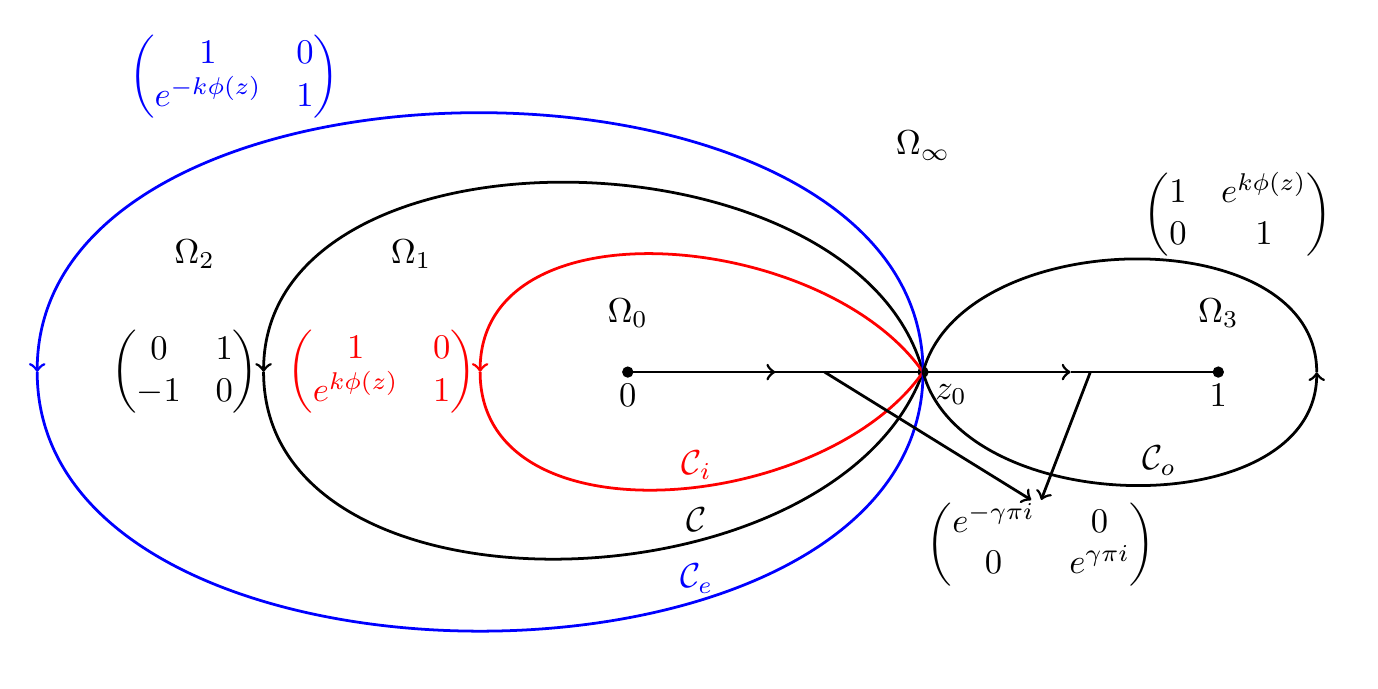}{The jump matrices for $T(z)$ and the contour $\Sigma_T=\mathcal{C}\cup\mathcal{C}_i\cup\mathcal{C}_e\cup\mathcal{C}_o\cup(0,1)$.}{fig:jump_T_postT}{scale=1}

Endpoint behaviour:
\be
T(z)z^{-\frac{\gamma}{2}\sigma_3} = \order{1} \quad z \to 0\ , \qquad T(z)(z-1)^{\frac{\gamma}{2}\sigma_3} = \order{1} \quad z \to 1\ .
\ee
Large $z$ boundary behaviour:
\be
T(z) = I+{\mathcal O}\left(\frac{1}{z}\right)\ ,\quad z \to \infty\ .
\ee

\begin{proposition}There exists a constant $c_0>0$ so that 
\[
v_T(z)=I+\order{e^{-c_0k}}\quad \mbox{as $k\to\infty$}
\]
uniformly for $z\in\mathcal{C}_o\cup\mathcal{C}_i\cup\mathcal{C}_e\backslash {\cal U}_{z_0}$, where ${\cal U}_{z_0}$ is a small neighbourhood of $z_0$.
\end{proposition}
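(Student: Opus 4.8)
The plan is to reduce the proposition to a sign analysis of the harmonic function $\Re\phi(z)$, with $\phi=\phi_{z_0}$ as in \eqref{def_phi_post}. First I would observe that on $\mathcal{C}_e$, $\mathcal{C}_i$ and $\mathcal{C}_o$ the jump matrix in \eqref{Tk2} differs from the identity through a single off-diagonal entry, equal to $e^{-k\phi(z)}$ on $\mathcal{C}_e$ and to $e^{k\phi(z)}$ on $\mathcal{C}_i$ and on $\mathcal{C}_o$. Hence it suffices to prove
\[
\Re\phi(z)>0 \text{ on } \mathcal{C}_e\setminus\mathcal{U}_{z_0},
\qquad
\Re\phi(z)<0 \text{ on } (\mathcal{C}_i\cup\mathcal{C}_o)\setminus\mathcal{U}_{z_0},
\]
and then to conclude by compactness: the set $(\mathcal{C}_e\cup\mathcal{C}_i\cup\mathcal{C}_o)\setminus\mathcal{U}_{z_0}$ is compact, $\Re\phi$ is continuous on it (even across the cut $(0,+\infty)$, since the two boundary values of $\phi$ differ by $-2\pi i$), so $\pm\Re\phi$ attains a positive minimum $c_0>0$ on the respective pieces; then $|e^{\mp k\phi(z)}|=e^{\mp k\Re\phi(z)}\le e^{-c_0k}$ uniformly there, which is exactly the claimed estimate.

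Next I would pin down the sign of $\Re\phi$ near $\mathcal{C}$. Since $\Re\phi(z)=\log(|z|/z_0)-\Re(z)/z_0+1$ is harmonic off $(0,+\infty)$ and, by \eqref{def_Gammar}, $\mathcal{C}=\mathcal{C}_{z_0}$ is the bounded component of its zero set, and since $\Re\phi(z)\to-\infty$ as $z\to 0$, the maximum principle forces $\Re\phi<0$ throughout $\inte(\mathcal{C})$. Away from $z=z_0$ one has $\phi'(z)=1/z-1/z_0\neq 0$, so $\nabla\Re\phi\neq 0$ on $\mathcal{C}\setminus\{z_0\}$ and $\Re\phi$ changes sign transversally across $\mathcal{C}$ there; in particular $\Re\phi>0$ on a one-sided neighbourhood of $\mathcal{C}$ in $\exte(\mathcal{C})$. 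At the unique critical point $z=z_0$ one computes $\phi(z_0)=0$, so the level line has a saddle there and no neighbourhood of $z_0$ carries a fixed sign — this is precisely why $\mathcal{U}_{z_0}$ must be excised. Consequently the lens boundaries $\mathcal{C}_i\subset\inte(\mathcal{C})$ and $\mathcal{C}_e\subset\exte(\mathcal{C})$ can be, and by construction are, taken close enough to $\mathcal{C}$ that $\Re\phi<0$ on $\mathcal{C}_i\setminus\mathcal{U}_{z_0}$ and $\Re\phi>0$ on $\mathcal{C}_e\setminus\mathcal{U}_{z_0}$.

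The step that needs genuine care — and what I expect to be the main obstacle — is the outer loop $\mathcal{C}_o$: one must verify it lies in the connected component of $\{\Re\phi<0\}$ which, together with $\mathcal{C}_{z_0}$, encircles $[0,1]$. I would establish this from the explicit behaviour of $\Re\phi$: on the real axis $\partial_x\Re\phi|_{y=0}=1/x-1/z_0$, so $\Re\phi$ increases on $(0,z_0)$ and decreases on $(z_0,\infty)$, attaining its maximum $0$ at $z_0$; hence $\Re\phi\le 0$ on all of $(0,1]$ with equality only at $z_0$, and in particular $\Re\phi(1)=\log(1/z_0)-1/z_0+1<0$ because $x\mapsto\log x-x+1$ vanishes at $x=1$ and is strictly decreasing for $x=1/z_0>1$. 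Combining this with $\Re\phi\to-\infty$ as $\Re z\to+\infty$ and $\Re\phi\to+\infty$ as $\Re z\to-\infty$ fixes the sign of $\Re\phi$ in the regions cut out by $\{\Re\phi=0\}$ and shows that the region to the right of $\mathcal{C}$ wrapping around $[0,1]$ — the red region in Figure~\ref{tikz_gammapost3} — lies in $\{\Re\phi<0\}$. Choosing $\mathcal{C}_o$ inside it gives $\Re\phi<0$ on $\mathcal{C}_o\setminus\mathcal{U}_{z_0}$. This is the contour-selection step already summarised just before the proposition; turning that picture into a fully rigorous argument about the level set of $\Re\phi_{z_0}$ is the only nontrivial point, everything else being the compactness argument of the first paragraph.
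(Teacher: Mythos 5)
Your proof is correct and follows the same route as the paper: reduce to the sign of $\Re\phi$ on $\mathcal{C}_e$, $\mathcal{C}_i$ and $\mathcal{C}_o$ away from $z_0$, then invoke compactness. The paper's own proof is a single sentence appealing to ``the construction'' and Figure~\ref{tikz_gammapost3}; you have simply supplied the sign analysis (maximum principle inside $\mathcal{C}$, transversality away from the saddle at $z_0$, monotonicity of $\Re\phi$ on $(0,1]$ and the asymptotics as $\Re z\to\pm\infty$) that the paper leaves implicit.
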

The proof of the proposition follows in a straightforward way by observing that by contruction (see Figure~\ref{tikz_gammapost3}) $\Re\phi(z)>0$ for $z\in\mathcal{C}_e\backslash {\cal U}_{z_0}$ and $\Re\phi(z)<0$ for $z\in\mathcal{C}_i\cup\mathcal{C}_o\backslash {\cal U}_{z_0}$.
Therefore we have 
\[
v_T(z)\to v^{\infty}(z)\quad \mbox{as $k\to\infty$}
\]
exponentially fast,  where 
\begin{equation}
\label{vinfinity}
v^{\infty}(z)=\left\{
\begin{array}{cl}
\begin{pmatrix}0&1\\-1&0\end{pmatrix}  &\mbox{ as $z\in\mathcal{C}_k$},\\
&\\
e^{-\pi i \gamma\sigma_3} &\mbox{ as $z\in(0,1)$},\\
&\\
I&\mbox{ as $z\in\mathcal{C}_i\cup\mathcal{C}_e\cup\mathcal{C}_o$}.
\end{array}\right.
\end{equation}

The polynomials $\poly_{k}(z)$ can be expressed in terms of $T(z)$ in the following way:
\begin{equation}
\begin{split}
\label{pik_T}
\poly_k(z) &=e^{kg(z)}\left(1-\frac{1}{z}\right)^{\frac{\gamma}{2}}S_{11}(z)=\\
&=e^{kg(z)}\left(1-\frac{1}{z}\right)^{\frac{\gamma}{2}}\left\{
\begin{array}{ll}
T_{11}(z) & z \in \Omega_{\infty}\cup \Omega_{0}\cup \Omega_{3}\\
\displaystyle \left[T(z)\begin{pmatrix}1& 0 \\ e^{k\phi(z)} & 1 \end{pmatrix}\right]_{11} & z \in \Omega_{1}\\
\displaystyle \left[T(z)\begin{pmatrix}1& 0 \\ -e^{-k\phi(z)} & 1 \end{pmatrix}\right]_{11} & z \in \Omega_{2}
\end{array}
\right.\\
&=e^{kg(z)}\left(1-\frac{1}{z}\right)^{\frac{\gamma}{2}}\left\{
\begin{array}{ll}
T_{11}(z) & z \in \Omega_{\infty}\cup \Omega_{0}\cup \Omega_{3}\\
T_{11}(z) + e^{k\phi(z)}T_{12}(z) & z \in \Omega_{1}\\
T_{11}(z) - e^{-k\phi(z)}T_{12}(z) & z \in \Omega_{2}\,.
\end{array}
\right.
\end{split}
\end{equation}

\subsection{The outer parametrix for large $z$}

Ignoring the exponentially small jumps and a small neighbourhood
$\mathcal U_{z_0}$ of $z_0$ where the uniform exponential decay does not
remain valid, we are led to the following RH problem for $P^{(\infty)}$:
\begin{enumerate}
\item $P^{(\infty)}$ is holomorphic in $\C\setminus (\mathcal{C} \cup [0,1])$,

\item$P^{(\infty)}$
satisfies the following jump conditions  on $\mathcal{C}$ and $(0,1)$:
\be
\label{RHP Pinfty b}
P_+^{(\infty)}(z)= P_-^{(\infty)}(z)
\left\{
\begin{array}{ll}
\begin{pmatrix}0&1\\-1&0\end{pmatrix} & z\in\mathcal{C}\\
&\\
e^{-\pi i \gamma\sigma_{3}} &z\in(0,1)
\end{array}
\right. .
\ee
\item$P^{(\infty)}(z)$ has the following behavior as $z\to\infty$,
\begin{equation}
P^{\infty}(z) = I+{\mathcal O}\left(\frac{1}{z}\right)\ ,\quad z \to \infty\ 
\label{RHP Pinfty c}.
\end{equation}
\end{enumerate}

The above Riemann-Hilbert problem can be solved explicitly in terms of the piecewise defined matrix function
\be
\chi(z):= \left\{\begin{array}{cl}
\begin{pmatrix}0& 1 \\ -1 & 0 \end{pmatrix}& z \in \inte(\mathcal{C})\\
I& z \in \exte(\mathcal{C})\ .
\end{array}\right.
\ee
Define
\be
\tilde P^{\infty}(z) := P^{\infty}(z)\chi^{-1}(z)\ .
\ee
The  matrix $\tilde P^{\infty}(z)$ has no jump on $\mathcal{C}$ and it satisfies the following RHP: 
\begin{enumerate}
\item $\tilde P^{\infty}$ is holomorphic in $\C\setminus [0,1]$,
\item Jump across $(0,1)$:
\be
\tilde P_{+}^{\infty}(z)=\tilde P_{-}^{\infty}(z)\left\{\begin{array}{lc}
e^{\gamma\pi i\sigma_3} & z \in (0,z_0)\\
&\\
e^{-\gamma\pi i\sigma_3} & z \in (z_0,1)
\end{array}\right.
\ee
\item Large $z$ boundary behaviour:
\be
\tilde P^{\infty}(z) = I+{\mathcal O}\left(\frac{1}{z}\right)\ ,\quad z \to \infty\ .
\ee
\end{enumerate}
A particular solution is given by
\be
\tilde P^{\infty}(z) = \left[\frac{(z-z_0)^2}{z(z-1)}\right]^{\frac{\gamma}{2}\sigma_3}
\ee
which leads to the particular solution 
\be
\label{eq:global_parametrix_post}
P^{\infty}(z) = \left[\frac{(z-z_0)^2}{z(z-1)}\right]^{\frac{\gamma}{2}\sigma_3}\chi(z)= 
\left\{\begin{array}{ll}
\displaystyle \left[\frac{(z-z_0)^2}{z(z-1)}\right]^{\frac{\gamma}{2}\sigma_3}\begin{pmatrix}0& 1 \\ -1 & 0 \end{pmatrix}& z \in \inte(\mathcal{C})\\
\displaystyle \left[\frac{(z-z_0)^2}{z(z-1)}\right]^{\frac{\gamma}{2}\sigma_3} & z\in \exte(\mathcal{C})\ .
\end{array}\right.
\ee

\subsection{The local parametrix at $z=z_0$}
The aim of this section is to construct a local parametrix $P^0(z)$  in a small  neighbourhood
${\mathcal U}_{z_0}$ of $z_0$   having  the same jump properties as $T(z)$  for $z$ near $z_0$  and matching  the outer parametric $P^{\infty}(z)$ 
in the limit $k\to \infty$ for $z\in\partial {\cal U}_{z_0}$.
\subsubsection{RH problem for $P^{0}(z)$} 
\begin{enumerate}
\item $P^{0}(z)$ is analytic for  $z\in \overline{ {\cal U}}_{z_0}\setminus \Sigma_T$,
\item $P_+^{0}(z)=P_-^{0}(z)v_T(z)$ for $z\in {\cal U}_{z_0}\cap\Sigma_T$,
\item for $z\in\partial {\cal U}_{z_0}$, we have
\be
\label{match}
P^0(z)=P^{\infty}(z)(I+o(1))\qquad \mbox{as $k\to\infty$ and $z\in \partial {\mathcal U}_{z_0}$. }
\ee

\end{enumerate}
In order to build such  local parametrix near the point $z=z_0$ we first construct  a new matrix function $B(z)$ from   $P^0(z)$. Let us first  define
\be
\label{def_Delta}
\Delta(z)
=\left\{
\begin{array}{ll}
I  & \Im(z)<0\\
e^{-\gamma\pi i\sigma_{3}} & \Im(z)>0,\ 
\end{array}
\right.
\ee
and  the matrix ${\cal Q}$ as follows
\be
\label{def_Q1}
{\cal Q}(z)=
\left\{
\begin{array}{cl}
\begin{pmatrix}
1&-e^{k\phi(z)}\\
0&1\end{pmatrix}&\mbox{ for $z\in\Omega_3\cap  {\mathcal U}_{z_0}$}\\
\begin{pmatrix}
1&0\\
-e^{k\phi(z)}&1\end{pmatrix}\begin{pmatrix}0&-1\\1&0\end{pmatrix}&\mbox{ for $z\in\Omega_0\cap {\mathcal U}_{z_0}$}\\
\begin{pmatrix}
0&-1\\
1&0\end{pmatrix}&\mbox{ for $z\in(\Omega_1\backslash\Omega_0)\cap  {\mathcal U}_{z_0}$}\\
I&\mbox{ elsewhere}.
\end{array}\right.
\ee
Then the matrix $B(z)$ is  defined  from $P^0(z)$  by the relation
\begin{equation}
\label{Uk}
B(z)=P^0(z){\cal Q}(z)e^{k\phi(z)\sigma_3/2}\Delta(z)^{-1}.
\end{equation}
The matrix $B(z)$ satisfies the jump relations specified in  Figure \ref{fig:Tamaramodel} in a neighbourhood of $z_0$.
In the next section we construct the solution of the so called model problem, namely a $2\times2$  matrix $\Psi$ that has the same jumps as the matrix $B$.

\subsubsection{Model problem}
\begin{wrapfigure}[10]{rhtb!}{0.4\textwidth}
\includegraphics[width=0.4\textwidth]{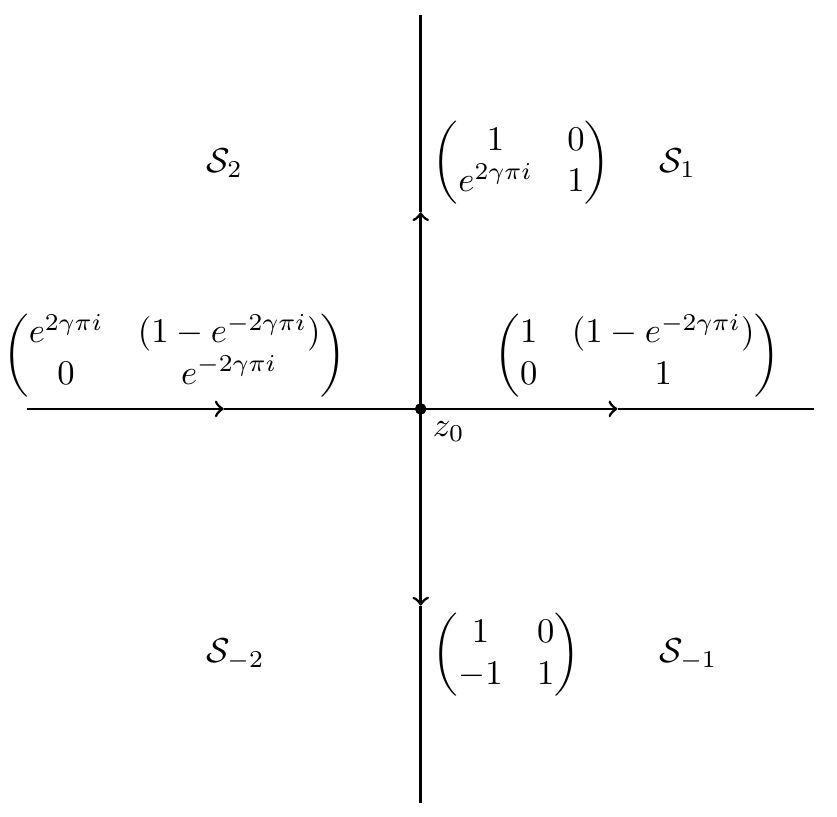}
\caption{The jumps for the matrix $B$}
\label{fig:Tamaramodel}
\end{wrapfigure}
Consider the model problem for the $2\times2$ matrix function $\Psi(\xi)$  analytic in $\mathbb{C}\backslash \{\mathbb{R}\cup i\mathbb{R}\}$ with jumps and boundary behaviour
\begin{align}
\label{model1}
\Psi_{+}(\xi)&=\Psi_{-}(\xi)v_{\Psi}(\xi)\\
\Psi(\xi)&=\left(I+\frac{\Psi_1}{\xi}+\frac{\Psi_2}{\xi^2}+\frac{\Psi_3}{\xi^3}+\order{\frac{1}{\xi^{4}}}\right)\left[e^{-\frac{\xi^2}{2}}\xi^{\gamma}\right]^{\sigma_{3}}\\
&\qquad \text{as }\xi\to\infty\ ,
\label{model2}
\end{align}
with $\Psi_1$  $\Psi_2$ and $\Psi_3$ constant matrices (independent from $\xi$) and 
where the matrix $v_{\Psi}(\xi)$ is defined as 
\be
v_{\Psi}(\xi)=
\left\{
\begin{array}{cl}
\begin{pmatrix}1& 1-e^{-2\gamma\pi i}\\ 0 & 1 \end{pmatrix}  &\xi\in\mathbb{R}^+\\
\begin{pmatrix}1&0  \\ e^{2\gamma\pi i} & 1 \end{pmatrix}  & \xi \in i\mathbb{R}^+\\
\begin{pmatrix}e^{2\gamma\pi i}& 1-e^{-2\gamma\pi i}\\0 & e^{-2\gamma\pi i} \end{pmatrix}  & \xi\in\mathbb{R}^-\\
\begin{pmatrix}1& 0\\-1 & 1 \end{pmatrix}  & \xi\in -i\mathbb{R}^+ \\
\end{array}
\right. .
\ee
%
The solution of the Riemann--Hilbert problem \eqref{model1} and \eqref{model2} is obtained in the following way, \cite{DeiftZhou,JM,M}.
Recall the parabolic cylinder equation
\begin{equation}
\label{parabolic}
\dfrac{d^2}{d\xi^2}f-\left(\dfrac{1}{4}\xi^2+a\right)f=0,
\end{equation}
(see, e.g., Chapter 19, \cite{AbramowitzStegun}).
This equation has a non-trivial solution $\U(a,\xi)$ associated to any $a\in\mathbb{C}$ specified by the asymptotic behaviour
\be
\label{asympU}
\U(a,\xi) = \xi^{-a-\frac{1}{2}}e^{-\frac{\xi^2}{4}}\left(1-{\frac{\frac{3}{4}+a^2+2a}{2\xi^2}}+\order{\frac{1}{\xi^4}}\right), \quad \xi \to \infty, |\arg(\xi)|<\frac{\pi}{2},
\ee
which is an entire analytic function of $\xi$. Three other solutions can be obtained by symmetry:
\begin{equation}
\label{parabolicsol}
\U(a,-\xi),\quad \U(-a,i\xi),\quad \U(-a,-i\xi).
\end{equation}
The relations among the above four solutions are 
\begin{align}
\label{rel1}
\U(-a,\pm i\xi)&=\dfrac{\Gamma(\frac{1}{2}+a) }{\sqrt{2\pi}}\left(e^{-i\pi(a-\frac{1}{2})/2}\U(a,\pm\xi)+
e^{i\pi(a-\frac{1}{2})/2}\U(a,\mp\xi)\right)\\
\label{rel2}
\U(a,\pm \xi)&=\dfrac{\Gamma(\frac{1}{2}-a) }{\sqrt{2\pi}}\left(e^{-i\pi(a+\frac{1}{2})/2}\U(-a,\pm i\xi)+
e^{i\pi(a+\frac{1}{2})/2}\U(-a,\mp i\xi)\right)
\end{align}
Moreover, solutions for $a$ and $a+1$ are connected by
\be
\label{rel3}
\dfrac{d}{d \xi}\U(a,\xi)+\dfrac{\xi}{2}\U(a,\xi)+(a+\frac{1}{2})\U(a+1,\xi)=0.
\ee

By using \eqref{rel1}-\eqref{rel3}, the solution of the Riemann--Hilbert problem \eqref{model1} and \eqref{model2} takes the form
\begin{equation}
\label{model_problem}
\Psi(\xi)=\left\{\begin{array}{ll}
\begin{pmatrix}
\U(-\gamma-\frac{1}{2},\sqrt{2}\xi)&  \mp\frac{i \gamma e^{\mp\frac{i\pi\gamma}{2}}}{\sqrt{2}\beta_{21}}\U(\gamma+\frac{1}{2},\mp i\sqrt{2}\xi)\\
&\\
 \frac{\gamma}{\sqrt{2}\beta_{12}}\U(-\gamma+\frac{1}{2},\sqrt{2}\xi)&e^{\mp\frac{i\pi\gamma}{2}}\U(\gamma-\frac{1}{2},\mp i\sqrt{2}\xi)\end{pmatrix}2^{-\frac{\gamma}{2}\sigma_3},& \mbox{for $\xi\in{\cal S}_{\pm 1}$}\\
 &\\
 \begin{pmatrix}
e^{\pm i\pi\gamma}\U(-\gamma-\frac{1}{2},-\sqrt{2}\xi)& \mp\frac{i\gamma e^{\mp\frac{i\pi\gamma}{2}}}{\sqrt{2}\beta_{21}}\U(\gamma+\frac{1}{2},\mp i\sqrt{2}\xi)\\
&\\
 -\frac{\gamma e^{\pm i\pi\gamma}}{\sqrt{2}\beta_{12}}\U(-\gamma+\frac{1}{2},-\sqrt{2}\xi).
&e^{\mp \frac{i\pi\gamma}{2}}\U(\gamma-\frac{1}{2},\mp i\sqrt{2}\xi)\\
\end{pmatrix}2^{-\frac{\gamma}{2}\sigma_3}& \mbox{for $\xi\in{\cal S}_{\pm 2}$ }\\
 \end{array}\right.
\end{equation}
where
\begin{equation}
\label{beta}
\beta_{12}=-e^{-i\pi\gamma}\frac{\sqrt{\pi}\gamma}{\Gamma(1-\gamma)2^{\gamma}}=\frac{\gamma}{2\beta_{21}}.
\end{equation}
Furthermore, from \eqref{asympU} one obtains the extra terms of  the asymptotic expansion of $\Psi(\xi)$ as $\xi\to \infty$ 
\begin{multline}
\label{Psiexpansion}
\Psi(\xi) = \left(I+
\frac{\gamma}{2}\begin{bmatrix}
0 & \frac{1}{\beta_{21}}\\
\frac{1}{\beta_{12}} & 0 
\end{bmatrix}\frac{1}{\xi}
+
\begin{bmatrix}
-\gamma(\gamma-1) & 0\\
0& \gamma(\gamma+1) 
\end{bmatrix}\frac{1}{4\xi^2}\right.\\
\left.+\frac{\gamma}{8}\begin{bmatrix}
0 & \frac{(\gamma+1)(\gamma+2)}{\beta_{21}}\\
-\frac{(\gamma-1)(\gamma-2)}{\beta_{12}} & 0 
\end{bmatrix}\frac{1}{\xi^3}
\right)\xi^{\gamma\sigma_3}e^{-\frac{\xi^{2}}{2}\sigma_3},
\end{multline}

namely
\begin{equation}
\label{Psi123}
\begin{split}
&\Psi_1=\frac{\gamma}{2}\begin{bmatrix}
0 & \frac{1}{\beta_{21}}\\
\frac{1}{\beta_{12}} & 0 
\end{bmatrix},\quad \Psi_2=\dfrac{1}{4}\begin{bmatrix}
-\gamma(\gamma-1) & 0\\
0& \gamma(\gamma+1) 
\end{bmatrix},\\
&\Psi_3=\frac{\gamma}{8}\begin{bmatrix}
0 & \frac{(\gamma+1)(\gamma+2)}{\beta_{21}}\\
-\frac{(\gamma-1)(\gamma-2)}{\beta_{12}} & 0 
\end{bmatrix}.
\end{split}
\end{equation}
\subsubsection{Construction of the local parametrix}
From \eqref{Uk} and the previous section we are now ready to specify the form of the local parametrix at $z=z_0$, by
\be
P^0(z)=E(z)\Psi(\sqrt{k}w(z))\Delta(z)e^{-\frac{k}{2}\phi(z)\sigma_3}{\cal Q}(z)^{-1}
\ee
where $E(z)$ is an  analytic  matrix  in a neighbourhood of ${\cal U}_{z_0}$, the matrix $\Psi$ has been defined in \eqref{model_problem} and $\Delta(z)$ and ${\cal Q}(z)$ have been defined in \eqref{def_Delta} and \eqref{def_Q1} respectively.  The function  $w(z)$ is a conformal mapping from a neighbourhood of $z_0$ to a neighbourhood of $0$ and it is specified by 
\be
w^2(z):=\left\{
\begin{array}{cc}
-\phi(z) - 2\pi i& z \in {\mathcal U}_{z_0}\cap \C_{-}\\
-\phi(z) & z \in {\mathcal U}_{z_0}\cap \C_{+}
\end{array}
\right.\ .
\ee
We observe that 
\be
\label{exp_w}
w(z) = \frac{1}{\sqrt{2}z_0}(z-z_0)-\dfrac{1}{3\sqrt{2}z_0^2}(z-z_0)^2+\order{(z-z_0)^3} \qquad z \to z_0\ .
\ee
The matrix $E(z)$ is obtained from condition \eqref{match} which, when combined with \eqref{Psiexpansion}, gives
 \begin{equation}
 \label{Pmatching}
 \begin{split}
 &P^{\infty}(z)(P^0(z))^{-1}=P^{\infty}(z){\cal Q}(z)e^{\frac{k}{2}\phi(z)\sigma_3}\Delta(z)^{-1}\Psi(\sqrt{k}w(z))^{-1}E(z)^{-1}\\
 &=P^{\infty}(z)\chi(z)^{-1}\Delta(z)^{-1}(\sqrt{k}w(z))^{-\gamma\sigma_3}\left(I-\frac{\gamma}{2}\begin{bmatrix}
0 & \frac{1}{\beta_{21}}\\
\frac{1}{\beta_{12}} & 0 
\end{bmatrix}\frac{1}{\sqrt{k}w(z)}+
\order{k^{-1}}\right)E^{-1}(z),\quad k\to\infty,\quad z\in\partial{\cal U}_{z_0}
 \end{split}
 \end{equation}
 where we used the fact that ${\cal Q}(z)\to \chi(z)^{-1}$ exponentially fast as  $ k\to\infty,$  and $z\in\partial{\cal U}_{z_0}$.
 From the above expression and \eqref{match},  it turns out that the matrix $E(z)$ takes the form
\be
\label{E}
E(z)=P^{\infty}(z)\chi(z)^{-1}\Delta(z)^{-1}(\sqrt{k}w(z))^{-\gamma\sigma_3}= \left[\frac{(z-z_0)^2}{z(z-1)}\right]^{\frac{\gamma}{2}\sigma_3} 
\Delta(z)^{-1}(\sqrt{k}w(z))^{-\gamma\sigma_3}
\ee
We observe that the function $E(z)$ is single valued in a neighbourhood of ${\cal U}_{z_0}$. Indeed the boundary values of $(z-z_0)^{\gamma\sigma_3}$ and $w^{-\gamma\sigma_3}$ cancel each other. The boundary value of $(z(z-1))^{\gamma}_+=(z(z-1))^{\gamma}_-e^{2\pi i\gamma}$, so that  $\Delta(z)(z(z-1))^{\frac{\gamma}{2}\sigma_3}$ remains single valued in a neighbourhood of $z_0$.

From the expression  \eqref{Pmatching} and \eqref{E} the matching between  $P^0(z)$ and $P^{\infty}(z)$ takes the form
\[
P^{\infty}(z)(P^0(z))^{-1}=E(z)\left(I-\frac{\gamma}{2}\begin{bmatrix}
0 & \frac{1}{\beta_{21}}\\
\frac{1}{\beta_{12}} & 0 
\end{bmatrix}\frac{1}{\sqrt{k}w(z)}+
\order{k^{-1}}\right)E^{-1}(z),\quad k\to\infty,\quad z\in\partial{\cal U}_{z_0},
\]
where $\gamma\in(0,1)$. It is clear from the above expression that the $(2,1)$-entry of the  matrix above is not small as $k\to\infty$. For this reason we need to introduce an improvement of the  parametrix.
\subsection{Improvement of the local parametrix}
In order to have  a uniformly small error for $k\to\infty$ we have to modify the parametrices as in \cite{Bertola_Lee}, \cite{Claeys}, \cite{Claeys_Grava}:
\be
\hat{P}^{\infty}(z):=\left(I+\frac{C}{z-z_0}\right)P^{\infty}(z) \ ,
\ee
where  $C$ is a nilpotent matrix to be determined and 
\be
\hat{P}^{0}(z):=\hat{E}(z)\begin{pmatrix}1 & 0\\ -\frac{\Psi_{1,21}}{\sqrt{k}w(z)}&1\end{pmatrix}\Psi(\sqrt{k}w(z))\Delta(z)e^{-\frac{k}{2}\phi(z)\sigma_3}{\cal Q}(z)^{-1}
\ee
where the matrix $\Psi_1$ has been defined in \eqref{model2} and 
\be
\hat{E}(z)=\left(I+\frac{C}{z-z_0}\right)E(z).
\ee
 With those improved definitions of the parametrices, $\hat{P}^{\infty}(z)$ and  $\hat{P}^{0}(z)$ 
have the same Riemann--Hilbert jump discontinuities as before but they might  have poles at $z=z_0$. Note that $\hat{E}(z)$ also has a pole at 
$z=z_0$. However we can choose $C$ in such a way that $\hat{P}^{0}(z)$ is bounded in $z_0$.
This is accomplished  by
\[
C=-E(z_0)\begin{pmatrix}0& 0\\ -\frac{\Psi_{1,21}}{\sqrt{k}w'(z_0)}&0\end{pmatrix}\left(
E'(z_0)\begin{pmatrix}0 & 0\\ -\frac{\Psi_{1,21}}{\sqrt{k}w'(z_0)}&0\end{pmatrix}+E(z_0)\begin{pmatrix}1 & 0\\ \frac{\Psi_{1,21}w''(z_0)}{2\sqrt{k}(w'k(z_0))^2}&1\end{pmatrix}\right)^{-1}
\]
where 
\be
E(z_0)=\left(\dfrac{2z_0}{k(1-z_0)}\right)^{\frac{\gamma}{2}\sigma_3}e^{\pi i\frac{ \gamma}{2}\sigma_3}\quad \text{ and }\quad E'(z_0)=\gamma\sigma_3\dfrac{4z_0-1}{6z_0(1-z_0)}E(z_0).
\ee
From the above relation the matrix $C$ takes the form
\begin{equation}
\label{C}
C=\begin{pmatrix}0&0\\c k^{\gamma-\frac{1}{2}}&0\end{pmatrix}
\end{equation}
and 
\be
\label{def_c}
c=\left(\dfrac{1-z_0}{2z_0}\right)^{\gamma}e^{-\pi i \gamma}z_0(\Psi_{1})_{21} 2^{\frac{1}{2}}=-\dfrac{\Gamma(1-\gamma)}{\sqrt{2\pi  }}\left(\dfrac{1-z_0}{z_0}\right)^{\gamma}z_0.
\ee

The improved parametrix gives the following matching between $\hat{P}^{\infty}(z)$ and $\hat{P}^0(z)$ as $k\to\infty$ and $z_0\in\partial{\cal U}_{z_0}$ :
\be
\label{hatvR0}
\begin{split}
&\hat{P}^{\infty}(z)(\hat{P}^{0}(z))^{-1}\\&\quad=\hat{E}(z)\left(I-\dfrac{\Psi_1}{\sqrt{k} w(z)}+\dfrac{\Psi_1^2-\Psi_2}{kw(z)^2}+\dfrac{\Psi_2\Psi_1+\Psi_1\Psi_2-\Psi_1^3-\Psi_3}{k^{\frac{3}{2}}w(z)^3}+\order{k^{-2}}\right)\begin{pmatrix}1 & 0\\ -\frac{(\Psi_{1})_{21}}{\sqrt{k}w(z)}&1\end{pmatrix}^{-1}\hat{E}(z)^{-1}\\
&\quad =\hat{E}(z)\left[I-\dfrac{\begin{pmatrix}0&(\Psi_{1})_{12}\\0&0\end{pmatrix}}{\sqrt{k}w(z)}-\dfrac{\begin{pmatrix}(\Psi_{2})_{11}&0\\0&(\Psi_{2})_{22}-\frac{\gamma}{2}\end{pmatrix}}{kw(z)^2}\right.\\
&\quad\quad\quad\quad\quad\quad\quad\quad \left. +\dfrac{\begin{pmatrix}0&-(\Psi_{3})_{12}\\(\Psi_{1})_{21}(\Psi_{2})_{11}-(\Psi_{3})_{21}&0 \end{pmatrix}}{k^{\frac{3}{2}}w(z)^3}+\order{k^{-2}}\right]\hat{E}(z)^{-1}
\end{split},
\ee
which shows that  $\hat{P}^{\infty}(z)(\hat{P}^{0}(z))^{-1}= I+\order{1/k^{\alpha}}$, $\alpha>0$, as $k\to\infty$ when $z_0\in\partial{\cal U}_{z_0}$ 
for $\gamma\in[0,1)$.

\subsubsection{Riemann-Hilbert problem for the error matrix $R$}
\begin{wrapfigure}[7]{rhtb!}{0.5\textwidth}
\includegraphics[width=0.5\textwidth]{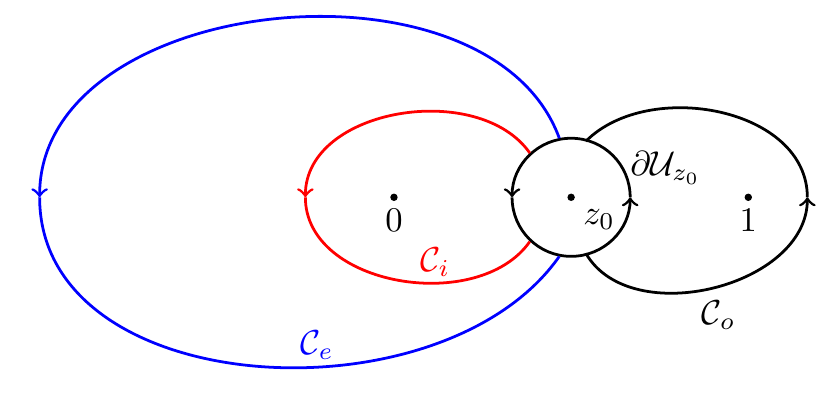}
\caption{The contour $\Sigma_R=\mathcal{C}_i\cup\mathcal{C}_e\cup\mathcal{C}_o\cup\partial{\cal U}_{z_0}$ where $\mathcal{C}_i,\mathcal{C}_e$ and $\mathcal{C}_o$ are defined only in $\mathbb{C}\backslash {\cal U}_{z_0}$.}
\label{figure: R}
\end{wrapfigure}
We now define the error matrix $R$ in two regions of the plane, using our approximations to the matrix  $T$.  Set
\begin{equation}\label{def_R_post}
R(z) =\begin{cases} T(z) \left( \hat{P}^{(0)}(z) \right)^{-1} ,& z \in \mathcal{U}_{z_0}\ ,\\
 T(z) \left( \hat{P}^{\infty}(z) \right)^{-1},& z\in \C\setminus\mathcal{U}_{z_0}.
\end{cases}
\end{equation}
The matrix $R$ is piecewise analytic in $\mathbb{C}$ with a jump across  $\Sigma_R$ (see Fig.~\ref{figure: R}).\\
{\bf Riemann--Hilbert problem for $R$}
\begin{enumerate}
\item $R$ is analytic in $\mathbb C\setminus \Sigma_R$,
\item For $z\in \Sigma_R$, we have 
\be
\label{RHR}
R_+(z)=R_-(z)v_R(z),
\ee
with
\be
v_R(z)=
\left\{\begin{split}
\hat{P}^{\infty}_-(z)v_T(z)\left(\hat{P}_+^{\infty}(z)\right)^{-1} &\quad z\in\Sigma_R\backslash \partial \mathcal{U}_{z_0}\\
\hat{P}^{\infty}(z)\left(\hat{P}^{(0)}(z)\right)^{-1} &\quad z\in\partial  \mathcal{U}_{z_0}
\end{split}\right.
\ee
\item As $z\to\infty$, we have
\begin{equation}
R(z)=I+\order{z^{-1}}.
\end{equation}
\end{enumerate}

The jump matrices across the contour $\Sigma_R \backslash  \partial\mathcal{U}_{z_0}$ are all exponentially close to $I$ for large $k$ because $v_{T}(z)$  converges  exponentially  fast to $v^{\infty}(z)$ defined in \eqref{vinfinity}  for $z\in  \mathbb{C}\backslash {\cal U}_{z_0}$ and the product  
\[\hat{P}^{\infty}_-(z)v^{\infty}(z) \left(\hat{P}_+^{\infty}(z)\right)^{-1}=\left(I+\frac{C}{z-z_0}\right)P^{\infty}_-(z)v^{\infty}(z) \left(P_+^{\infty}(z)\right)^{-1}\left(I+\frac{C}{z-z_0}\right)^{-1}=I
\]  
with $P^{\infty}(z)$ defined in \eqref{eq:global_parametrix_post} and $C$ in \eqref{C}.
The only jump that is not exponentially small is the one on $\partial  \mathcal{U}_{z_0}$.
Indeed one has from \eqref{hatvR0}
\be
\begin{split}
v_R(z)=\hat{P}^{\infty}(z)(\hat{P}^{0}(z))^{-1}&=I+\dfrac{\gamma e(z)^2}{2k^{\frac{1}{2}+\gamma}w(z)\beta_{21}}\begin{pmatrix}
\dfrac{c k^{\gamma-\frac{1}{2}}}{z-z_0}&-1\\\dfrac{c^2k^{2\gamma-1}}{(z-z_0)^2}&-\dfrac{c k^{\gamma-\frac{1}{2}}}{z-z_0}
\end{pmatrix}+\dfrac{\gamma(1-\gamma)}{4kw(z)^2}\begin{pmatrix}1&0\\ \dfrac{2c k^{\gamma-\frac{1}{2}}}{(z-z_0)}&-1\end{pmatrix}\\
&+\dfrac{1}{k^{\frac{3}{2}-\gamma}w(z)^3}
\begin{pmatrix}\order{k^{-\frac{1}{2}-\gamma}}&\order{k^{-2\gamma}}\\
\dfrac{\gamma(1-\gamma)}{4\beta_{12}e(z)^2}+\order{k^{-1}}&
\order{k^{-\frac{1}{2}-\gamma}}\end{pmatrix}+\order{k^{-2},k^{\gamma-\frac{5}{2}}}
\end{split}
\ee
where 
\be
\label{def_e}
e(z)=E_{11}(z)k^{\gamma/2}
\ee
 and we have  substituted the explicit expressions of the matrix $\Psi_1$, $\Psi_2$ and $\Psi_3$ as given by \eqref{Psi123}.
 
We have the following two cases depending on the value of $\gamma\in(0,1)$.
\begin{itemize}
\item[(a)] $0< \gamma< \frac{1}{2}$ 
\begin{equation}
\label{case1a}
v_R(z)=\hat{P}^{\infty}(z)(\hat{P}^{0}(z))^{-1}=I+\dfrac{v_R^1(z)}{ k^{\frac{1}{2}+\gamma}}+\order{k^{-1}},
\end{equation}
where
\be
\label{case1b}
v_R^1(z)=-\dfrac{\gamma e(z)^2}{2w(z)\beta_{21}}\begin{pmatrix}0&1\\0&0\end{pmatrix}
\ee
 \item[(b)] $ \frac{1}{2}\leq \gamma<1$

\be
\label{case3a}
\begin{split}
v_R(z)=\hat{P}^{\infty}(z)(\hat{P}^{0}(z))^{-1}&=I+\dfrac{v_R^{(1)}(z)}{ k^{\frac{3}{2}-\gamma}}+\dfrac{v_R^{(2)}(z)}{ k}+\dfrac{v_R^{(3)}(z)}{ k^{\frac{1}{2}+\gamma}}+\order{k^{\gamma-\frac{5}{2}}}.
\end{split}
\ee
where
\be
\label{case3b}
\begin{split}
v_R^{(1)}(z)&=\left(\dfrac{ c^2 }{\beta_{21}}\dfrac{w(z) e(z)^2 }{(z-z_0)^2 }+\dfrac{c(1-\gamma)}{2w(z)^2 (z-z_0)}+\dfrac{(1-\gamma)}{2\beta_{12}w(z)^3e(z)^2}\right)\begin{pmatrix}0&0\\1&0\end{pmatrix}\\
v_R^{(2)}(z)&=\left(\dfrac{\gamma c}{2\beta_{21}}\dfrac{e(z)^2}{w(z)(z-z_0)}+\dfrac{\gamma(1-\gamma)}{4 w(z)^2}\right)\sigma_3\\
v_R^{(3)}(z)&=-\dfrac{\gamma e(z)^2}{2w(z)\beta_{21}}\begin{pmatrix}0&1\\0&0\end{pmatrix},
\end{split}
\ee
\end{itemize}
By the standard theory of small norm Riemann--Hilbert problems one has a similar  expansion for $R(z)$ in the large $k$ limit.
\subsubsection*{Case (a): $0< \gamma<\frac{1}{2}$}
\begin{equation}
\label{Rexpansion1}
R(z)=I+\dfrac{R^{(1)}}{k^{\frac{1}{2}+\gamma}}+\order{k^{-1}}
\end{equation}
Compatibility of \eqref{RHR}, \eqref{case1a} and \eqref{Rexpansion1}  and the jump condition $R_+=R_-\hat{v}_R$ on $\partial  \mathcal{U}_{z_0}$ gives the following relations.
\be
\label{R1a}
R^{(1)}_+(z)=R^{(1)}_-(z)+v^{(1)}_R(z),\quad z\in \partial  \mathcal{U}_{z_0}
\ee
with $v^{(1)}_R(z)$ defined in \eqref{case1b}.
In addition  $R^{(1)}(z)$ is analytic in $\mathbb{C}\backslash \partial\mathcal{U}_{z_0}$ and $R^{(1)}(\infty)=0$. The unique function that satisfies those conditions is given by
\be
R^{(1)}(z)=\dfrac{1}{2\pi i }\oint\dfrac{v^{(1)}_R(\xi)}{\xi-z}d\xi=\left\{
\begin{array}{ll}
\dfrac{1}{z-z_0}
\dfrac{\gamma e(z_0)^2}{2w'(z_0)\beta_{21}}\begin{pmatrix}0&1\\0&0\end{pmatrix}
& z\in\mathbb{C}\,\backslash\, \mathcal{U}_{z_0}\\
&\\
v^{(1)}_R(z)+\dfrac{1}{z-z_0}
\dfrac{\gamma e(z_0)^2}{2w'(z_0)\beta_{21}}\begin{pmatrix}0&1\\0&0\end{pmatrix}
& z\in \mathcal{U}_{z_0},
\end{array}\right.
\ee
where the integral is taken along $\partial {\cal U}_{z_0}$.
Using the definition of   $\beta_{21}$ , $w'(z)$ and $e(z)$  given in \eqref{beta}, \eqref{exp_w} and \eqref{def_e} respectively one has

\be
R^{(1)}(z)=\left\{
\begin{array}{ll}
\dfrac{1}{z-z_0}\dfrac{\gamma z_0^2}{c}
\begin{pmatrix}0&1\\0&0\end{pmatrix}
& z\in\mathbb{C}\,\backslash\, \mathcal{U}_{z_0}\\
&\\
v^{(1)}_R(z)+\dfrac{1}{z-z_0}\dfrac{\gamma z_0^2}{c}\begin{pmatrix}0&1\\0&0\end{pmatrix}
& z\in \mathcal{U}_{z_0},
\end{array}\right.
\ee
with the constant $c$ defined in \eqref{def_c}.

\subsubsection*{Case (b): $\frac{1}{2}\leq \gamma<1$}
\begin{equation}
\label{Rexpansion3}
R(z)=I+\dfrac{R^{(1)}}{ k^{\frac{3}{2}-\gamma}}+\dfrac{R^{(2)}}{ k}+\dfrac{R^{(3)}}{ k^{\frac{1}{2}+\gamma}}+\order{k^{\gamma-\frac{5}{2}}}
\end{equation}
Compatibility of \eqref{RHR}, \eqref{case3a} and \eqref{Rexpansion3}  and the jump condition $R_+=R_-\hat{v}_R$ on $\partial  \mathcal{U}_{z_0}$ gives the following relations.
\be
\label{R1b}
R^{(i)}_+(z)=R^{(i)}_-(z)+v^{(i)}_R(z),\quad z\in \partial  \mathcal{U}_{z_0}
\ee
with $v^{(i)}_R(z)$ defined in \eqref{case3b}.
In addition  $R^{(i)}(z)$ is analytic in $\mathbb{C}\backslash\partial  \mathcal{U}_{z_0}$ and $R^{(i)}(\infty)=0$.  

Since $v^{(1)}_R(z)$
has a third order pole at $z=z_0$,  the unique function that satisfies those conditions is given by
\be
R^{(1)}(z)=-\dfrac{\resi[\lambda=z_0] v^{(1)}_R(\lambda)}{z-z_0}-\dfrac{\resi[\lambda=z_0](\lambda-z_0) v^{(1)}_R(\lambda)}{(z-z_0)^2}-\dfrac{\resi[\lambda=z_0](\lambda-z_0)^2 v^{(1)}_R(\lambda)}{(z-z_0)^3}=\begin{pmatrix}0&0\\ \ast&0\end{pmatrix}\quad z\in\mathbb{C}\,\backslash\, \mathcal{U}_{z_0}.
\ee
Given the structure of the matrix $\hat{P}^0(z)$, the matrix  $R^{(1)}(z)$ does not give any relevant contribution to the orthogonal polynomials $\pi_k(z)$.

Regarding $R^{(2)}(z)$ one has
\be
\begin{split}
R^{(2)}(z)&=-\dfrac{\resi[\lambda=z_0] v^{(2)}_R(\lambda)}{z-z_0}-\dfrac{\resi[\lambda=z_0](\lambda-z_0) v^{(2)}_R(\lambda)}{(z-z_0)^2}\\
&=-\dfrac{1}{z-z_0} \dfrac{c\gamma e(z_0)^2}{2\beta_{21}w'(z_0)}\left(\dfrac{2e'(z_0)}{e(z_0)}-\dfrac{w''(z_0)}{2w'(z_0)}+\dfrac{1}{z-z_0}\right)\sigma_3\\
&+\dfrac{1}{z-z_0} \frac{\gamma(1-\gamma)}{4w'(z_0)^2}\left(\dfrac{w''(z_0)}{w'(z_0)}-\dfrac{1}{z-z_0}\right)\sigma_3\quad z\in\mathbb{C}\,\backslash\, \mathcal{U}_{z_0},
\end{split}
\ee
so that, using  \eqref{beta}, \eqref{exp_w}, \eqref{def_c} and \eqref{def_e}  one obtains
\be
R^{(2)}(z)=\left\{
\begin{array}{ll}
-\dfrac{\gamma z_0^2}{z-z_0} \left(\dfrac{\gamma (4z_0-1)}{3z_0(1-z_0)}+\dfrac{2-\gamma}{3z_0}+\dfrac{3-\gamma}{2(z-z_0)}\right)\sigma_3,& z\in\mathbb{C}\,\backslash\, \mathcal{U}_{z_0}\\
&\\
v^{(2)}_R(z)-\dfrac{\gamma z_0^2}{z-z_0} \left(\dfrac{\gamma (4z_0-1)}{3z_0(1-z_0)}+\dfrac{2-\gamma}{3z_0}+\dfrac{3-\gamma}{2(z-z_0)}\right)\sigma_3,& z\in \mathcal{U}_{z_0}.
\end{array}
\right.
\ee
In a similar way for $R^{(3)}(z)$ we obtain
\be
R^{(3)}(z)=-\dfrac{\resi[\lambda=z_0] v^{(3)}_R(\lambda)}{z-z_0}=\dfrac{1}{z-z_0}
\dfrac{\gamma e(z_0)^2}{2w'(z_0)\beta_{21}}\begin{pmatrix}0&1\\0&0\end{pmatrix}\quad z\in\mathbb{C}\,\backslash\, \mathcal{U}_{z_0},
\ee
so that 
\be
R^{(3)}(z)=\left\{
\begin{array}{ll}
\dfrac{1}{z-z_0}\dfrac{\gamma z_0^2}{c}
\begin{pmatrix}0&1\\0&0\end{pmatrix}
& z\in\mathbb{C}\,\backslash\, \mathcal{U}_{z_0}\\
&\\
v^{(3)}_R(z)+\dfrac{1}{z-z_0}\dfrac{\gamma z_0^2}{c}\begin{pmatrix}0&1\\0&0\end{pmatrix}
& z\in \mathcal{U}_{z_0},
\end{array}\right.
\ee
with the constant $c$ defined in \eqref{def_c}.

\subsubsection{Proof of Theorem~\ref{theorem3}: asymptotics for $p_n(\lb)$ for $0<z_0<1$}
In order to obtain the asymptotic  expansion of the polynomials $p_n(\lb)$ for $n\to\infty$,  $NT=n-l$ and $0<z_0<1$,  we first derive the asymptotic expansion of the reduced polynomials  $\pi_k(z)$ as $k\to\infty$.

Using the relation \eqref{pik_T} and the relation \eqref{def_R_post} one obtains
\begin{equation}
\label{exp_post_poly}
\poly_k(z) =
e^{kg(z)}\left(1-\frac{1}{z}\right)^{\frac{\gamma}{2}}\left\{
\begin{array}{ll}
\displaystyle \left[R(z)\hat{P}^{\infty}(z)\right]_{11} & z \in (\Omega_{\infty}\cup \Omega_{0}\cup \Omega_{3})\setminus \mathcal{U}_{z_0}\\
&\\
\displaystyle \left[R(z)\hat{P}^{\infty}(z)\begin{pmatrix}1& 0 \\ e^{k\phi(z)} & 1 \end{pmatrix}\right]_{11} & z \in \Omega_{1}\setminus \mathcal{U}_{z_0}\\
&\\
\displaystyle \left[R(z)\hat{P}^{\infty}(z)\begin{pmatrix}1& 0 \\ -e^{-k\phi(z)} & 1 \end{pmatrix}\right]_{11} & z \in \Omega_{2}\setminus \mathcal{U}_{z_0}\\
&\\
\displaystyle \left[R(z)\hat{P}^{0}(z)\right]_{11} & z \in (\Omega_{\infty}\cup \Omega_{0}\cup \Omega_{3})\cap \mathcal{U}_{z_0}\\
&\\
\displaystyle \left[R(z)\hat{P}^{0}(z)\begin{pmatrix}1& 0 \\ e^{k\phi(z)} & 1 \end{pmatrix}\right]_{11} & z \in \Omega_{1}\cap \mathcal{U}_{z_0}\\
&\\
\displaystyle \left[R(z)\hat{P}^{0}(z)\begin{pmatrix}1& 0 \\ -e^{-k\phi(z)} & 1 \end{pmatrix}\right]_{11} & z \in \Omega_{2}\cap \mathcal{U}_{z_0}\ .
\end{array}
\right.
\end{equation}

\subsubsection*{The region $(\Omega_{\infty}\cup \Omega_{3})\setminus \mathcal{U}_{z_0}$}
\be
\poly_{k}(z)=e^{kg(z)}\left(1-\frac{1}{z}\right)^{\frac{\gamma}{2}}\left[\frac{(z-z_0)^2}{z(z-1)}\right]^{\frac{\gamma}{2}}\left(1+\order{\frac{1}{k}}\right)=e^{kg(z)}\left(\frac{z-z_0}{z}\right)^{\gamma}\left(1+\order{\frac{1}{k}}\right).
\ee

\subsubsection*{The region $\Omega_{0}\setminus \mathcal{U}_{z_0}$}
\be
\poly_{k}(z)=\frac{1}{k^{\frac{1}{2}+\gamma}}e^{kg(z)}
\left(
\dfrac{\gamma z_0^2}{c}\frac{(z-1)^{\gamma}}{(z-z_0)^{\gamma+1}}+\order{\frac{1}{k}}\right).
\ee
with $c$ defined in \eqref{def_c}.

\subsubsection*{The region $\Omega_{1}\setminus \mathcal{U}_{z_0}$}
\be
\begin{split}
\poly_{k}(z)&=e^{kg(z)}\left(\frac{z-z_0}{z}\right)^{\gamma}
\left(e^{k\phi(z)}-\dfrac{1}{z-z_0}
\dfrac{\gamma z_0^2}{c}
\left[\frac{(z-z_0)^2}{z(z-1)}\right]^{-\gamma}\frac{1}{k^{\frac{1}{2}+\gamma}}\right.\\
&+\left\{
\begin{array}{ll}
\left.\order{\dfrac{1}{k}}\right),& 0<\gamma<\frac{1}{2}\\
&\\
\left.-\dfrac{e^{k\phi(z)}\gamma z_0^2}{z-z_0} \dfrac{1}{k}\left(\dfrac{\gamma (4z_0-1)}{3z_0(1-z_0)}+\dfrac{2-\gamma}{3z_0}+\dfrac{3-\gamma}{2(z-z_0)}\right)
+\order{\dfrac{1}{k^{\frac{5}{2}-\gamma}}}\right),&\frac{1}{2}\leq \gamma<1
\end{array}
\right.
\end{split}
\ee
with $c$ defined in \eqref{def_c} and where we observe that $\Re\phi(z)\leq 0 $ in $\Omega_1$.  In a similar way we can obtain the expansion in the region $\Omega_{2}\setminus \mathcal{U}_{z_0}$.
%
%

\subsubsection*{The region $\Omega_{2}\setminus \mathcal{U}_{z_0}$}

\be
\begin{split}
\poly_{k}(z)&=e^{kg(z)}\left(\frac{z-z_0}{z}\right)^{\gamma}
\left(1-\dfrac{e^{-k\phi(z)}}{k^{\frac{1}{2}+\gamma}}
\dfrac{\gamma z_0^2}{c}
\left[\frac{(z-z_0)^2}{z(z-1)}\right]^{-\gamma}\frac{1}{z-z_0}\right.\\
&+\left\{
\begin{array}{ll}
\left.\order{\dfrac{1}{k}}\right),& 0<\gamma<\frac{1}{2}\\
&\\
\left.-\dfrac{\gamma z_0^2}{z-z_0}\dfrac{1}{k} \left(\dfrac{\gamma (4z_0-1)}{3z_0(1-z_0)}+\dfrac{2-\gamma}{3z_0}+\dfrac{3-\gamma}{2(z-z_0)}\right)
+\order{\dfrac{1}{k^{\frac{5}{2}-\gamma}}}\right),&\frac{1}{2}\leq \gamma<1
\end{array}
\right.
\end{split}
\ee
where now we observe that $\Re\phi(z)\geq 0$ in $\Omega_2$.
\subsubsection*{The region $ \mathcal{U}_{z_0}$}
Using the relations \eqref{exp_post_poly}, \eqref{model_problem}, \eqref{rel1}  and \eqref{rel2} one obtains
\[
\pi_k(z)=e^{kg(z)}\left(\frac{z-z_0}{z}\right)^{\gamma}\dfrac{e^{-k\phi(z)/2}}{(\sqrt{k}w(z))^{\gamma}}\left( 
{\cal U}(-\gamma-\frac{1}{2}; \sqrt{2k}w(z))+\order{\dfrac{1}{k^{\frac{1}{2}} }}\right),\quad z\in \mathcal{U}_{z_0}\cap \exte(\mathcal{C})
\]
where $\U$ is the parabolic cylinder function that solves equation \eqref{parabolic}. For  $z\in \mathcal{U}_{z_0}\cap \inte(\mathcal{C})$
we have 
\[
\pi_k(z)=e^{kg(z)}\left(\frac{z-z_0}{z}\right)^{\gamma}\dfrac{e^{k\phi(z)/2}}{(\sqrt{k}w(z))^{\gamma}}\left( 
{\cal U}(-\gamma-\frac{1}{2}; \sqrt{2k}w(z))+\order{\dfrac{1}{k^{\frac{1}{2}} }}\right),\quad z\in \mathcal{U}_{z_0}\cap \inte(\mathcal{C})
\]

In order to prove Theorem~\ref{theorem2} it is sufficient to use the above expansions,   make the change of coordinates $\lb^s=-t(z-1)$ and use  for $n=ks+l$ the relation
\[
p_n(\lb)=(-t)^k\lb^l\pi_k(1-\lb^s/t).
\]
\hfill {\bf Q.E.D.}
%
%
%
%
%
%
%
%

\begin{proposition}
\label{propo_zeros_post}
The support of the counting measure of the zeros of the polynomials $\pi_k(z)$ for $0<t<t_c$  outside an arbitrary small disk ${\cal U}_{z_0}$ surrounding the point $z=z_0$  tends uniformly to the curve $\mathcal{C}$ defined in \eqref{Gamma0}.
The zeros are within a distance  $o(1/k)$ from the curve defined by
\be
\label{phi_deform_post}
\Re\phi(z)=-\left(\frac{1}{2}+\gamma \right)\frac{\log(k)}{k}+\dfrac{1}{k}\log\left(\dfrac{\gamma z_0^2}{c}\left|\frac{(z-z_0)^2}{z(z-1)}\right|^{-\gamma}\right)
\ee
where the function $\phi(z)$ has been defined in \eqref{def_phi}.
Such curves tends to $\mathcal{C}$ at a rate  $\order{\log k /k}$.  The normalised counting measure of the zeros of $\pi_k(z)$ converges  to the probability measure $\nu$ defined in \eqref{dnu}.
\end{proposition}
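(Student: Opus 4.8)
The plan is to follow verbatim the strategy of Proposition~\ref{propo_zeros_pre}: first read off the location of the zeros of $\pi_k(z)$ from the region-by-region asymptotics established in the proof of Theorem~\ref{theorem3}, and then identify the weak-$\ast$ limit of the normalised counting measure from the strong asymptotics valid on $\exte(\mathcal{C})$. Here $\mathcal{C}=\mathcal{C}_{z_0}$ is a single Jordan curve through $z=z_0$ encircling $z=0$, so it equals the boundary of its polynomial convex hull.

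First I would discard the regions where no zeros can accumulate. On compact subsets of $(\Omega_{\infty}\cup\Omega_{3})\setminus\mathcal{U}_{z_0}$ one has $\pi_k(z)=e^{kg(z)}\left(\frac{z-z_0}{z}\right)^{\gamma}\bigl(1+\order{1/k}\bigr)$, which is nonvanishing for $k$ large since neither $z=0$ nor $z=z_0$ lies there; on $\Omega_0\setminus\mathcal{U}_{z_0}$ the leading term is $k^{-\frac12-\gamma}e^{kg(z)}\frac{\gamma z_0^2}{c}\frac{(z-1)^{\gamma}}{(z-z_0)^{\gamma+1}}$, again nonvanishing for $k$ large. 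On the lens $\Omega_1\cup\Omega_2$ around $\mathcal{C}$, inserting the explicit $g$-function and using that $e^{kg(z)}e^{-k\phi(z)}$ continues analytically across $\mathcal{C}$, the expansions in $\Omega_1\setminus\mathcal{U}_{z_0}$ and $\Omega_2\setminus\mathcal{U}_{z_0}$ merge into
\[
\pi_k(z)=z^k\left(\frac{z-z_0}{z}\right)^{\gamma}\left[1-\frac{e^{-k\phi(z)}}{k^{\frac12+\gamma}}\,\frac{\gamma z_0^2}{c}\left[\frac{(z-z_0)^2}{z(z-1)}\right]^{-\gamma}\frac{1}{z-z_0}+\order{\frac1k}\right].
\]
For $0<\gamma<\frac12$ this is immediate from $R(z)=I+k^{-\frac12-\gamma}R^{(1)}(z)+\order{1/k}$; for $\frac12\le\gamma<1$ one must in addition check that the extra terms $k^{-(3/2-\gamma)}R^{(1)}(z)$ and $k^{-1}R^{(2)}(z)$ in the expansion of $R(z)$ feed, through the structure of $\hat P^{0}(z)$ and $\hat P^{\infty}(z)$, only a diagonal factor $1+\order{1/k}$ into the relevant $(1,1)$-entry, so that the decisive correction is always the one of order $k^{-\frac12-\gamma}$ coming from $R^{(3)}(z)$.

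Next I would locate the zeros. Since $\Omega_2\subset\{\Re\phi\ge0\}$, the factor $e^{-k\phi(z)}$ stays bounded there and the bracket is $1+o(1)$, so $\pi_k$ has no zeros in $\Omega_2$; inside $\Omega_1\subset\{\Re\phi\le0\}$ the bracket vanishes exactly where
\[
\frac{e^{-k\phi(z)}}{k^{\frac12+\gamma}}\,\frac{\gamma z_0^2}{c}\left[\frac{(z-z_0)^2}{z(z-1)}\right]^{-\gamma}\frac{1}{z-z_0}=1+\order{1/k},
\]
and taking modulus and logarithm turns this, up to an error of order $k^{-2}$, into the level curve \eqref{phi_deform_post}; in particular the zeros lie within $o(1/k)$ of that curve. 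Because $\Re\phi\equiv0$ on $\mathcal{C}$ with $\nabla\Re\phi\ne0$ there, the $\order{\log k/k}$ right-hand side of \eqref{phi_deform_post} forces both that curve and the zero set to converge uniformly to $\mathcal{C}$ at the rate $\order{\log k/k}$ on complements of $\mathcal{U}_{z_0}$.

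Finally, for the limiting measure I would argue exactly as in Proposition~\ref{propo_zeros_pre}: the exterior expansion gives $\frac1k\log\pi_k(z)\to g(z)=\int_{\mathcal{C}}\log(z-\xi)\,d\nu(\xi)$ uniformly on compact subsets of $\exte(\mathcal{C})$, and since $\mathcal{C}$ is the boundary of its polynomial convex hull, \cite{MhaskarSaff}, Theorem~2.3, together with \cite{SaffTotik}, Chapter~3, yield that the normalised zero counting measure of $\pi_k$ converges weak-$\ast$ to $\nu$; the at most $o(k)$ zeros possibly trapped in $\mathcal{U}_{z_0}$ (controlled via the local parabolic cylinder parametrix near $z_0$) do not affect this limit. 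The main obstacle is the bookkeeping in the regime $\frac12\le\gamma<1$: unlike the pre-critical case, $\hat P^{\infty}$ carries the nilpotent correction $C$ of order $k^{\gamma-\frac12}$ and $R(z)$ is a sum of pieces of several different orders and of both nilpotent types as well as $\sigma_3$-type; one must verify carefully that, after multiplying out $R(z)\hat P^{\infty}(z)$ respectively $R(z)\hat P^{0}(z)$ and the triangular lens factors, the only term competing with the leading $1$ — and hence determining the zeros — is the one displayed above. Everything else is a direct transcription of the pre-critical argument.
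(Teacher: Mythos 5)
Your proposal reproduces the paper's argument: discard the regions where the leading term of $\pi_k(z)$ does not vanish, locate the zeros from the balance between $1$ and the order-$k^{-\frac12-\gamma}$ correction in the lens around $\mathcal{C}$, and deduce the limiting measure from the exterior asymptotics and the polynomial-convex-hull criterion exactly as in Proposition~\ref{propo_zeros_pre}. (There is a sign slip where you write $k^{-\frac12+\gamma}$ for the correction coming from $R^{(3)}$; it should read $k^{-\frac12-\gamma}$, but this does not affect the argument, and your explicit check that for $\frac12\le\gamma<1$ the extra diagonal $O(1/k)$ pieces only shift the level curve by $O(1/k^2)$ is precisely the point the paper leaves implicit.)
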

\begin{proof}
 As in the proof of Proposition~\ref{propo_zeros_pre}, it is clear from the above expansions of the polynomials $\pi_k(z)$ that there are no zeros in the region $(\Omega_{\infty}\cup \Omega_{3}\cup\Omega_0)\setminus \mathcal{U}_{z_0}$ for sufficiently large $k$. Then we observe that  the asymptotic expansion  of the polynomials $\pi_k(z)$ in the regions $\Omega_1\cup\Omega_{2}\setminus \mathcal{U}_{z_0}$ takes the form
\be
\begin{split}
\poly_{k}(z)&=z^{k}\left(\frac{z-z_0}{z}\right)^{\gamma}
\left(1-\dfrac{e^{-k\phi(z)}}{k^{\frac{1}{2}+\gamma}}
\dfrac{\gamma z_0^2}{c}
\left[\frac{(z-z_0)^2}{z(z-1)}\right]^{-\gamma}\frac{1}{z-z_0}\right.\\
&+\left\{
\begin{array}{ll}
\left.\order{\dfrac{1}{k}}\right),& 0<\gamma<\frac{1}{2}\\
&\\
\left.-\dfrac{\gamma z_0^2}{z-z_0}\dfrac{1}{k} \left(\dfrac{\gamma (4z_0-1)}{3z_0(1-z_0)}+\dfrac{2-\gamma}{3z_0}+\dfrac{3-\gamma}{2(z-z_0)}\right)
+\order{\dfrac{1}{k^{\frac{5}{2}-\gamma}}}\right),&\frac{1}{2}\leq \gamma<1
\end{array}
\right.
\end{split}
\ee
so that we conclude that the zeros of $\pi_k(z)$ occur in the region where 
\[
1-\dfrac{e^{-k\phi(z)}}{k^{\frac{1}{2}+\gamma}}
\dfrac{\gamma z_0^2}{c}
\left[\frac{(z-z_0)^2}{z(z-1)}\right]^{-\gamma}\frac{1}{z-z_0}
\]
is equal to zero.

Since $\Omega_2 \subset \{\Re(\phi) \geq 0\}$ and $\Omega_1 \subset \{\Re(\phi) \leq 0\}$, it follows that the zeros of $\pi_k(z)$ may lie only in the region $\Omega_1$  and such that 
  $\Re\phi(z)=\order{\log k/ k}$. Namely the zeros of the polynomials $\pi_k(z)$ lie on the   curve given by \eqref{phi_deform_post}  with an error of order  $\order{1/k^2}$. Such curves  converges to the curve $\mathcal{C}$ defined \eqref{Gamma0} at a rate 
$\order{\log k/k}$ (see Figure~\ref{zero_pre}).
\begin{figure}[H]
\centering
\includegraphics[scale=0.28]{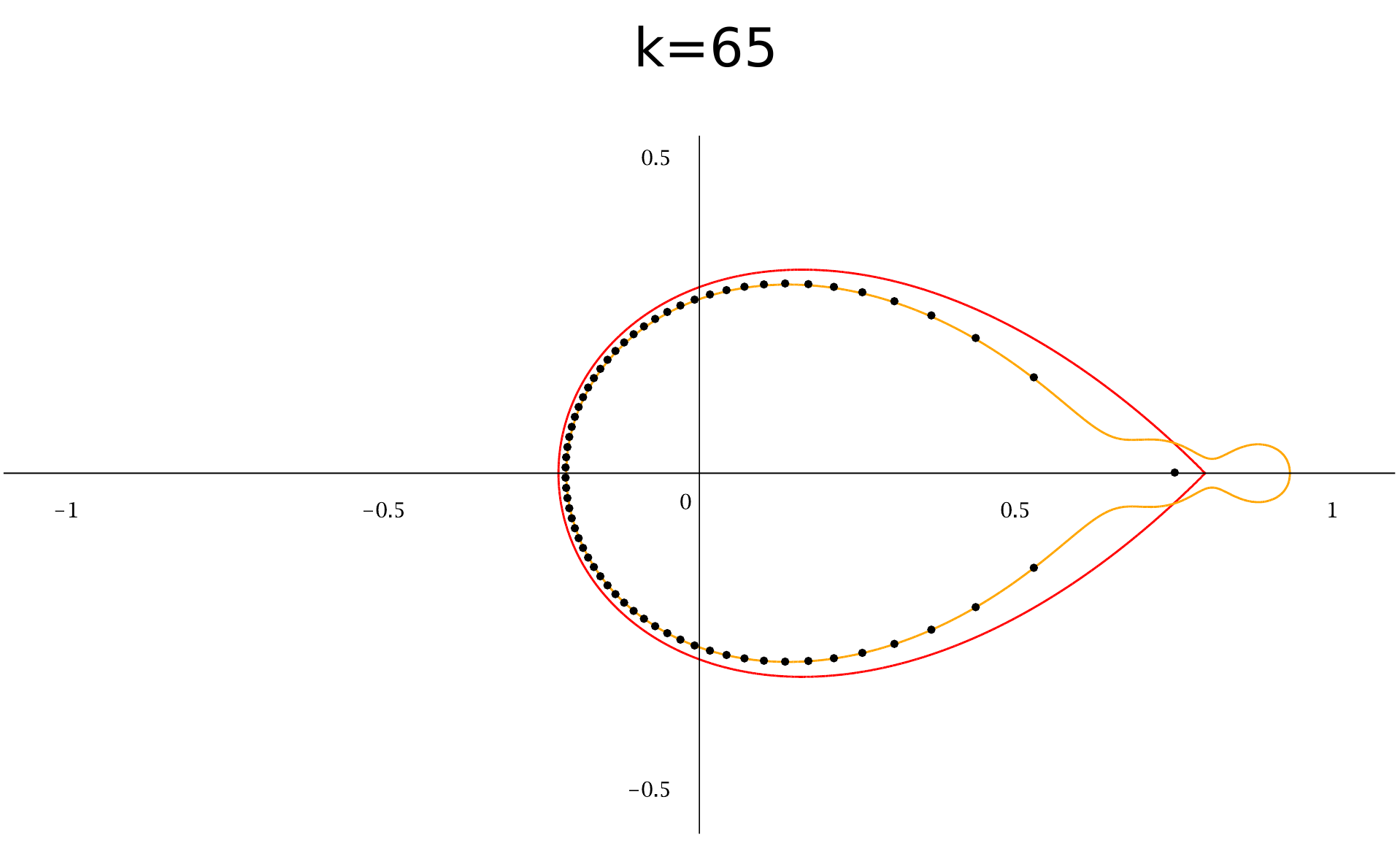}\includegraphics[scale=0.28]{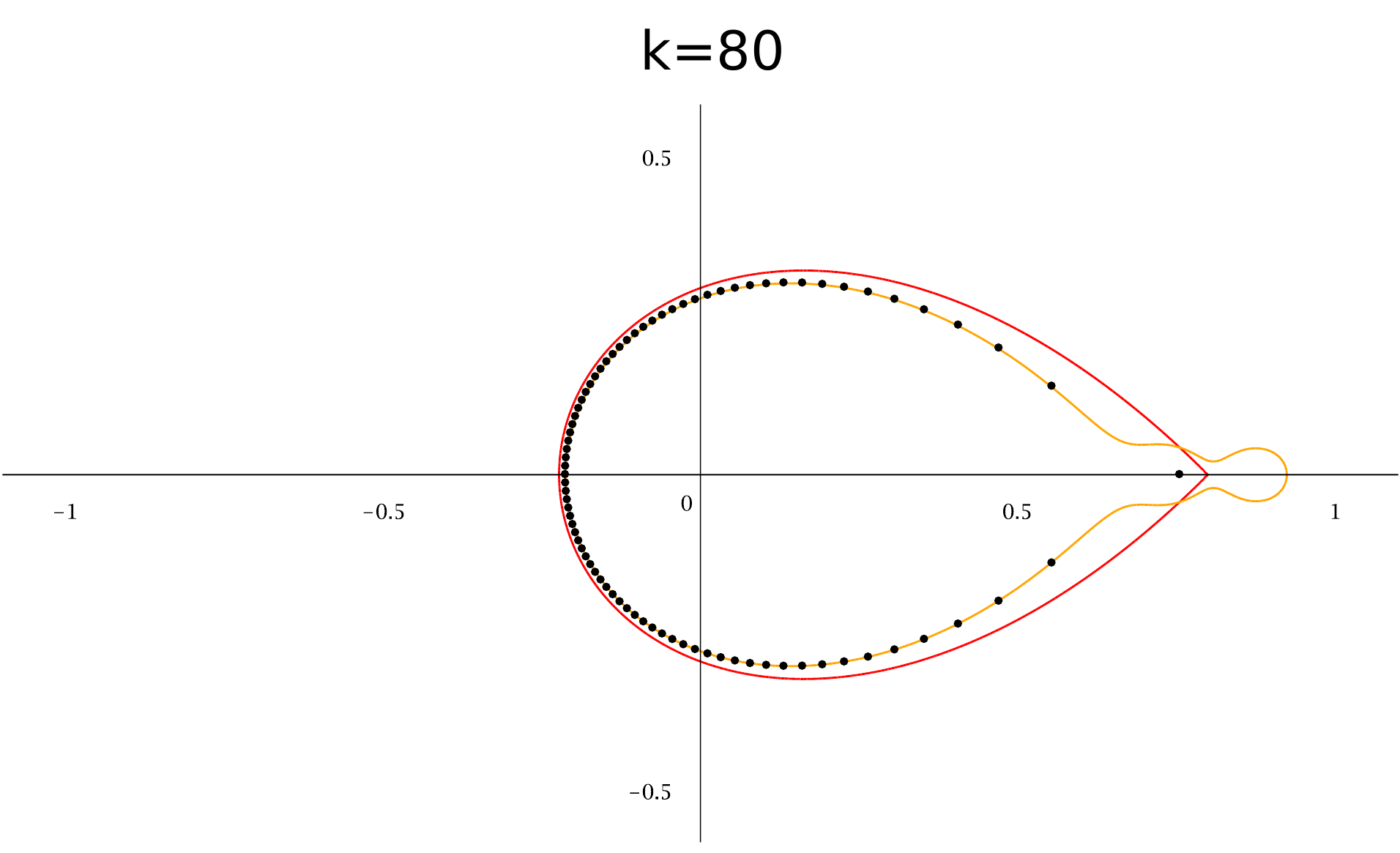}\includegraphics[scale=0.28, trim=0cm -.29cm 0cm 0cm]{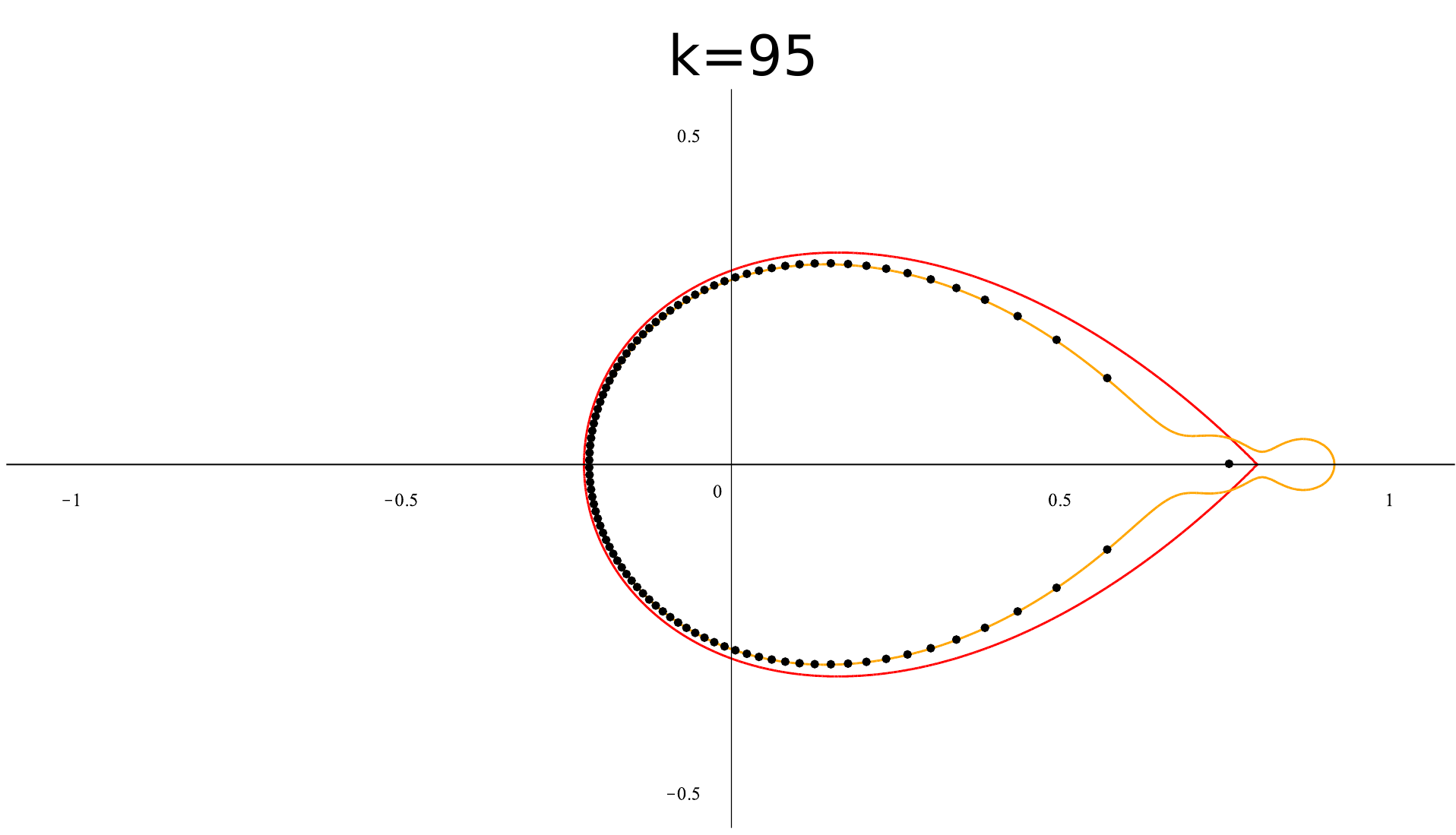}\\
\caption{The zeros of $\pi_k(z)$ for $s=3$, $l=0$, $t>t_c$  and $k=65, 80, 95$. The  red contour is $\mathcal{C}$ while the   yellow contour is the curve  \protect\eqref{phi_deform_post}.}
\label{zero_pre}
\end{figure}
The proof of the remaining points of Proposition~\ref{propo_zeros_post} follows the lines of the proof of Proposition~\ref{propo_zeros_pre}.
\end{proof}

\section{Proof of Theorem~\ref{theorem1}}
From the comments after the statement of Theorem~\ref{theorem2} and Theorem~\ref{theorem3} it is clear  from \eqref{phi_modified} and \eqref{phi_modified_post} 
that the zeros of the polynomials $p_n(\lb)$ accumulate in the limit $n=ks+l\to\infty$, with $s$ and $l$ given and $l=0,\dots,s-2$,  along the curve $\hat{\mathcal{C}}$ defined in \eqref{Gamma}.
In order to show that the measure $\hat{\nu}$ is the weak-star limit of the zero density $\nu_n$ we will show that 
\[
\lim_{n\to\infty}\dfrac{1}{n}\log p_n(\lb)=\int_{\hat{\mathcal{C}}}\log(\lb-\xi)d\nu(\xi),
\]
for $\lb$ in compact subsets of  the exterior of $\hat{\mathcal{C}}$ (namely the unbounded component of  $\mathbb{C}\setminus \hat{\mathcal{C}}$).
Indeed using the relation 
\[p_{n}(\lambda)=(-t)^{k}\lambda^{l}\pi_{k}(z(\lambda))\] 
one has
\be
\begin{aligned}
\lim_{n\to\infty}\frac{1}{n}\log p_{n}(\lambda)&= \lim_{k\to\infty}\frac{1}{ks+l}\log(\lambda^{l} (-t)^{k}\pi_{k}(z(\lambda)))=\frac{1}{s}\log(-t)+\frac{1}{s}\lim_{k\to\infty}\frac{1}{k}\log\pi_{k}(z(\lambda))\ .
\end{aligned}
\ee
Using Proposition~\ref{propo_zeros_pre} and the relation $d\nu(z(\lambda))=s\, d\hat\nu(\lambda)$ we get
\be
\begin{aligned}
\frac{1}{s}\log(-t)+\frac{1}{s}\lim_{k\to\infty}\frac{1}{k}\log \pi_{k}(z(\lambda))&=\frac{1}{s}\log(-t)+\frac{1}{s}\int_{\mathcal{C}}\log(z-\xi)d\nu(\xi) \\
&= \frac{1}{s}\log(-t)+\int_{\hat{\mathcal{C}}^{0}}\log\frac{\lambda^{s}-\sigma^{s}}{-t}d\hat{\nu}(\sigma)\ , \\
\end{aligned}
\ee
where $\hat{\mathcal{C}}^{j}$ ($j=0,\dots,s-1$) are the components of $\hat{\mathcal{C}}$ (as defined in the proof of Lemma~\ref{lemma1}). Since on each $\hat{\mathcal{C}}^{j}$ we have that $d\hat\nu$ is normalized to $\frac{1}{s}$ we obtain
\be
\begin{aligned}
\lim_{n\to\infty}\frac{1}{n}\log p_{n}(\lambda)&=\int_{\hat{\mathcal{C}}^{0}}\log(\lambda^{s}-\sigma^{s})d\hat{\nu}(\sigma)
=\sum_{j=0}^{s-1}\int_{\hat{\mathcal{C}}^{0}}\log(\lambda-\sigma\omega^{j})d\hat{\nu}(\sigma)\\
&=\sum_{j=0}^{s-1}\int_{\hat{\mathcal{C}}^{j}}\log(\lambda-\sigma)d\hat{\nu}(\sigma)=\int_{\hat{\mathcal{C}}}\log(\lambda-\sigma)d\hat{\nu}(\sigma)\ ,
\end{aligned}
\ee
with $\omega=e^{\frac{2\pi i}{s}}$ and where in the last steps we used the symmetry of $d\hat\nu$. Hence we have obtained the relation
\be
\lim_{n\to\infty}\frac{1}{n}\log p_{n}(\lambda) =\int_{\hat{\mathcal{C}}}\log(\lambda-\sigma)d\hat{\nu}(\sigma)\,,
\ee
uniformly for $\lb$ in compact subsets of unbounded component of  $\mathbb{C}\backslash \hat{\mathcal{C}}$. 
  Furthermore $\hat{\mathcal{C}}$ is the boundary of its polynomial convex hull. Then it follows  that the measure  $\hat{\nu}$  is the weak-star limit of the zero counting  measure $ \nu_n$ of the   polynomials $p_n(\lb)$ for $n=kd+l$, $l=0,\dots,s-2$ (see \cite{MhaskarSaff} and \cite{SaffTotik} Chapter 3).
The proof of Theorem~\ref{theorem1} is then completed.

\end{document}